\documentclass[aps,physrev,reprint,superscriptaddress]{revtex4-2}

\usepackage{amsmath,amssymb,amsfonts,mathtools,amsthm}
\usepackage{bm}
\usepackage{braket}
\usepackage{physics}
\usepackage{mathrsfs}
\usepackage{dsfont}

\usepackage{graphicx}
\usepackage{xcolor}

\usepackage{comment}
\usepackage[linesnumbered,ruled,vlined]{algorithm2e}
\SetKw{Break}{break}

\newcommand{\sq}{\mathrm{SQ}}
\newcommand{\qsp}{\mathrm{Q}_{\mathrm{sp}}}

\newcommand{\rk}[1]{\operatorname{rank}\left(#1\right)}

\newcommand{\mc}[1]{\mathcal{#1}}
\newcommand{\mb}[1]{\mathbb{#1}}

\newcommand{\supp}{\operatorname{Supp}}

\newcommand{\diag}{\mathsf{diag}}
\newcommand{\card}{\operatorname{card}}

\newcommand{\flo}[1]{\left\|#1\right\|_{\mathrm{F}}}
\newcommand{\ope}[1]{\left\|#1\right\|_{\mathrm{op}}}

\newcommand{\dtv}{d_\mathrm{TV}}

\newcommand{\pr}{\operatorname{Pr}}

\newcommand{\tO}{\widetilde{O}}

\newtheorem{theorem}{Theorem}

\newtheorem{lemma}[theorem]{Lemma}
\newtheorem{definition}[theorem]{Definition}

\newtheorem{remark}[theorem]{Remark}

\newcommand{\poly}{\operatorname{poly}}
\newcommand{\polylog}{\operatorname{polylog}}

\usepackage[colorlinks=true,linkcolor=blue,urlcolor=blue,citecolor=blue]{hyperref}

\begin{document}


\title{A Quantum-Inspired Dequantization Method for Diagonally Weighted Matrix Functions: Application to Learning with Optimized Random Features
}



\author{Natsuto Isogai}
\email{natsutoisogai@g.ecc.u-tokyo.ac.jp}
\affiliation{Department of Physics, Graduate School of Science, The University of Tokyo, Tokyo, Japan}
\author{Mio Murao}
\email{murao@phys.s.u-tokyo.ac.jp}
\affiliation{Department of Physics, Graduate School of Science, The University of Tokyo, Tokyo, Japan}
\author{Hayata Yamasaki}
\email{hayata.yamasaki@gmail.com}
\affiliation{Department of Computer Science, Graduate School of Information Science and Technology, The University of Tokyo, Tokyo, Japan}

\begin{abstract}
Quantum-inspired classical algorithms have dequantized several quantum machine learning routines by replacing quantum linear-algebra subroutines with classical counterparts.
However, the sampler based on quantum singular value transformation (QSVT) for learning with optimized random features is not covered by existing dequantization frameworks, because the  matrix to be inverted is not itself available through sampling access.
In this work, we develop a classical algorithm to address this type of quantum-advantage candidate.
Our method samples heavy indices, reduces the transformation to a small principal block, and outputs a sparse classical representation with operator-norm guarantees.
Applying this method dequantizes the sampler for optimized random features, giving a classical sampler with prescribed accuracy and polynomially related runtime.
These results show that the factorization underlying a quantum block encoding can itself provide sufficient classical structure even when sampling-and-query access to the composite matrix is unavailable.
\end{abstract}


\maketitle


\textit{Introduction.}---
Machine learning has been a central tool for extracting structure from high-dimensional data.
Quantum machine learning (QML) aims to further accelerate such learning tasks by using quantum computation for linear-algebraic subroutines~\cite{wittek2014,PhysRevLett.103.150502,10.1098/rspa.2017.0551,PhysRevA.99.052331}.
Quantum singular value transformation (QSVT) provides a general framework for efficiently implementing spectral transformations of matrices acting on high-dimensional vector spaces~\cite{gilyen2019quantum,Martyn2021grand}.
In recent years, several works have proposed settings in which quantum learners can achieve provable advantages over classical learners, under complexity-theoretic or oracle assumptions~\cite{servedio2004,Liu2021,gyurik2023establishinglearningseparationsclassical,gyurik2024exponentialseparationsclassicalquantum,Yamasaki2026,Molteni2026}.
These results motivate a careful examination of which proposed QML advantages survive under comparable data-access models.

Quantum-inspired algorithms address this question by translating quantum state-preparation and block-encoding assumptions into classical sampling-and-query oracles~\cite{10.1145/3313276.3316310,chia2018quantuminspiredsublinearclassicalalgorithms,chia_et_al:LIPIcs:2020:13391,jethwani_et_al:LIPIcs.MFCS.2020.53,Gilyen2022improvedquantum,chia2022sampling,doi:10.1137/1.9781611977912.86,le2025robust,isogai2026winninglotteryticketsneural}.
Beginning with Tang's dequantization~\cite{10.1145/3313276.3316310} of the quantum recommendation system~\cite{kerenidis_et_al:LIPIcs.ITCS.2017.49}, several quantum linear-algebra and QML algorithms have been shown to admit efficient classical simulations under suitable input-model assumptions.
A common ingredient is sampling-and-query access to the matrix on which the spectral transformation is performed.
Such access permits entry queries, norm queries, and $\ell_2$-sampling from rows and columns, and thus supports randomized low-rank matrix approximations.
When this access is available, broad classes of QSVT-based algorithms can be simulated classically with polynomially related complexity.

However, sampling-and-query access to the transformed matrix is not implied by the oracle assumptions of every quantum algorithm.
A quantum algorithm may construct the required block-encoding by composing separately accessible operators, whereas the corresponding classical oracles need not provide sampling-and-query access to their product~\cite{10.5555/3495724.3496871,10.5555/3618408.3620034}.

The quantum algorithm for learning with optimized random features~\cite{10.5555/3495724.3496871} provides a representative example.
Optimized random features use a data-dependent sampling distribution designed to reduce the number of features required for kernel methods~\cite{JMLR:v18:15-178,pmlr-v70-avron17a,shahrampour2019samplingrandomfeaturesempirical,pmlr-v97-li19k,liu2020random}.
The quantum sampler uses QSVT to implement the inverse
\begin{equation}
    \bm{\Sigma}_{\epsilon}^{-1} = \left( \epsilon\bm I + \sqrt{\hat{\bm q}^{(\rho)}}\,\bm k\, \sqrt{\hat{\bm q}^{(\rho)}} \right)^{-1}, \label{eq:intro-inverse}
\end{equation}
where $\bm{I}$ denotes the identity matrix, $\hat{\bm q}^{(\rho)}$ is an operator representing an empirical probability distribution by its diagonal entries, and $\bm k$ is a kernel operator.
The input model supplies state preparation for $\sqrt{\hat{\bm q}^{(\rho)}}$ and query access to the spectral weights defining $\bm k$, from which $\bm k$ can be block encoded.
It does not supply sampling-and-query access to
$\sqrt{\hat{\bm q}^{(\rho)}}\bm k\sqrt{\hat{\bm q}^{(\rho)}}$.
In the discretized formulation of~\cite{10.5555/3495724.3496871}, this inverse acts on a space with $G^D$ basis states.
Under the stated oracle assumptions, the quantum sampler produces one optimized feature in time polynomial in the input dimension $D$ and the relevant precision parameters.
By contrast, a direct classical procedure that explicitly constructs and inverts the $G^D \times G^D$ matrix has cost polynomial in $G^D$ and is therefore exponential in $D$.

Existing dequantization techniques relevant to this inverse rely on one of two principal mechanisms.
Sampling-based methods construct randomized sketches using sampling access to the matrices to be transformed, as in dequantizations of recommendation systems~\cite{10.1145/3313276.3316310,chia2018quantuminspiredsublinearclassicalalgorithms,chia_et_al:LIPIcs:2020:13391,jethwani_et_al:LIPIcs.MFCS.2020.53,Gilyen2022improvedquantum,chia2022sampling,doi:10.1137/1.9781611977912.86}.
Sparse-matrix methods instead assume sparse-query access and are classically efficient for low-degree polynomial transformations at constant additive precision, as in the guided local-Hamiltonian problem~\cite{gharibian2022dequantizing}.

The optimized random-feature sampler is not a direct instance of either regime.
The sampling-based results cannot be applied directly because the corresponding classical input supplies sampling-and-query access only to the diagonal factor $\sqrt{\hat{\bm q}^{(\rho)}}$, and not to $\bm{k}$ or to the composite matrix $\sqrt{\hat{\bm q}^{(\rho)}}\,\bm k\, \sqrt{\hat{\bm q}^{(\rho)}}$.
The sparse-access result also cannot be invoked, since sparse-query access to the matrix to be transformed is not provided, and the target inverse is not restricted to the low-degree, constant-precision regime.

Nevertheless, here we prove that the absence of both structures does not prohibit the dequantization.
To establish this, we develop a general dequantization method for matrix functions of the form
\begin{equation}
    f(\gamma\bm I+\bm{D}\bm{A}\bm{D}),
    \label{eq: general-function}
\end{equation}
where the nonnegative diagonal matrix $\bm{D}$ is supplied through sampling-and-query access, whereas the Hermitian matrix $\bm{A}$, which may be dense and full rank, is supplied only through entry queries.
For optimized random features, taking $\bm{D}=\sqrt{\hat{\bm q}^{(\rho)}}$, $\bm{A}=\bm k$, and $f(x)=x^{-1}$ gives a classical optimized-random-feature sampler whose runtime is polynomial in $D$.

Unlike sampling-based dequantization methods, our algorithm does not require sampling-and-query access to $\bm{D} \bm{A} \bm{D}$, but uses only samples from the diagonal factor $\bm{D}$ to identify a small set of heavy coordinates.
In contrast to sparse-matrix methods, it imposes no sparsity condition on $\bm{A}$.
Once these coordinates are identified, entry queries to $\bm{A}$ on the corresponding indices suffice, even when $\bm{A}$ is dense and full rank.
The resulting algorithm approximates Eq.~\eqref{eq: general-function} and remains applicable when entry queries to $\bm{A}$ are available only through probabilistic estimates.

\textit{Problem setting.}---
Let $\bm{D}=\operatorname{diag}(d_1,\ldots,d_n)\geq0$ and let $\bm{A}\in\mathbb C^{n\times n}$ be Hermitian.
Sampling-and-query access $\mathrm{SQ}(\bm{D})$ allows one to query $d_i$, obtain the Frobenius norm $\|\bm{D}\|_{\mathrm F}$, and sample an index $i$ with probability $d_i^2/\|\bm{D}\|_{\mathrm F}^2$.
Entry-query access $\mathrm Q(\bm{A})$ returns $\bm{A}(i,j)$ for a requested pair $(i,j)$, but does not provide row or column sampling~\cite{10.1145/3313276.3316310,chia2018quantuminspiredsublinearclassicalalgorithms,chia_et_al:LIPIcs:2020:13391,jethwani_et_al:LIPIcs.MFCS.2020.53,Gilyen2022improvedquantum,chia2022sampling,doi:10.1137/1.9781611977912.86,le2025robust,gharibian2022dequantizing}.
These oracles make every entry of the weighted matrix directly accessible through
\begin{equation}
    (\bm{D}\bm{A}\bm{D})(i,j)=d_i\bm{A}(i,j)d_j,
    \label{eq:weighted-entry}
\end{equation}
but they do not generally implement $\mathrm{SQ}(\bm{D}\bm{A}\bm{D})$.
We also use a noisy query oracle $\mathrm{Q}_{\epsilon_A, \delta_A}(\bm{A})$, which returns every requested entry within additive error $\epsilon_A$ with probability at least $1 - \delta_A$.
Let $\mathbf q(\bm{D})$, $\mathbf s(\bm{D})$, and $\mathbf n(\bm{D})$ denote the runtimes of one entry query, one sample, and one norm query to $\mathrm{SQ}(\bm{D})$, respectively, and define
\begin{equation}
    \mathbf{sq}(\bm{D}) \coloneq \max\{\mathbf{q}(\bm{D}), \mathbf{s}(\bm{D}), \mathbf{n}(\bm{D})\}. \label{eq:sq-cost}
\end{equation}
Similarly, $\mathbf q(\bm{A})$ and $\mathbf q_{\epsilon_A,\delta_A}(\bm{A})$ denote the runtimes of one exact and one noisy entry query to $\bm{A}$, respectively.
Thus, $\mathrm{SQ}$ and $\mathrm Q$ denote access oracles, whereas $\mathbf{sq}$ and $\mathbf q$ denote the corresponding access costs.

For a real number $\gamma > 0$, define
\begin{equation}
    \bm{M} \coloneq \gamma\bm I+\bm{D}\bm{A}\bm{D}.
    \label{eq:matrix-M}
\end{equation}
Our objective is to construct a classical representation of a matrix $\bm N$ that approximates $f(\bm M)$ in the operator norm $\|\cdot\|_{\mathrm{op}}$ induced by the Euclidean norm, without exact sampling access to $\bm{D}\bm{A}\bm{D}$.

\begin{figure}[t]
    \centering
    \includegraphics[width=\linewidth]{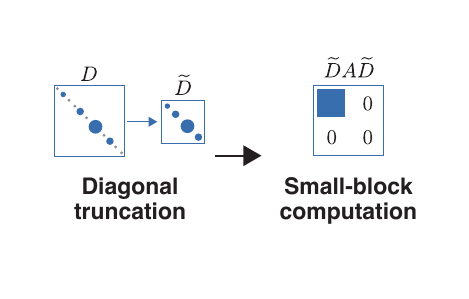}
    \caption{
    Diagonal truncation and small-block computation.
    Samples from $\mathrm{SQ}(\bm{D})$ identify the heavy coordinates and define the truncated matrix $\widetilde{\bm{D}}$.
    The matrix $\widetilde{\bm{D}}\bm{A}\widetilde{\bm{D}}$ is supported on the corresponding principal block, where the matrix function is computed exactly and efficiently.
    }
    \label{fig:diagonal-truncation}
\end{figure}

\textit{Diagonal-truncation method.}---
Our key scheme uses samples from $\mathrm{SQ}(\bm{D})$ to recover, with high probability, all coordinates whose diagonal entries exceed a chosen threshold.
Restricting $\bm{D}$ to these retained coordinates gives a truncated diagonal matrix $\widetilde{\bm{D}}$.
The matrix $\gamma \bm{I} + \widetilde{\bm{D}} \bm{A} \widetilde{\bm{D}}$ differs from $\gamma \bm{I}$ only on the principal block indexed by the retained coordinates.
Therefore, we query $\bm{A}$ only on this block and compute the matrix function there, while the remaining coordinates contribute $f(\gamma) \bm{I}$.
The construction is also illustrated in Fig.~\ref{fig:diagonal-truncation}.

We now formalize this construction and then state the resulting approximation guarantee.
Let $\epsilon_D>0$ be a truncation threshold and let $\delta_D\in(0,1)$ be the allowed failure probability of the truncation step.
Define the set of heavy coordinates by
\begin{equation}
    H_{\epsilon_D}=\{i\in[n]:d_i\geq\epsilon_D\}.
    \label{eq:heavy-set}
\end{equation}
Every $i\in H_{\epsilon_D}$ is sampled by $\mathrm{SQ}(\bm{D})$ with probability at least $\epsilon_D^2/\|\bm{D}\|_{\mathrm F}^2$.
Moreover,
\begin{equation}
    |H_{\epsilon_D}|
    \leq
    \frac{\|\bm{D}\|_{\mathrm F}^2}{\epsilon_D^2},
    \label{eq:heavy-set-size}
\end{equation}
because the squared diagonal entries sum to $\|\bm{D}\|_{\mathrm F}^2$.
It follows that
\begin{equation}
    R=
    \left\lceil
    \frac{\|\bm{D}\|_{\mathrm F}^2}{\epsilon_D^2}
    \log\left(
    \frac{\|\bm{D}\|_{\mathrm F}^2}
    {\epsilon_D^2\delta_D}
    \right)
    \right\rceil
    \label{eq:number-samples}
\end{equation}
independent samples recover all coordinates in $H_{\epsilon_D}$ with probability at least $1-\delta_D$.

Let $S$ be the recovered set, write $s=|S|$, and let $\widetilde{\bm{D}}$ retain $d_i$ on $S$ and vanish elsewhere.
On the successful event,
\begin{equation}
    \|\bm{D}-\widetilde{\bm{D}}\|_{\mathrm{op}}
    \leq\epsilon_D,
    \qquad
    s=\operatorname{rank}(\widetilde{\bm{D}})
    \leq
    \frac{\|\bm{D}\|_{\mathrm F}^2}{\epsilon_D^2}.
    \label{eq:truncation}
\end{equation}
The reduced dimension is therefore determined by the heavy-coordinate support of $\bm{D}$.
No low-rank or sparsity condition is imposed on $\bm{A}$.

Choose an ordering $S=\{i_1,\ldots,i_s\}$ and define the bijection $\pi(i_t)=t$.
The algorithm queries the upper-triangular entries of $\bm{A}$ on $S\times S$ using $\mathrm Q_{\epsilon_A,\delta_A}(\bm{A})$ and enforces Hermitian symmetry on the returned values.
It then constructs the $s\times s$ matrix
\begin{equation}
    \widetilde{\bm M}_S(t,u) =\gamma\delta_{t,u} +d_{i_t}\widetilde{\bm{A}}(i_t,i_u)d_{i_u}\label{eq:retained-block}
\end{equation}
and computes
\begin{equation}
    \bm B=f(\widetilde{\bm M}_S)\label{eq:block-function}
\end{equation}
by diagonalizing $\widetilde{\bm M}_S$ and applying $f$ to its eigenvalues.
The output matrix is represented by $S$, $\pi$, and $\bm B$ as
\begin{equation}
    \bm N(i,j)=
    \begin{cases}
        \bm B(\pi(i),\pi(j)),&i,j\in S,\\
        f(\gamma),&i=j\notin S,\\
        0,&\text{otherwise}.
    \end{cases}
    \label{eq:N-structure}
\end{equation}
Indeed, after permuting the coordinates in $S$ to the first $s$ positions,
\begin{equation}
    \gamma\bm I+ \widetilde{\bm{D}}\widetilde{\bm{A}}\widetilde{\bm{D}} \simeq
    \begin{pmatrix}
        \widetilde{\bm M}_S&0\\
        0&\gamma\bm I_{n-s}
    \end{pmatrix}.
    \label{eq:block}
\end{equation}
Its matrix function has the block $\bm B$ on $S$ and equals $f(\gamma)\bm I$ on the complement.
Thus, $\bm N$ is stored using the index set $S$ and one $s\times s$ block rather than an $n\times n$ array.

\textit{Main result.}---
Let $\widetilde{\bm{A}}$ denote any Hermitian extension of the queried entries on $S\times S$, and define
\begin{equation}
    \widetilde{\bm M} = \gamma\bm I+ \widetilde{\bm{D}}\widetilde{\bm{A}}\widetilde{\bm{D}}. \label{eq:truncated-M}
\end{equation}
On the event that the truncation step and all retained-block queries succeed,
\begin{align}
    \|\bm M-\widetilde{\bm M}\|_{\mathrm{op}} &\leq 2\|\bm{A}\|_{\mathrm{op}} \|\bm{D}\|_{\mathrm{op}}\epsilon_D + \epsilon_A\|\bm{D}\|_{\mathrm F}^{2} \notag\\
    &\eqqcolon \nu.\label{eq:nu}
\end{align}
The first term comes from truncating the two diagonal factors, and the second comes from estimating the entries of $\bm{A}$ on the retained block.

The preceding approximation controls the error in the matrix to which $f$ is applied.
To obtain an approximation guarantee after applying the matrix function $f$, we need to bound $\| f(\bm X) - f(\bm Y) \|_\mathrm{op}$ in terms of $\| \bm{X} - \bm{Y} \|_{\mathrm{op}}$.
A scalar function $f$ is $L$-Lipschitz on an interval $\mathfrak{F} \subseteq \mathbb{R}$ if
\begin{equation}
    |f(x)-f(y)| \leq L|x-y| \qquad (x,y\in \mathfrak{F}).
\end{equation}
For Hermitian matrices with spectra contained in $\mathfrak{F}$, the corresponding operator-norm perturbation bound incurs an additional logarithmic factor in the number of distinct eigenvalues~\cite{ALEKSANDROV20112741}.
Since the approximating matrix in Eq.~\eqref{eq:block} has at most $s+1$ distinct eigenvalues, this gives the factor $C[1+\log(1+s)]L$ appearing below.

The following theorem summarizes the size of the retained block, the resulting approximation error, the success probability, and the running time.
The detailed proofs are given in Appendix~\ref{sec: main theorems}.
\begin{theorem}[Diagonal-truncation approximation]
\label{thm:main}
Suppose that $f:\mathbb R\to\mathbb R$ is $L$-Lipschitz on a spectral interval containing
\begin{equation}
    [\lambda_{\min}(\bm M)-\nu, \lambda_{\max}(\bm M)+\nu]\cup\{\gamma\}. \label{eq:lipschitz-interval}
\end{equation}
Given $\mathrm{SQ}(\bm{D})$ and
$\mathrm Q_{\epsilon_A,\delta_A}(\bm{A})$, the randomized classical algorithm described above outputs the representation of $\bm N$ in Eq.~\eqref{eq:N-structure}.
With probability at least
$1-\delta_D-s(s+1)\delta_A/2$, the output satisfies
\begin{align}
    s&\leq \frac{\|\bm{D}\|_{\mathrm F}^{2}}{\epsilon_D^{2}}, \label{eq:s-bound}\\
    \|f(\bm M)-\bm N\|_{\mathrm{op}} &\leq C[1+\log(1+s)]L\nu, \label{eq:main-error}
\end{align}
where $C>0$ is a numerical constant.
The runtime is
\begin{align}
    &O\left[\mathbf{sq}(\bm{D}) +\mathbf q_{\epsilon_A,\delta_A}(\bm{A})+1 \right] \notag\\
    &\qquad\times \widetilde O\left[ \left(\frac{\|\bm{D}\|_{\mathrm F}}{\epsilon_D}\right)^6\log\left(\frac1{\delta_D}\right)\right].\label{eq:main-runtime}
\end{align}
\end{theorem}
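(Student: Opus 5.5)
The proof splits into three parts: the sampling step, the perturbation bound for $\bm M-\widetilde{\bm M}$, and the transfer of that bound through $f$. A runtime count closes the argument.

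\emph{Sampling and failure probability.} For the sampling step, I fix $i\in H_{\epsilon_D}$. Each of the $R$ independent samples misses $i$ with probability at most $1-\epsilon_D^2/\|\bm D\|_{\mathrm F}^2$, so $i$ is missed by all of them with probability at most $\exp(-R\epsilon_D^2/\|\bm D\|_{\mathrm F}^2)\le \epsilon_D^2\delta_D/\|\bm D\|_{\mathrm F}^2$, using Eq.~\eqref{eq:number-samples}. A union bound over $|H_{\epsilon_D}|\le\|\bm D\|_{\mathrm F}^2/\epsilon_D^2$ indices, by Eq.~\eqref{eq:heavy-set-size}, gives failure probability at most $\delta_D$.

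On this event, every coordinate outside $S$ satisfies $d_i<\epsilon_D$, which gives $\|\bm D-\widetilde{\bm D}\|_{\mathrm{op}}\le\epsilon_D$. For the size bound, I would have the algorithm keep only the sampled indices with $d_i\ge\epsilon_D$; this costs one query per index and does not change the event. Then $S=H_{\epsilon_D}$, and Eq.~\eqref{eq:s-bound} follows from Eq.~\eqref{eq:heavy-set-size}.

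The block contains $s(s+1)/2$ upper-triangular queries. A union bound over them gives total failure probability at most $\delta_D+s(s+1)\delta_A/2$.

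\emph{The bound $\nu$.} Next I prove Eq.~\eqref{eq:nu} by splitting the difference as
\begin{equation*}
\bm M-\widetilde{\bm M}=(\bm D\bm A\bm D-\widetilde{\bm D}\bm A\widetilde{\bm D})+\widetilde{\bm D}(\bm A-\widetilde{\bm A})\widetilde{\bm D}.
\end{equation*}
For the first term I write
\begin{equation*}
\bm D\bm A\bm D-\widetilde{\bm D}\bm A\widetilde{\bm D}=(\bm D-\widetilde{\bm D})\bm A\bm D+\widetilde{\bm D}\bm A(\bm D-\widetilde{\bm D}).
\end{equation*}
Since $\|\widetilde{\bm D}\|_{\mathrm{op}}\le\|\bm D\|_{\mathrm{op}}$, this term is bounded by $2\|\bm A\|_{\mathrm{op}}\|\bm D\|_{\mathrm{op}}\epsilon_D$.

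For the second term, $\widetilde{\bm D}$ kills everything off $S\times S$, so only the queried block matters. I bound the operator norm by the Frobenius norm:
\begin{equation*}
\Bigl(\sum_{i,j\in S}d_i^2d_j^2|\bm A(i,j)-\widetilde{\bm A}(i,j)|^2\Bigr)^{1/2}\le\epsilon_A\sum_{i\in S}d_i^2\le\epsilon_A\|\bm D\|_{\mathrm F}^2 .
\end{equation*}
Hermitian symmetrization keeps each entry error at most $\epsilon_A$, because the lower-triangular entries are just conjugates of the upper ones.

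\emph{Passing through $f$.} By Weyl's inequality, the spectrum of $\widetilde{\bm M}$ lies in $[\lambda_{\min}(\bm M)-\nu,\lambda_{\max}(\bm M)+\nu]$. The spectrum of $\widetilde{\bm M}$ consists of the spectrum of $\widetilde{\bm M}_S$ together with $\gamma$, so it lies in the interval of Eq.~\eqref{eq:lipschitz-interval}. By Eq.~\eqref{eq:block}, $f(\widetilde{\bm M})=\bm N$.

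This is the step I expect to be the main obstacle. The operator-Lipschitz estimate of Aleksandrov--Peller must be applied in the right form. It gives
\begin{equation*}
\|f(\bm X)-f(\bm Y)\|_{\mathrm{op}}\le C(1+\log k)L\|\bm X-\bm Y\|_{\mathrm{op}},
\end{equation*}
where $k$ is the number of distinct eigenvalues of one of the two matrices. Here I take $\bm Y=\widetilde{\bm M}$, which has at most $s+1$ distinct eigenvalues. The estimate is stated for $f$ Lipschitz on all of $\mathbb R$, whereas $f$ is assumed Lipschitz only on the interval containing both spectra. I would handle this by first replacing $f$ with its Lipschitz extension that is constant outside the interval; this extension has the same constant $L$ and agrees with $f$ on both spectra.

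The estimate then yields Eq.~\eqref{eq:main-error}.

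\emph{Runtime.} The cost has four parts:
\begin{itemize}
\item $R=\widetilde O(\|\bm D\|_{\mathrm F}^2/\epsilon_D^2\cdot\log(1/\delta_D))$ samples and queries to $\mathrm{SQ}(\bm D)$;
\item one norm query to $\mathrm{SQ}(\bm D)$;
\item $O(s^2)=O((\|\bm D\|_{\mathrm F}/\epsilon_D)^4)$ noisy queries to $\bm A$;
\item an eigendecomposition and reconstruction of $\bm B$ in $O(s^3)=O((\|\bm D\|_{\mathrm F}/\epsilon_D)^6)$ arithmetic operations.
\end{itemize}
Multiplying the per-query costs by these counts and taking the maximum gives Eq.~\eqref{eq:main-runtime}.
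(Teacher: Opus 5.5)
Your proposal is correct and follows the paper's proof step for step. It uses the same heavy-coordinate sampling with a union bound, including keeping only sampled indices with $d_i\ge\epsilon_D$ exactly as in the paper's truncation lemma, the same decomposition $(\bm D-\widetilde{\bm D})\bm A\bm D+\widetilde{\bm D}\bm A(\bm D-\widetilde{\bm D})+\widetilde{\bm D}(\bm A-\widetilde{\bm A})\widetilde{\bm D}$, Weyl's inequality with the Aleksandrov--Peller bound for at most $s+1$ distinct eigenvalues, and the same $R+s^2+s^3$ runtime count; your Frobenius-norm justification of the $\epsilon_A\|\bm D\|_{\mathrm F}^2$ term and your Lipschitz-extension remark only make explicit points the paper leaves implicit.
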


\textit{Optimized random features.}---
Random-feature methods approximate a shift-invariant kernel by sampling a finite collection of Fourier features and fitting a linear predictor in the resulting feature space~\cite{NIPS2007_013a006f, NIPS2008_0efe3284}.
Once the feature parameters have been sampled, the remaining learning problem is an ordinary classical linear regression problem.
In the standard construction, the feature parameters are sampled from the spectral distribution defining the kernel.
Optimized random features instead use a data-dependent distribution related to regularized leverage scores~\cite{JMLR:v18:15-178,pmlr-v70-avron17a,shahrampour2019samplingrandomfeaturesempirical,pmlr-v97-li19k,liu2020random}.
This distribution assigns larger probability to features that are more useful for the observed input distribution and can reduce the number of features required to achieve the approximation quality of kernel ridge regression.

The quantum algorithm in~\cite{10.5555/3495724.3496871} addresses the computational problem of sampling from this optimized distribution.
The coefficient-fitting stage remains classical.
We now apply Theorem~\ref{thm:main} to the quantum sampling subroutine.

Let $\mathcal X=\{0,\ldots,G-1\}^D$, so that $|\mathcal X|=G^D$, and let $\bm F_D$ denote the $D$-dimensional discrete Fourier transform over $\mathcal X$.
The shift-invariant kernel is specified by a nonnegative diagonal matrix $\bm Q^{(\tau)}$ through
\begin{equation}
    \bm k=\bm F_D^\dagger\bm Q^{(\tau)}\bm F_D, \qquad Q_{\max}^{(\tau)}=\|\bm Q^{(\tau)}\|_{\mathrm{op}}.\label{eq:orf-kernel}
\end{equation}
The empirical distribution $\hat q^{(\rho)}$ is represented both as a probability function over $\mathcal X$ and as the diagonal matrix $\hat{\bm q}^{(\rho)}$, and we define
\begin{equation}
    \bm\Sigma_\epsilon =\epsilon\bm I +\sqrt{\hat{\bm q}^{(\rho)}}\,\bm k\,\sqrt{\hat{\bm q}^{(\rho)}}.\label{eq:orf-sigma}
\end{equation}
For each $x\in\mathcal X$, let
\begin{equation}
    \ket{\phi_x} =\sqrt{\hat{\bm q}^{(\rho)}}\,\bm F_D\sqrt{\bm Q^{(\tau)}}\ket{x}. \label{eq:orf-feature-vector}
\end{equation}
The optimized distribution over random features is then given by
\begin{equation}
    w(x)=\bra{\phi_x}\bm\Sigma_\epsilon^{-1}\ket{\phi_x}, \qquad \mathcal D_\epsilon(x)=\frac{w(x)}{\sum_{z\in\mathcal X}w(z)}.\label{eq:orf-distribution}
\end{equation}
The quantum algorithm prepares a state whose measurement approximately samples from $\mathcal D_\epsilon$.
Under its oracle assumptions, one sample with total-variation error at most $\delta$ is produced in time
\begin{align}
    &O\left[D\log G\log\log G+\mathbf q(\bm Q^{(\tau)})+\mathbf{sq}\left(\sqrt{\hat{\bm q}^{(\rho)}}\right)\right]\notag\\
    &\quad\times\widetilde O\left[\frac{Q_{\max}^{(\tau)}}{\epsilon}\polylog\left(\frac{1}{\delta}\right)\right].\label{eq:orf-quantum-runtime}
\end{align}
The central computational step of the quantum sampler is the regularized inverse $\bm \Sigma_\epsilon^{-1}$.
To compare the quantum and classical algorithms under matched access assumptions, we give the classical algorithm query access $\mathrm Q(\bm Q^{(\tau)})$, sampling-and-query access $\mathrm{SQ}(\sqrt{\hat{\bm q}^{(\rho})})$, and the value $Q_{\max}^{(\tau)}$.
These are the classical counterparts of the kernel-description oracle and the empirical-distribution state-preparation oracle used by the quantum algorithm~\cite{10.5555/3495724.3496871}.
Their access costs are denoted by $\mathbf q(\bm{Q}^{(\tau)})$ and $\mathbf{sq}(\sqrt{\hat{\bm{q}}^{(\rho)}})$, respectively.

Under this access model, the classical input does not provide the sampling access or sparse access to $\sqrt{\hat{\bm q}^{(\rho)}}\bm k \sqrt{\hat{\bm q}^{(\rho)}}$ assumed by existing dequantization methods~\cite{10.1145/3313276.3316310,chia2018quantuminspiredsublinearclassicalalgorithms,chia_et_al:LIPIcs:2020:13391,jethwani_et_al:LIPIcs.MFCS.2020.53,Gilyen2022improvedquantum,chia2022sampling,doi:10.1137/1.9781611977912.86,gharibian2022dequantizing}.
Nevertheless, the regularized inverse in Eq.~\eqref{eq:orf-sigma} lies precisely in the regime addressed by Theorem~\ref{thm:main}, and the available classical oracles satisfy its access requirements.

Sampling-and-query access to $\sqrt{\hat{\bm q}^{(\rho)}}$ directly supplies the nonnegative diagonal factor.
Although entry-query access to the kernel matrix $\bm{k}$ is not given directly, its Fourier representation in Eq.~\eqref{eq:orf-kernel} provides an efficient algorithm for estimating each entry by using $\mathrm Q(\bm Q^{(\tau)})$.
Moreover, since $\hat q^{(\rho)}$ is a normalized probability distribution,
\begin{equation}
    \left\|\sqrt{\hat{\bm q}^{(\rho)}}\right\|_{\mathrm F}=1.
\end{equation}
Therefore, the Frobenius-norm dependence in Theorem~\ref{thm:main} introduces no factor that grows with $|\mathcal{X}| = G^D$.

Thus all conditions needed to apply Theorem~\ref{thm:main} can be implemented under the corresponding classical input, despite the absence of sampling-and-query access to $\sqrt{\hat{\bm q}^{(\rho)}}\bm k \sqrt{\hat{\bm q}^{(\rho)}}$.
The theorem replaces this unavailable matrix-level access by sampling from the diagonal factor and computing only on the corresponding principal block as illustrated in Fig.~\ref{fig:diagonal-truncation}.

The resulting principal-block representation can be used to approximate the optimized feature weights and to sample from the resulting distribution without enumerating the full domain.
The detailed proof is given in Appendix~\ref{sec: proof of Rejection sampling from approximated distribution}.
We therefore obtain the following result.

\begin{theorem}[Classical optimized-feature sampler]
\label{thm:orf-sampler}
Given query access $\mathrm Q(\bm Q^{(\tau)})$, sampling-and-query access $\mathrm{SQ}(\sqrt{\hat{\bm q}^{(\rho)}})$, and the value $Q_{\max}^{(\tau)}$, and assuming $\bm k(0,0)=\Omega(1)$ as in Ref.~\cite{10.5555/3495724.3496871}, for $0<\epsilon\leq \min\left\{Q_{\max}^{(\tau)},\sqrt{Q_{\max}^{(\tau)}}\right\}$, and $\delta\in(0,1)$, there exists a randomized classical algorithm that outputs a sample from a distribution $\mathcal D'$ satisfying
\begin{equation}
    d_{\mathrm{TV}}(\mathcal D_\epsilon,\mathcal D')\leq\delta.
    \label{eq:orf-tv}
\end{equation}
After the classical data structure for the empirical distribution has been constructed, the runtime for producing one sample is
\begin{align}
    &O\left[D\polylog G+\mathbf q(\bm Q^{(\tau)})+\mathbf{sq}\left(\sqrt{\hat{\bm q}^{(\rho)}}\right)\right] \notag\\
    &\quad\times\widetilde O\left[\left(\frac{Q_{\max}^{(\tau)}}{\epsilon}\right)^{18}\frac{1}{\delta^6}\right].\label{eq:orf-classical-runtime}
\end{align}
\end{theorem}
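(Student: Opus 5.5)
We prove Theorem~\ref{thm:orf-sampler} by applying Theorem~\ref{thm:main} with $\bm D=\sqrt{\hat{\bm q}^{(\rho)}}$, $\bm A=\bm k$, $\gamma=\epsilon$, and $f(x)=x^{-1}$, followed by a rejection-sampling step that uses the sparse representation of $\bm N\approx\bm\Sigma_\epsilon^{-1}$. There are three ingredients.

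\textbf{Noisy entry oracle for $\bm k$.} We have $\bm k(x,y)=G^{-D}\sum_{\omega}Q^{(\tau)}(\omega)e^{2\pi i\omega\cdot(x-y)/G}$. Sampling $\omega$ uniformly from $\mathcal X$ (cost $O(D\log G)$) and averaging $Q^{(\tau)}(\omega)e^{2\pi i\omega\cdot(x-y)/G}$ over $O((Q_{\max}^{(\tau)}/\epsilon_A)^2\log(1/\delta_A))$ samples yields $\mathrm Q_{\epsilon_A,\delta_A}(\bm k)$ by Hoeffding's inequality. This uses only $\mathrm Q(\bm Q^{(\tau)})$ and $Q_{\max}^{(\tau)}$.

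\textbf{Parameter choices.} Since $\bm M=\bm\Sigma_\epsilon\succeq\epsilon\bm I$, we choose $\nu\le\epsilon/2$. Then $f$ is $L=4/\epsilon^2$-Lipschitz on $[\epsilon/2,\infty)$, and this interval contains Eq.~\eqref{eq:lipschitz-interval}. We use $\|\bm D\|_{\mathrm F}=1$, $\|\bm D\|_{\mathrm{op}}\le1$, and $\|\bm k\|_{\mathrm{op}}=Q_{\max}^{(\tau)}$. Taking $\epsilon_D\sim\eta\epsilon^2/Q_{\max}^{(\tau)}$ and $\epsilon_A\sim\eta\epsilon^2$ gives $\|\bm\Sigma_\epsilon^{-1}-\bm N\|_{\mathrm{op}}\le\tO(\eta)$, with $s\le\epsilon_D^{-2}$. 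Here $\eta$ is a target operator-norm accuracy to be fixed by the TV analysis. We set $\delta_A\sim\delta s^{-2}$ so that the failure probability is $O(\delta)$.

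\textbf{Weights and normalization.} Define $\tilde w(x)=\bra{\phi_x}\bm N\ket{\phi_x}$. Using Eq.~\eqref{eq:N-structure},
\[
\tilde w(x)=\tfrac1\epsilon\|\phi_x\|^2-\tfrac1\epsilon\|\phi_x|_S\|^2+\phi_x|_S^\dagger\bm B\,\phi_x|_S .
\]
The only global quantity is $\|\phi_x\|^2=\sum_z\hat q(z)Q^{(\tau)}(x)G^{-D}=Q^{(\tau)}(x)/G^D$. The local terms need the $s$ entries $\phi_x(z)=\sqrt{\hat q(z)}\,G^{-D/2}e^{2\pi i x\cdot z/G}\sqrt{Q^{(\tau)}(x)}$ for $z\in S$, computable in $O(sD\polylog G)$. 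Hence $|w(x)-\tilde w(x)|\le\|\phi_x\|^2\,\tO(\eta)$. The normalizer is $\sum_x w(x)=\operatorname{tr}(\bm\Sigma_\epsilon^{-1}\sqrt{\hat{\bm q}}\bm k\sqrt{\hat{\bm q}})$, which we estimate similarly or avoid through rejection sampling.

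\textbf{Sampling and the main obstacle.} We sample $x$ with proposal probability $Q^{(\tau)}(x)/\operatorname{tr}\bm Q^{(\tau)}$. This proposal is needed to avoid enumerating $\mathcal X$; it is realized by rejection from the uniform distribution with acceptance ratio $Q^{(\tau)}(x)/Q_{\max}^{(\tau)}$, and its expected cost uses $\operatorname{tr}\bm Q^{(\tau)}/G^D=\bm k(0,0)=\Omega(1)$. We then accept $x$ with probability $\tilde w(x)G^D/(Q^{(\tau)}(x)\cdot\epsilon^{-1})\in[0,1]$, since $w(x)\le\|\phi_x\|^2/\epsilon$. The main obstacle is controlling the acceptance rate, which is $\propto\sum_x w(x)\,\epsilon/\operatorname{tr}\bm Q^{(\tau)}$. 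We need a lower bound on $\sum_x w(x)$, i.e., on $\operatorname{tr}(\bm\Sigma_\epsilon^{-1}\bm K)$ with $\bm K=\sqrt{\hat{\bm q}}\bm k\sqrt{\hat{\bm q}}$. We get it from $\operatorname{tr}\bm K=\sum_z\hat q(z)\bm k(0,0)=\Omega(1)$ and $\|\bm K\|_{\mathrm{op}}\le Q_{\max}^{(\tau)}$, which give $\operatorname{tr}(\bm\Sigma_\epsilon^{-1}\bm K)\ge\Omega(1)/(\epsilon+Q_{\max}^{(\tau)})$.

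This acceptance lower bound also makes the relative error of $\tilde w$ small after summation. Consequently the TV error is $O(\eta\,Q_{\max}^{(\tau)}/\epsilon\cdot\operatorname{poly})$, and setting this to $\delta$ fixes $\eta\sim\delta\,\epsilon/Q_{\max}^{(\tau)}$ up to the condition $\epsilon\le\min\{Q_{\max}^{(\tau)},\sqrt{Q_{\max}^{(\tau)}}\}$, which simplifies the exponents. Substituting $\|\bm D\|_{\mathrm F}/\epsilon_D=\operatorname{poly}(Q_{\max}^{(\tau)}/\epsilon,1/\delta)$ into the sixth-power runtime of Eq.~\eqref{eq:main-runtime} should yield the stated $(Q_{\max}^{(\tau)}/\epsilon)^{18}\delta^{-6}$ scaling, after checking the exponent bookkeeping. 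The data structure $\bm N$ is built once and reused across samples. The failure events are absorbed into $\delta$.
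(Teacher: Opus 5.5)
Your overall route is the paper's. You build a Monte Carlo entry oracle for $\bm k$ from uniformly sampled frequencies. You run the diagonal-truncation algorithm with $\bm D=\sqrt{\hat{\bm q}^{(\rho)}}$, $\bm A=\bm k$, $\gamma=\epsilon$, $f(x)=x^{-1}$. You evaluate $\tilde w(x)$ from the block structure of $\bm N$ together with $\|\phi_x\|^2=Q^{(\tau)}(x)/G^D$. Finally, you rejection-sample from the uniform distribution, with the acceptance rate controlled by $W\geq \bm k(0,0)/(Q^{(\tau)}_{\max}+\epsilon)$. Your two-stage proposal multiplies out to the paper's single acceptance ratio $\epsilon G^D\tilde w(x)/Q^{(\tau)}_{\max}$.

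Two steps, however, fail as written. First, the acceptance probability $\epsilon\tilde w(x)G^D/Q^{(\tau)}(x)$ is not guaranteed to be at most $1$. The bound $w(x)\leq\|\phi_x\|^2/\epsilon$ concerns the exact weights. The Hermitianized noisy block of $\bm k$ on $S\times S$ need not be positive semidefinite, so $\widetilde{\bm M}$ can have eigenvalues below $\epsilon$. On the success event only $\lambda_{\min}(\widetilde{\bm M})\geq\epsilon-\epsilon_A$ is guaranteed, so $\|\bm N\|_{\mathrm{op}}$ can exceed $1/\epsilon$. To repair this you need three things:
\begin{itemize}
\item the envelope $\|\phi_x\|^2/(\epsilon-\nu)$, which is the paper's $\tilde w_{\max}$;
\item an explicit guard for the failure event, where $\tilde w(x)$ may be negative or exceed the envelope (the paper then stops and outputs a uniform sample);
\item a fixed cap on the number of proposals instead of an expected-cost argument, with the probability of never accepting charged to $\delta$.
\end{itemize}

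Second, the final bookkeeping does not close. Your own ingredients give
\[
d_{\mathrm{TV}}(\mathcal D_\epsilon,\tilde{\mathcal D})\leq\sum_x|w(x)-\tilde w(x)|/W\leq(Q^{(\tau)}_{\max}+\epsilon)\|\bm\Sigma_\epsilon^{-1}-\bm N\|_{\mathrm{op}}.
\]
So the right calibration is $\eta\sim\delta/(Q^{(\tau)}_{\max}+\epsilon)$ up to logarithms. That means $\epsilon_D\sim\delta\epsilon^2/(Q^{(\tau)}_{\max}(Q^{(\tau)}_{\max}+\epsilon))$ and $\epsilon_A\sim\delta\epsilon^2/(Q^{(\tau)}_{\max}+\epsilon)$, which are the paper's choices. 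Your relation $O(\eta Q^{(\tau)}_{\max}/\epsilon)$ with $\eta\sim\delta\epsilon/Q^{(\tau)}_{\max}$ is too loose when $\epsilon>1$.

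More importantly, substituting into the packaged bound Eq.~\eqref{eq:main-runtime} cannot give $\delta^{-6}$. That bound multiplies the per-query cost of your Monte Carlo oracle, which scales as $(Q^{(\tau)}_{\max}/\epsilon_A)^2\propto\delta^{-2}$, by $\epsilon_D^{-6}\propto\delta^{-6}$. At the accuracies the TV bound requires, this gives $\delta^{-8}$. You must use the unpacked cost $O(R\,\mathbf{sq}(\bm D)+s^2\mathbf q_{\epsilon_A,\delta_A}(\bm k)+s^3)$ with $s\leq\epsilon_D^{-2}$, as the paper does. This yields $\tO(\epsilon_D^{-4}(Q^{(\tau)}_{\max}/\epsilon_A)^2+\epsilon_D^{-6})=\tO((Q^{(\tau)}_{\max}/\epsilon)^{12}\delta^{-6})$ times the oracle costs.

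The proposal loop adds $\tO(Is^2)$, where $I=\tO(Q^{(\tau)}_{\max}(Q^{(\tau)}_{\max}+\epsilon)/\epsilon)=\tO((Q^{(\tau)}_{\max}/\epsilon)^3)$. This step uses $\bm k(0,0)=\Omega(1)$ and $\epsilon^2\leq Q^{(\tau)}_{\max}$, which is where the condition $\epsilon\leq\sqrt{Q^{(\tau)}_{\max}}$ enters.

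Finally, for $f(x)=x^{-1}$ you can avoid the $\log(1+s)$ factor and the unspecified constant $C$ by using $\|\bm X^{-1}-\bm Y^{-1}\|_{\mathrm{op}}\leq\|\bm X-\bm Y\|_{\mathrm{op}}/a^2$, as the paper does. An algorithm that sets $\eta$ through the general Lipschitz route would need $C$ explicitly.
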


The parameter dependences in Eqs.~\eqref{eq:orf-quantum-runtime} and~\eqref{eq:orf-classical-runtime} do not match, and the powers in our classical bound are not claimed to be optimal.
In particular, $\delta$ describes the total-variation error of a single feature sample and does not by itself determine the end-to-end statistical cost, which also involves repeated sampling.
The relevant complexity conclusion is that, when $Q_{\max}^{(\tau)}/\epsilon=\poly(D)$ and $1/\delta=\poly(D)$, the classical sampler runs in time polynomial in $D$ and $\log G$ under the matched oracle assumptions.
Hence, the proposed superpolynomial separation in $D$ disappears.
Since optimized-feature sampling is the only quantum component of the learning procedure in Theorem~2 of~\cite{10.5555/3495724.3496871}, and the subsequent coefficient fitting is classical, replacing the sampler also yields a fully classical implementation of the learning procedure.
This is an oracle-level dequantization of the QSVT-based learning procedure rather than a new statistical method for random-feature learning.
It shows that the absence of sampling-and-query access to the composite matrix $\sqrt{\hat{\bm q}^{(\rho)}}\,\bm k\,
\sqrt{\hat{\bm q}^{(\rho)}}$ does not prevent a polynomial-time classical simulation under the corresponding oracle assumptions.

\textit{Conclusion.}---
In this work, we have introduced a dequantization method for diagonally weighted matrix functions that operates without sampling access to the composite matrix itself.
Given sampling-and-query access to a nonnegative diagonal factor $\bm{D}$ and entry-query access to a Hermitian factor $\bm{A}$, the method identifies an effective coordinate support from $\bm{D}$ and reduces the matrix transformation to a small principal block.
This yields a support-sparse approximation to $f(\gamma\bm I+\bm{D}\bm{A}\bm{D})$ with an operator-norm guarantee.
Applied to optimized random features, it yields a classical sampler running in polynomial time under the parameter regime and matched oracle assumptions.
Since the subsequent coefficient fitting is classical, this also gives a fully classical implementation of the learning procedure in~\cite{10.5555/3495724.3496871} and removes the proposed superpolynomial separation in the input dimension.

These results suggest that the dequantization of QSVT-based routines should be assessed through the combination of access structure and target transformation.
Our result identifies diagonal weighting together with entry-wise access to the remaining Hermitian factor as a sufficient structure for dequantization.
More broadly, the absence of sampling access to a block-encoded composite matrix does not by itself establish a quantum advantage, since the separately accessible factors may still enable a different efficient classical reduction.
Characterizing more general factor-level access conditions is a natural direction for future work.

\begin{acknowledgments}
We acknowledge the use of ChatGPT 5.6 sol to assist in the preparation of the manuscript.
NI was supported by JST BOOST, Japan Grant Number JPMJBS2418, JST CREST Grant Number JPMJCR25I5, JSPS KAKENHI Grant Number 23K21643, and MEXT Quantum Leap Flagship Program (MEXT QLEAP) JPMXS0118069605, JPMXS0120351339.
MM was supported by MEXT Quantum Leap Flagship Program (MEXT QLEAP) JPMXS0118069605, JPMXS0120351339, JST CREST Grant Number JPMJCR25I5, JST ASPIRE Grant Number JPMJAP25A3, JSPS KAKENHI Grant Number 23K21643, JST NEXUS Grant Number JPMJNX26C9, and IBM Quantum.
HY was supported by JST PRESTO Grant Number JPMJPR201A, JPMJPR23FC, JSPS KAKENHI Grant Number JP23K19970, JST CREST Grant Number JPMJCR25I5, JST [Moonshot R\&D] [Grant Number JPMJMS256J], and Faculty Research Funding from Google Quantum AI\@.
\end{acknowledgments}

\bibliography{main}

@article{chia2022sampling,
author = {Chia, Nai-Hui and Gily\'{e}n, Andr\'{a}s Pal and Li, Tongyang and Lin, Han-Hsuan and Tang, Ewin and Wang, Chunhao},
title = {Sampling-based Sublinear Low-rank Matrix Arithmetic Framework for Dequantizing Quantum Machine Learning},
year = {2022},
issue_date = {October 2022},
publisher = {Association for Computing Machinery},
address = {New York, NY, USA},
volume = {69},
number = {5},
issn = {0004-5411},
url = {https://doi.org/10.1145/3549524},
doi = {10.1145/3549524},
journal = {J. ACM},
month = oct,
articleno = {33},
numpages = {72}
}

@inproceedings{gilyen2019quantum,
author = {Gily\'{e}n, Andr\'{a}s and Su, Yuan and Low, Guang Hao and Wiebe, Nathan},
title = {Quantum singular value transformation and beyond: exponential improvements for quantum matrix arithmetics},
year = {2019},
isbn = {9781450367059},
publisher = {Association for Computing Machinery},
address = {New York, NY, USA},
url = {https://doi.org/10.1145/3313276.3316366},
doi = {10.1145/3313276.3316366},
booktitle = {Proceedings of the 51st Annual ACM SIGACT Symposium on Theory of Computing},
pages = {193–204},
numpages = {12},
location = {Phoenix, AZ, USA},
series = {STOC 2019}
}

@article{le2025robust,
  author       = {Le Gall, Fran{\c{c}}ois},
  title        = {Robust Dequantization of the Quantum Singular Value Transformation and Quantum Machine Learning Algorithms},
  journal      = {computational complexity},
  year         = {2025},
  volume       = {34},
  number       = {1},
  pages        = {2},
  doi          = {10.1007/s00037-024-00262-3},
  url          = {https://link.springer.com/article/10.1007/s00037-024-00262-3}
}

@inproceedings{gharibian2022dequantizing,
author = {Gharibian, Sevag and Le Gall, Fran\c{c}ois},
title = {Dequantizing the Quantum singular value transformation: hardness and applications to Quantum chemistry and the Quantum {PCP} conjecture},
year = {2022},
isbn = {9781450392648},
publisher = {Association for Computing Machinery},
address = {New York, NY, USA},
url = {https://doi.org/10.1145/3519935.3519991},
doi = {10.1145/3519935.3519991},
booktitle = {Proceedings of the 54th Annual ACM SIGACT Symposium on Theory of Computing},
pages = {19–32},
numpages = {14},
location = {Rome, Italy},
series = {STOC 2022}
}

@article{ALEKSANDROV20112741,
title = {Estimates of operator moduli of continuity},
journal = {Journal of Functional Analysis},
volume = {261},
number = {10},
pages = {2741-2796},
year = {2011},
issn = {0022-1236},
doi = {doi.org/10.1016/j.jfa.2011.07.009},
url = {https://www.sciencedirect.com/science/article/pii/S0022123611002631},
author = {A.B. Aleksandrov and V.V. Peller}
}

@article{PhysRevLett.103.150502,
  title = {Quantum Algorithm for Linear Systems of Equations},
  author = {Harrow, Aram W. and Hassidim, Avinatan and Lloyd, Seth},
  journal = {Phys. Rev. Lett.},
  volume = {103},
  issue = {15},
  pages = {150502},
  numpages = {4},
  year = {2009},
  month = {Oct},
  publisher = {American Physical Society},
  doi = {10.1103/PhysRevLett.103.150502},
  url = {https://link.aps.org/doi/10.1103/PhysRevLett.103.150502}
}

@inproceedings{10.5555/3495724.3496871,
  author    = {Yamasaki, Hayata and Subramanian, Sathyawageeswar and Sonoda, Sho and Koashi, Masato},
  title     = {Learning with Optimized Random Features: Exponential Speedup by Quantum Machine Learning without Sparsity and Low-Rank Assumptions},
  booktitle = {Advances in Neural Information Processing Systems 33},
  pages     = {13674--13687},
  isbn = {9781713829546},
  year      = {2020},
  url       = {https://proceedings.neurips.cc/paper/2020/hash/9ddb9dd5d8aee9a76bf217a2a3c54833-Abstract.html},
}

@book{scholkopf2002learning,
  title={Learning with kernels: support vector machines, regularization, optimization, and beyond},
  author={Sch{\"o}lkopf, Bernhard and Smola, Alexander J},
  year={2002},
  publisher={MIT press},
  url = {https://doi.org/10.7551/mitpress/4175.001.0001},
}

@inproceedings{NIPS2007_013a006f,
 author = {Rahimi, Ali and Recht, Benjamin},
 booktitle = {Advances in Neural Information Processing Systems},
 editor = {J. Platt and D. Koller and Y. Singer and S. Roweis},
 pages = {},
 publisher = {Curran Associates, Inc.},
 title = {Random Features for Large-Scale Kernel Machines},
 url = {https://proceedings.neurips.cc/paper_files/paper/2007/file/013a006f03dbc5392effeb8f18fda755-Paper.pdf},
 volume = {20},
 year = {2007}
}

@inproceedings{NIPS2008_0efe3284,
 author = {Rahimi, Ali and Recht, Benjamin},
 booktitle = {Advances in Neural Information Processing Systems},
 editor = {D. Koller and D. Schuurmans and Y. Bengio and L. Bottou},
 pages = {},
 publisher = {Curran Associates, Inc.},
 title = {Weighted Sums of Random Kitchen Sinks: Replacing minimization with randomization in learning},
 url = {https://proceedings.neurips.cc/paper_files/paper/2008/file/0efe32849d230d7f53049ddc4a4b0c60-Paper.pdf},
 volume = {21},
 year = {2008}
}

@article{brent1976fast,
author = {Brent, Richard P.},
title = {Fast Multiple-Precision Evaluation of Elementary Functions},
year = {1976},
issue_date = {April 1976},
publisher = {Association for Computing Machinery},
address = {New York, NY, USA},
volume = {23},
number = {2},
issn = {0004-5411},
url = {https://doi.org/10.1145/321941.321944},
doi = {10.1145/321941.321944},
journal = {J. ACM},
month = apr,
pages = {242–251},
numpages = {10}
}

@InProceedings{haible1998fast,
author="Haible, Bruno
and Papanikolaou, Thomas",
editor="Buhler, Joe P.",
title="Fast multiprecision evaluation of series of rational numbers",
booktitle="Algorithmic Number Theory",
year="1998",
publisher="Springer Berlin Heidelberg",
address="Berlin, Heidelberg",
pages="338--350",
isbn="978-3-540-69113-6",
doi       = {10.1007/BFb0054873},
url       = {https://link.springer.com/chapter/10.1007/BFb0054873}
}

@book{brent2010modern,
  title     = {Modern Computer Arithmetic},
  author    = {Brent, Richard P. and Zimmermann, Paul},
  year      = {2010},
  publisher = {Cambridge University Press},
  volume    = {18},
  series    = {Cambridge Monographs on Applied and Computational Mathematics},
  doi       = {10.1017/CBO9780511921698},
  url       = {https://www.cambridge.org/core/books/modern-computer-arithmetic/B345DF9E08B232BC5BAFE05C2F3A36D8}
}

@InProceedings{10.5555/3618408.3620034,
  title = 	 {Quantum Ridgelet Transform: Winning Lottery Ticket of Neural Networks with Quantum Computation},
  author =       {Yamasaki, Hayata and Subramanian, Sathyawageeswar and Hayakawa, Satoshi and Sonoda, Sho},
  booktitle = 	 {Proceedings of the 40th International Conference on Machine Learning},
  pages = 	 {39008--39034},
  year = 	 {2023},
  editor = 	 {Krause, Andreas and Brunskill, Emma and Cho, Kyunghyun and Engelhardt, Barbara and Sabato, Sivan and Scarlett, Jonathan},
  volume = 	 {202},
  series = 	 {Proceedings of Machine Learning Research},
  month = 	 {23--29 Jul},
  publisher =    {PMLR},
  url = 	 {https://proceedings.mlr.press/v202/yamasaki23a.html}
}

@InProceedings{kerenidis_et_al:LIPIcs.ITCS.2017.49,
  author =	{Kerenidis, Iordanis and Prakash, Anupam},
  title =	{{Quantum Recommendation Systems}},
  booktitle =	{8th Innovations in Theoretical Computer Science Conference (ITCS 2017)},
  pages =	{49:1--49:21},
  series =	{Leibniz International Proceedings in Informatics (LIPIcs)},
  ISBN =	{978-3-95977-029-3},
  ISSN =	{1868-8969},
  year =	{2017},
  volume =	{67},
  editor =	{Papadimitriou, Christos H.},
  publisher =	{Schloss Dagstuhl -- Leibniz-Zentrum f{\"u}r Informatik},
  address =	{Dagstuhl, Germany},
  URL =		{https://drops.dagstuhl.de/entities/document/10.4230/LIPIcs.ITCS.2017.49},
  URN =		{urn:nbn:de:0030-drops-81541},
  doi =		{10.4230/LIPIcs.ITCS.2017.49}
}

@article{JMLR:v18:15-178,
  author  = {Francis Bach},
  title   = {On the Equivalence between Kernel Quadrature Rules and Random Feature Expansions},
  journal = {Journal of Machine Learning Research},
  year    = {2017},
  volume  = {18},
  number  = {21},
  pages   = {1--38},
  url     = {http://jmlr.org/papers/v18/15-178.html}
}

@inproceedings{10.1145/3313276.3316310,
author = {Tang, Ewin},
title = {A quantum-inspired classical algorithm for recommendation systems},
year = {2019},
isbn = {9781450367059},
publisher = {Association for Computing Machinery},
address = {New York, NY, USA},
url = {https://doi.org/10.1145/3313276.3316310},
doi = {10.1145/3313276.3316310},
booktitle = {Proceedings of the 51st Annual ACM SIGACT Symposium on Theory of Computing},
pages = {217–228},
numpages = {12},
location = {Phoenix, AZ, USA},
series = {STOC 2019}
}

@Article{Liu2021,
author={Liu, Yunchao
and Arunachalam, Srinivasan
and Temme, Kristan},
title={A rigorous and robust quantum speed-up in supervised machine learning},
journal={Nature Physics},
year={2021},
month={Sep},
day={01},
volume={17},
number={9},
pages={1013-1017},
issn={1745-2481},
doi={10.1038/s41567-021-01287-z},
url={https://doi.org/10.1038/s41567-021-01287-z}
}

@article{servedio2004,
author = {Servedio, Rocco A. and Gortler, Steven J.},
title = {Equivalences and Separations Between Quantum and Classical Learnability},
journal = {SIAM Journal on Computing},
volume = {33},
number = {5},
pages = {1067-1092},
year = {2004},
doi = {10.1137/S0097539704412910},

URL = { 
    
        https://doi.org/10.1137/S0097539704412910
    
    

}
}

@misc{gyurik2023establishinglearningseparationsclassical,
      title={On establishing learning separations between classical and quantum machine learning with classical data}, 
      author={Casper Gyurik and Vedran Dunjko},
      year={2023},
      eprint={2208.06339},
      archivePrefix={arXiv},
      primaryClass={quant-ph},
      url={https://arxiv.org/abs/2208.06339}, 
}

@misc{gyurik2024exponentialseparationsclassicalquantum,
      title={Exponential separations between classical and quantum learners}, 
      author={Casper Gyurik and Vedran Dunjko},
      year={2024},
      eprint={2306.16028},
      archivePrefix={arXiv},
      primaryClass={quant-ph},
      url={https://arxiv.org/abs/2306.16028}, 
}

@Article{Yamasaki2026,
author={Yamasaki, Hayata
and Isogai, Natsuto
and Murao, Mio},
title={Advantage of quantum machine learning from general computational advantages},
journal={npj Quantum Information},
year={2026},
month={Jul},
day={28},
volume={12},
number={1},
pages={125},
issn={2056-6387},
doi={10.1038/s41534-026-01279-y},
url={https://doi.org/10.1038/s41534-026-01279-y}
}

@Article{Molteni2026,
author={Molteni, Riccardo
and Gyurik, Casper
and Dunjko, Vedran},
title={Exponential quantum advantages in learning quantum observables from classical data},
journal={npj Quantum Information},
year={2026},
month={Jan},
day={10},
volume={12},
number={1},
pages={19},
issn={2056-6387},
doi={10.1038/s41534-025-01162-2},
url={https://doi.org/10.1038/s41534-025-01162-2}
}

@InProceedings{jethwani_et_al:LIPIcs.MFCS.2020.53,
  author =	{Jethwani, Dhawal and Le Gall, Fran\c{c}ois and Singh, Sanjay K.},
  title =	{{Quantum-Inspired Classical Algorithms for Singular Value Transformation}},
  booktitle =	{45th International Symposium on Mathematical Foundations of Computer Science (MFCS 2020)},
  pages =	{53:1--53:14},
  series =	{Leibniz International Proceedings in Informatics (LIPIcs)},
  ISBN =	{978-3-95977-159-7},
  ISSN =	{1868-8969},
  year =	{2020},
  volume =	{170},
  editor =	{Esparza, Javier and Kr\'{a}l', Daniel},
  publisher =	{Schloss Dagstuhl -- Leibniz-Zentrum f{\"u}r Informatik},
  address =	{Dagstuhl, Germany},
  URL =		{https://drops.dagstuhl.de/entities/document/10.4230/LIPIcs.MFCS.2020.53},
  URN =		{urn:nbn:de:0030-drops-127193},
  doi =		{10.4230/LIPIcs.MFCS.2020.53}
}

@InProceedings{pmlr-v70-avron17a,
  title = 	 {Random {F}ourier Features for Kernel Ridge Regression: Approximation Bounds and Statistical Guarantees},
  author =       {Haim Avron and Michael Kapralov and Cameron Musco and Christopher Musco and Ameya Velingker and Amir Zandieh},
  booktitle = 	 {Proceedings of the 34th International Conference on Machine Learning},
  pages = 	 {253--262},
  year = 	 {2017},
  editor = 	 {Precup, Doina and Teh, Yee Whye},
  volume = 	 {70},
  series = 	 {Proceedings of Machine Learning Research},
  month = 	 {06--11 Aug},
  publisher =    {PMLR},
  url = 	 {https://proceedings.mlr.press/v70/avron17a.html}
}

@misc{shahrampour2019samplingrandomfeaturesempirical,
      title={On Sampling Random Features From Empirical Leverage Scores: Implementation and Theoretical Guarantees}, 
      author={Shahin Shahrampour and Soheil Kolouri},
      year={2019},
      eprint={1903.08329},
      archivePrefix={arXiv},
      primaryClass={cs.LG},
      url={https://arxiv.org/abs/1903.08329}, 
}

@InProceedings{pmlr-v97-li19k,
  title = 	 {Towards a Unified Analysis of Random {F}ourier Features},
  author =       {Li, Zhu and Ton, Jean-Francois and Oglic, Dino and Sejdinovic, Dino},
  booktitle = 	 {Proceedings of the 36th International Conference on Machine Learning},
  pages = 	 {3905--3914},
  year = 	 {2019},
  editor = 	 {Chaudhuri, Kamalika and Salakhutdinov, Ruslan},
  volume = 	 {97},
  series = 	 {Proceedings of Machine Learning Research},
  month = 	 {09--15 Jun},
  publisher =    {PMLR},
  url = 	 {https://proceedings.mlr.press/v97/li19k.html}
}

@inproceedings{liu2020random,
  title={Random {Fourier} Features via Fast Surrogate Leverage Weighted Sampling},
  author={Liu, Fanghui and Huang, Xiaolin and Chen, Yudong and Yang, Jie and Suykens, Johan},
  booktitle={Proceedings of the AAAI Conference on Artificial Intelligence},
  volume={34},
  pages={4844--4851},
  year={2020},
  doi={10.1609/aaai.v34i04.5920}
}

@book{wittek2014,
  author={Peter Wittek},
  title={Quantum Machine Learning: What Quantum Computing Means to Data Mining},
  year={2014},
  publisher={Elsevier},
  url = {https://www.sciencedirect.com/book/9780128009536/quantum-machine-learning}

}

@article{10.1098/rspa.2017.0551,
    author = {Ciliberto, Carlo and Herbster, Mark and Ialongo, Alessandro Davide and Pontil, Massimiliano and Rocchetto, Andrea and Severini, Simone and Wossnig, Leonard},
    title = {Quantum machine learning: a classical perspective},
    journal = {Proceedings of the Royal Society A: Mathematical, Physical and Engineering Sciences},
    volume = {474},
    number = {2209},
    pages = {20170551},
    year = {2018},
    month = {01},
    issn = {1364-5021},
    doi = {10.1098/rspa.2017.0551},
    url = {https://doi.org/10.1098/rspa.2017.0551},
}

@misc{isogai2026winninglotteryticketsneural,
      title={Winning Lottery Tickets in Neural Networks via a Quantum-Inspired Classical Algorithm}, 
      author={Natsuto Isogai and Hayata Yamasaki and Sho Sonoda and Mio Murao},
      year={2026},
      eprint={2605.13979},
      archivePrefix={arXiv},
      primaryClass={quant-ph},
      url={https://arxiv.org/abs/2605.13979}, 
}

@article{PhysRevA.99.052331,
  title = {Quantum-assisted {Gaussian} process regression},
  author = {Zhao, Zhikuan and Fitzsimons, Jack K. and Fitzsimons, Joseph F.},
  journal = {Phys. Rev. A},
  volume = {99},
  issue = {5},
  pages = {052331},
  numpages = {6},
  year = {2019},
  month = {May},
  publisher = {American Physical Society},
  doi = {10.1103/PhysRevA.99.052331},
  url = {https://link.aps.org/doi/10.1103/PhysRevA.99.052331}
}

@misc{chia2018quantuminspiredsublinearclassicalalgorithms,
      title={Quantum-inspired sublinear classical algorithms for solving low-rank linear systems}, 
      author={Nai-Hui Chia and Han-Hsuan Lin and Chunhao Wang},
      year={2018},
      eprint={1811.04852},
      archivePrefix={arXiv},
      primaryClass={cs.DS},
      url={https://arxiv.org/abs/1811.04852}, 
}

@InProceedings{chia_et_al:LIPIcs:2020:13391,
  author =	{Nai-Hui Chia and Andr{\'a}s Gily{\'e}n and Han-Hsuan Lin and Seth Lloyd and Ewin Tang and Chunhao Wang},
  title =	{{Quantum-Inspired Algorithms for Solving Low-Rank Linear Equation Systems with Logarithmic Dependence on the Dimension}},
  booktitle =	{31st International Symposium on Algorithms and Computation (ISAAC 2020)},
  pages =	{47:1--47:17},
  series =	{Leibniz International Proceedings in Informatics (LIPIcs)},
  ISBN =	{978-3-95977-173-3},
  ISSN =	{1868-8969},
  year =	{2020},
  volume =	{181},
  editor =	{Yixin Cao and Siu-Wing Cheng and Minming Li},
  publisher =	{Schloss Dagstuhl--Leibniz-Zentrum f{\"u}r Informatik},
  address =	{Dagstuhl, Germany},
  URL =		{https://drops.dagstuhl.de/opus/volltexte/2020/13391},
  URN =		{urn:nbn:de:0030-drops-133916},
  doi =		{10.4230/LIPIcs.ISAAC.2020.47}
}

@article{Gilyen2022improvedquantum,
  doi = {10.22331/q-2022-06-30-754},
  url = {https://doi.org/10.22331/q-2022-06-30-754},
  title = {An improved quantum-inspired algorithm for linear regression},
  author = {Gily{\'{e}}n, Andr{\'{a}}s and Song, Zhao and Tang, Ewin},
  journal = {{Quantum}},
  issn = {2521-327X},
  publisher = {{Verein zur F{\"{o}}rderung des Open Access Publizierens in den Quantenwissenschaften}},
  volume = {6},
  pages = {754},
  month = jun,
  year = {2022}
}

@inbook{doi:10.1137/1.9781611977912.86,
author = {Ainesh Bakshi and Ewin Tang},
title = {An Improved Classical Singular Value Transformation for Quantum Machine Learning},
booktitle = {Proceedings of the 2024 Annual ACM-SIAM Symposium on Discrete Algorithms (SODA)},
chapter = {},
year = {2024},
publisher = {Society for Industrial and Applied Mathematics},
pages = {2398-2453},
doi = {10.1137/1.9781611977912.86},
URL = {https://epubs.siam.org/doi/abs/10.1137/1.9781611977912.86},
eprint = {https://epubs.siam.org/doi/pdf/10.1137/1.9781611977912.86}
}

@article{Martyn2021grand,
  title = {Grand Unification of Quantum Algorithms},
  author = {Martyn, John M. and Rossi, Zane M. and Tan, Andrew K. and Chuang, Isaac L.},
  journal = {PRX Quantum},
  volume = {2},
  number = {4},
  pages = {040203},
  year = {2021},
  doi = {10.1103/PRXQuantum.2.040203},
  url = {https://doi.org/10.1103/PRXQuantum.2.040203}
}

\clearpage
\appendix
\section{Preliminaries}\label{sec: preliminaries}
\subsection{Notations and general definitions}
For a natural number $n$, we use $[n]$ to denote the set $\{1,\ldots,n\}$.
Let $\mb{R}$ and $\mb{C}$ denote the fields of real and complex numbers, respectively.
For any entry $v$ in $\mb{C}$, we use $|v|$ and $\overline{v}$ to denote the absolute value and the complex conjugate of $v$, respectively.
For a set $S$, let $\card(S)$ denote the cardinality of $S$.

Let $\mb{R}^n$ and $\mb{C}^n$ be $n$-dimensional vector spaces over $\mb{R}$ and $\mb{C}$, respectively.
For a vector $\bm{x} \in \mb{C}^n$, $\|{\bm{x}}\|_2$ and $\bm{x}_i$ denote the $l_2$-norm and the $i$-th entry of $\bm{x}$, respectively.
For a matrix $\bm{A} \in \mb{C}^{m\times n}$, $\bm{A}^\dagger, \|\bm{A}\|_{\mathrm{op}},\|\bm{A}\|_{\mathrm{F}}$ denote the conjugate transpose, operator norm, and Frobenius norm of $\bm{A}$, respectively.
For a Hermitian matrix $\bm{A} \in \mathbb{C}^{n \times n}$, $\lambda_{\max}(\bm{A})$ and $\lambda_{\min}(\bm{A})$ denote the maximum and minimum eigenvalues of $\bm{A}$, respectively, and $\sigma(\bm{A})$ denotes its spectrum, i.e., the set of eigenvalues of $\bm{A}$.
We denote the entry in $A$ at the $i$-th row and $j$-th column by $A(i,j)$.
Equivalently, with respect to the standard computational basis $\{\ket{i}\}_{i = 1}^n$, we may write $\bm{A}(i,j) = \bra{i}\bm{A} \ket{j}$.
Let $\bm{A}(i,\ast)$ and $\bm{A}(\ast, j)$ denote the $i$-th row and $j$-th column of $\bm{A}$.
When a matrix $\bm{A} \in \mathbb{C}^{n \times n}$ is diagonal, we sometimes write $\bm{A} = \diag(a_1,a_2,\ldots , a_n)$, and $a_i$ is the $i$-th diagonal entry of $\bm{A}$ for each $i$.
For a square matrix $\bm{A} \in \mathbb{C}^{n \times n}$, we write $\bm{A}_S$ for the $|S| \times |S|$ principal submatrix by restricting $\bm{A}$ to the rows and columns indexed by $S$, i.e., $\bm{A}_S(i,j) = \bm{A}(i,j)$ for all $i,j \in S$.
We say that a matrix $\bm{A} \in \mathbb{C}^{m \times n}$ is row $\alpha_r$-sparse if the number of non-zero entries in each row of $\bm{A}$ is at most $\alpha_r$.
Similarly, we say that a matrix $\bm{A} \in \mathbb{C}^{m \times n}$ is column $\alpha_c$-sparse if the number of non-zero entries in each column of $\bm{A}$ is at most $\alpha_c$.
We say that a matrix $\bm{A} \in \mathbb{C}^{m \times n}$ is $\alpha$-sparse if $\bm{A}$ is row $\alpha$-sparse and column $\alpha$-sparse.

We say that an algorithm is a randomized classical algorithm if the classical algorithm has access to random bits.
In this paper, arithmetic (i.e., addition and multiplication) for real values and function evaluation (e.g., evaluating $f(x)$ from an input $x$) are performed in $O(1)$ time.

In this paper, we use the total variation distance to measure the distance between two probability distributions.
\begin{definition}[Total variation distance]
    The total variation distance $d_\mathrm{TV}(\mc{D}, \mc{D}')$ between two distributions $\mc{D}$ and $\mc{D}'$ over a finite set $\mc{X}$ is defined as
    \begin{equation}
        d_\mathrm{TV}(\mc{D}, \mc{D}') = \frac{1}{2}\sum_{x\in \mc{X}} |\mc{D}(x) - \mc{D}'(x)|.
    \end{equation}
\end{definition}
It is easy to see that the total variation distance is a metric, i.e., it satisfies the triangle inequality, and $d_\mathrm{TV}(\mc{D}, \mc{D}') = 0$ if and only if $\mc{D} = \mc{D}'$, and $d_\mathrm{TV}(\mc{D}, \mc{D}') \leq 1$ for any pair of $\mc{D}$ and $\mc{D}'$.

We also use the following concentration inequality to bound the estimation error and failure probability.
\begin{lemma}[Hoeffding's inequality\label{lemma: Hoeffding's inequality}]
    Let $Z_1, Z_2, \ldots, Z_n$ be independent random variables such that $Z_i \in [a, b]$ for every $i = 1, 2, \ldots, n$.
    Then, for every $\epsilon > 0$, the following inequalities hold:
    \begin{align}
        \pr\left(\frac{1}{n}\sum_{i=1}^n (Z_i - \mathbb{E}[Z_i]) \geq \epsilon\right) &\leq \exp\left(-\frac{2n\epsilon^2}{(b - a)^2}\right),\\
        \pr\left(\frac{1}{n}\sum_{i=1}^n (Z_i - \mathbb{E}[Z_i]) \leq -\epsilon\right) &\leq \exp\left(-\frac{2n\epsilon^2}{(b - a)^2}\right), \label{eq: Hoeffding's inequality lower bound}\\
        \pr\left(\left|\frac{1}{n}\sum_{i=1}^n (Z_i - \mathbb{E}[Z_i])\right| \geq \epsilon\right) &\leq 2\exp\left(-\frac{2n\epsilon^2}{(b - a)^2}\right).
    \end{align}
\end{lemma}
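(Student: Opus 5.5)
We will prove the bounds for the sum $\sum_{i=1}^n (Z_i-\mathbb{E}[Z_i])$ using the Chernoff exponential-moment method combined with Hoeffding's lemma. Write $X_i = Z_i - \mathbb{E}[Z_i]$. Each $X_i$ has mean zero and lies in an interval $[a-\mathbb{E}Z_i,\, b-\mathbb{E}Z_i]$ of length $b-a$.

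\emph{Hoeffding's lemma.} The key auxiliary claim is that a mean-zero random variable $X$ supported in an interval of length $b-a$ satisfies
\begin{equation*}
\mathbb{E}[e^{\lambda X}] \le \exp\!\left(\frac{\lambda^2 (b-a)^2}{8}\right)
\end{equation*}
for all real $\lambda$. To prove it, let $X \in [\alpha,\beta]$ with $\alpha \le 0 \le \beta$. By convexity of $x\mapsto e^{\lambda x}$, we have $e^{\lambda x} \le \frac{\beta - x}{\beta-\alpha}e^{\lambda\alpha} + \frac{x-\alpha}{\beta-\alpha}e^{\lambda\beta}$. Taking expectations and using $\mathbb{E}X=0$ gives $\mathbb{E}e^{\lambda X}\le e^{\varphi(h)}$. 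Here $h=\lambda(\beta-\alpha)$, $p=-\alpha/(\beta-\alpha)\in[0,1]$, and
\begin{equation*}
\varphi(h)=-hp+\log(1-p+pe^{h}).
\end{equation*}
One checks that $\varphi(0)=\varphi'(0)=0$. Moreover, $\varphi''(h)=u(1-u)\le 1/4$, where $u=pe^h/(1-p+pe^h)\in[0,1]$. Taylor's theorem with remainder then gives $\varphi(h)\le h^2/8$. This calculus step is the main technical point; the rest is bookkeeping.

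\emph{Chernoff bound.} For $\lambda>0$, Markov's inequality applied to $e^{\lambda\sum X_i}$ gives
\begin{equation*}
\pr\!\left(\tfrac1n\sum_i X_i\ge\epsilon\right)\le e^{-\lambda n\epsilon}\,\mathbb{E}\big[e^{\lambda\sum_i X_i}\big].
\end{equation*}
By independence the expectation factorizes into $\prod_i \mathbb{E}[e^{\lambda X_i}]$, and by the lemma this is at most $e^{n\lambda^2(b-a)^2/8}$. Minimizing the exponent $-\lambda n\epsilon + n\lambda^2(b-a)^2/8$ over $\lambda$ gives the choice $\lambda=4\epsilon/(b-a)^2$. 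This yields the bound $\exp(-2n\epsilon^2/(b-a)^2)$, which is the first inequality.

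\emph{Remaining inequalities.} The lower-tail inequality, Eq.~\eqref{eq: Hoeffding's inequality lower bound}, follows by applying the same argument to $-Z_i$. These variables lie in $[-b,-a]$, an interval of the same length $b-a$. The two-sided bound then follows from the union bound over the two events, which produces the factor $2$.
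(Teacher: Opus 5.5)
Your proof is correct. It is the standard argument: Hoeffding's lemma via the convexity bound and $\varphi''(h)=u(1-u)\le 1/4$, then the Chernoff bound optimized at $\lambda = 4\epsilon/(b-a)^2$, then the reflection $Z_i\mapsto -Z_i$ and a union bound. Every computation checks out.

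The paper gives no proof of this lemma. It quotes Hoeffding's inequality as a standard concentration result and uses it only in the entry-estimation lemma for $\bm k$, so your argument simply supplies the omitted textbook proof.

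One point is worth making explicit. Your convex-combination step divides by $\beta-\alpha = b-a$, so you need $b>a$. The statement already implies this, since $(b-a)^2$ appears in a denominator.
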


\subsection{Matrix functions}
Let $\bm{A} \in \mathbb{C}^{m \times n}$ be a matrix, and $\bm{A} = \bm U \bm\Sigma \bm V^\dagger$ be a singular value decomposition (SVD), where $\bm U \in \mathbb{C}^{m \times m}$ and $\bm V \in \mathbb{C}^{n \times n}$ are unitaries, and $\bm \Sigma = \begin{bmatrix}
    \bm D & 0\\
    0 & 0
\end{bmatrix}$ is a rectangular diagonal matrix with a diagonal matrix $\bm D =\diag(d_1, \ldots, d_{r})\in \mathbb{R}^{r \times r}$ whose diagonal entries are nonzero singular values of $\bm A$.
\begin{definition}
    For a function $f:[0,\infty) \to \mb{C}$ and a matrix $\bm A\in \mathbb{C}^{m \times n}$, the singular value transformation of $\bm A$ is defined as
    \begin{equation}
        f^{\mathrm{SV}}(\bm A) = \bm U 
        \begin{bmatrix}
            f^{\mathrm{SV}}(\bm D) & 0 \\
            0 & 0
        \end{bmatrix} 
        \bm V^\dagger,
    \end{equation}
    where $f^{\mathrm{SV}}(\bm D) = \diag(f(d_1), \ldots, f(d_{r}))$, and the zero blocks are chosen so that the middle matrix has size $m \times n$.
\end{definition}

When $\bm A$ is Hermitian and $\bm A = \bm V \bm \Lambda \bm V^\dagger$ is an eigenvalue decomposition, where $\bm V$ is a unitary matrix, and $\bm \Lambda$ is a real diagonal matrix whose diagonal entries are eigenvalues of $\bm A$, the matrix function is defined similarly by using the eigenvalue decomposition of $\bm A$.
In particular, when $\bm A$ is a positive semidefinite matrix, the eigenvalues are equal to the singular values, and thus, the eigenvalue transformation of positive semidefinite matrices corresponds to the singular value transformation.

\begin{definition}
    For a function $f:\mb{R} \to \mb{C}$ and a Hermitian matrix $\bm A \in \mathbb{C}^{n \times n}$, the eigenvalue transformation of $\bm A$ is defined as
    \begin{equation}
        f^{\mathrm{EV}}(\bm A) = \bm V f^{\mathrm{EV}}(\bm \Lambda) \bm V^\dagger,
    \end{equation}
    where $f^{\mathrm{EV}}(\bm \Lambda) = \diag(f(\lambda_1),  \ldots, f(\lambda_{n}))$, and $\lambda_i$ is the $i$-th eigenvalue of $\bm A$ for each $i$.
\end{definition}
For simplicity of notation, we sometimes write $f(\bm A)$ instead of $f^{\mathrm{EV}}(\bm A)$ and $f^{\mathrm{SV}}(\bm A)$ when there is no confusion.

\subsection{Lipschitz constant}
\begin{definition}[Lipschitz function\label{def: Lipschitz function}]
    For a function $f:\mb{R} \to \mb{C}$, we say that $f$ is $L$-Lipschitz on $\mathfrak{F} \subseteq{\mb{R}}$, if there exists a constant $L >0$ such that for any $x,y \in \mathfrak{F}$,
    \begin{equation}
       |f(x) - f(y)| \leq  L|x - y|.
    \end{equation}
\end{definition}

\begin{lemma}[Translation preserves Lipschitz continuity\label{lem: parallel lipschitz}]
    For a constant $\gamma \in \mathbb{R}$ and an $L$-Lipschitz function $f:\mb{R} \to \mb{C}$ on $\mathfrak{F} \subseteq{\mb{R}}$, define $g: \mb{R} \to \mb{C}$ by $g(x) = f(\gamma + x)$ for all $x \in \mathbb{R}$.
    Then the function $g$ is $L$-Lipschitz on $\mathfrak{F}_\gamma \coloneq \{x \mid x + \gamma \in \mathfrak{F}\}$.
\end{lemma}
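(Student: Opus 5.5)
The claim is that translation preserves the Lipschitz property, and it follows directly from Definition~\ref{def: Lipschitz function} by a change of variables. The one thing to check is that the new domain $\mathfrak{F}_\gamma$ is exactly what makes the shifted arguments fall inside $\mathfrak{F}$.

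First, fix arbitrary $x,y\in\mathfrak{F}_\gamma$. By the definition of $\mathfrak{F}_\gamma=\{x\mid x+\gamma\in\mathfrak{F}\}$, the shifted points $x'\coloneq x+\gamma$ and $y'\coloneq y+\gamma$ both lie in $\mathfrak{F}$. This membership is the only place where the choice of $\mathfrak{F}_\gamma$ enters.

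Next, write $g(x)-g(y)=f(x')-f(y')$ by the definition of $g$. Because $f$ is $L$-Lipschitz on $\mathfrak{F}$ and $x',y'\in\mathfrak{F}$, we get $|f(x')-f(y')|\le L|x'-y'|$. Since $x'-y'=(x+\gamma)-(y+\gamma)=x-y$, this gives
\begin{equation}
|g(x)-g(y)|\le L|x-y|\qquad (x,y\in\mathfrak{F}_\gamma),
\end{equation}
with the same constant $L>0$, which is the claim.

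No step is a real obstacle. The only care needed is to use the membership from the first step before invoking the Lipschitz bound for $f$, since that bound is assumed only on $\mathfrak{F}$, not on all of $\mathbb{R}$. The statement also holds for complex-valued $f$, because the modulus $|\cdot|$ on $\mathbb{C}$ is exactly what Definition~\ref{def: Lipschitz function} uses.
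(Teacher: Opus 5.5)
Your proof is correct and follows essentially the same route as the paper's: for $x,y\in\mathfrak{F}_\gamma$ the shifted points lie in $\mathfrak{F}$, the Lipschitz bound for $f$ applies, and $\gamma$ cancels in $|(\gamma+x)-(\gamma+y)|=|x-y|$. You also state explicitly the membership check $x+\gamma,\,y+\gamma\in\mathfrak{F}$ before invoking the bound, which the paper's one-line chain of inequalities uses without comment.
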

\begin{proof}
    For all $x,y \in \mathfrak{F}_\gamma$, it holds that
    \begin{equation}
        \begin{aligned}
            |g(x) - g(y)| &= |f(\gamma + x) - f(\gamma + y)| \\
            &\leq L |(\gamma + x) - (\gamma + y)| \\
            & = L |x - y|,
        \end{aligned}
    \end{equation}
    where the inequality follows from Definition~\ref{def: Lipschitz function}.
\end{proof}

\begin{lemma}[Operator-norm Lipschitz bound with a logarithmic factor~\cite{ALEKSANDROV20112741}\label{lem: operator-norm Lipschitz}]
    Let $\bm A, \bm B$ be two Hermitian matrices. Suppose that $f:\mb{R} \to \mb{C}$ is an $L$-Lipschitz function on $\sigma(\bm A) \cup \sigma(\bm B) \subseteq \mathbb{R}$.
    Then, there exists a numerical constant $C > 0$ such that
    \begin{equation}
        \ope{f(\bm A) - f(\bm B)} \leq C(1 + \log(m_{\bm{A},\bm{B}})) L \ope{\bm A - \bm B},
    \end{equation}
    where
    \begin{equation}
        m_{\bm{A},\bm{B}} \coloneq \min\{\card(\sigma(\bm A)), \card(\sigma(\bm B))\}
    \end{equation}
\end{lemma}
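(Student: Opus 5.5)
The statement immediately above is Lemma~\ref{lem: operator-norm Lipschitz}, which the paper imports from Aleksandrov--Peller~\cite{ALEKSANDROV20112741}. I would prove it by reducing it to the finite-dimensional commutator/Schur-multiplier estimate. Write the spectral decompositions $\bm A=\sum_{a\in\sigma(\bm A)} a\,\bm P_a$ and $\bm B=\sum_{b\in\sigma(\bm B)} b\,\bm Q_b$. The starting identity is
\begin{equation}
f(\bm A)-f(\bm B)=\sum_{a,b}\frac{f(a)-f(b)}{a-b}\,\bm P_a(\bm A-\bm B)\bm Q_b ,
\end{equation}
where terms with $a=b$ are read as $0$. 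This holds because $\bm P_a(\bm A-\bm B)\bm Q_b=(a-b)\bm P_a\bm Q_b$, and summing $(f(a)-f(b))\bm P_a\bm Q_b$ over $a,b$ gives $f(\bm A)-f(\bm B)$. Only the values of $f$ on $\sigma(\bm A)\cup\sigma(\bm B)$ enter, so the Lipschitz hypothesis on that set is exactly what is used. The claim thus becomes a bound on the Schur (double operator integral) multiplier with symbol $\phi(a,b)=\frac{f(a)-f(b)}{a-b}$, where $|\phi|\le L$, acting on $\bm X=\bm A-\bm B$ with row index set $\sigma(\bm A)$ and column index set $\sigma(\bm B)$. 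By symmetry (take adjoints and swap $\bm A,\bm B$) I may assume $m\coloneq\card(\sigma(\bm A))\le\card(\sigma(\bm B))$.

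The key step is a $(1+\log m)$ bound on the multiplier norm when one side carries only $m$ distinct points. I would split $\phi=\phi_1+\phi_2$ using a dyadic decomposition. Order the eigenvalues $a_1<\dots<a_m$ of $\bm A$. Group the points $b$ of $\sigma(\bm B)$ according to the gap between consecutive $a$'s that contains them, and at each scale separate the near-diagonal pairs from the far pairs. On far blocks, where $|a-b|$ is comparable to a dyadic scale, the divided difference is a bounded combination of the rank-one-type kernels $f(a)/(a-b)$ and $f(b)/(a-b)$. Each such kernel is a Hilbert-matrix-type triangular truncation, and triangular truncation on an $m$-dimensional side has norm $O(1+\log m)$ (Kwapień--Pełczyński). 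On near blocks, which are essentially diagonal in the $a$ index, the multiplier is controlled by $L$ up to a constant. Summing over the $O(\log m)$ dyadic scales, or equivalently using the Macaev-type triangular projection bound, gives a multiplier norm at most $C(1+\log m)L$. Combined with the identity above, this yields $\ope{f(\bm A)-f(\bm B)}\le C(1+\log m)L\ope{\bm A-\bm B}$.

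The hard part is this last multiplier estimate: obtaining a logarithm only in the smaller spectrum count, with a constant independent of how the spectra interlace. My first attempt would reduce to the case where $f$ is piecewise linear between consecutive points of $\sigma(\bm A)$, which is allowed because only values on the spectra matter and this interpolation preserves the constant $L$. Then I would write $f$ as a sum of at most $m$ ramp functions. Each ramp factors as $\phi(a,b)=L\cdot\bm 1[\text{$a$ and $b$ on opposite sides of a break}]\cdot(\text{bounded Schur multiplier})+\dots$, so the whole operator is a triangular truncation in the $m$ blocks of $\sigma(\bm A)$ applied to bounded multipliers. This gives the $\log m$ factor directly. Since the lemma is cited from~\cite{ALEKSANDROV20112741}, I would ultimately defer to their Theorem, which covers exactly this finite-spectrum case.
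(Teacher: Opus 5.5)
The paper gives no proof of this lemma. It states it and attributes it to Aleksandrov--Peller, so your closing move of deferring to their theorem is exactly the paper's route. Your identity $f(\bm A)-f(\bm B)=\sum_{a\neq b}\frac{f(a)-f(b)}{a-b}\,\bm P_a(\bm A-\bm B)\bm Q_b$ is correct. Your observation that only $f|_{\sigma(\bm A)\cup\sigma(\bm B)}$ enters is what legitimately connects the paper's hypothesis to theirs: interpolate $f$ piecewise linearly between consecutive points of the finite set $\sigma(\bm A)\cup\sigma(\bm B)$ and keep it constant outside. This gives an $L$-Lipschitz function on all of $\mathbb R$ (also for complex values) with the same $f(\bm A)$ and $f(\bm B)$.

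As a self-contained argument, however, your sketch has a genuine gap exactly at the multiplier bound, and several steps fail as written.

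(i) The dyadic decomposition is metric. The number of scales on which $|a-b|$ lives is governed by ratios of distances between spectral points, not by $m$; it is unbounded already for $m=2$. Also, $f(a)/(a-b)$ is not controlled by $L$, because $f$ need not be bounded. So ``summing over $O(\log m)$ scales'' does not follow.

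(ii) Replacing $f$ by its piecewise-linear interpolant $\tilde f$ through $\sigma(\bm A)$ is not justified by ``only values on the spectra matter''. It changes $f$ on $\sigma(\bm B)\setminus\sigma(\bm A)$ and hence changes $f(\bm B)$. This is repairable. The difference $g=f-\tilde f$ vanishes on $\sigma(\bm A)$ and is $2L$-Lipschitz on $\sigma(\bm A)\cup\sigma(\bm B)$. Each $b\in\sigma(\bm B)$ lies within $\|\bm A-\bm B\|_{\mathrm{op}}$ of $\sigma(\bm A)$. Hence $\|g(\bm A)-g(\bm B)\|_{\mathrm{op}}=\|g(\bm B)\|_{\mathrm{op}}\le 2L\|\bm A-\bm B\|_{\mathrm{op}}$.

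(iii) Decisively, the ramp step does not work. For a single ramp $(x-c)_+$, the cross kernel $\frac{a-c}{a-b}$ on $\{a>c>b\}$ is not a uniformly bounded Schur multiplier. Take $a=c+2^i$, $b=c-2^j$, $1\le i,j\le m$. The kernel equals $(1+2^{j-i})^{-1}$, which is the triangular truncation $\mathbf 1[i>j]$ plus a Toeplitz kernel with summable coefficients, so its norm is of order $\log m$. (Uniform boundedness would make $|x|$ operator Lipschitz.) One ramp therefore already costs the whole logarithm. Summing $m$ ramps with slope jumps up to $2L$ by the triangle inequality then only gives $O(mL\log m)$.

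What is actually needed is a single global decomposition $\phi=\Psi_0+\mathcal T(\Psi_1)$. Here $\Psi_0,\Psi_1$ must be Schur multipliers of norm $O(L)$ and $\mathcal T$ a block-triangular truncation with $O(m)$ blocks. That is the real content of the cited theorem, and your sketch asserts it rather than proves it.

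If you want a short rigorous reduction, it suffices to bound $\phi$ as a Schur multiplier on $m\times N$ scalar matrices $\bm X$, which have rank at most $m$. Then $\|\phi\circ\bm X\|_{\mathrm{op}}\le\|\phi\circ\bm X\|_{S_p}\le c_pL\|\bm X\|_{S_p}\le c_pL\,m^{1/p}\|\bm X\|_{\mathrm{op}}$. The middle step is the Schatten-class Lipschitz estimate of Potapov--Sukochev, whose constant is known to be $O(p)$ for large $p$. Choosing $p=\max\{2,\log m\}$ gives the lemma, though this too rests on a deep theorem.

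So present the lemma as a citation, as the paper does, not as a consequence of your sketch.
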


\subsection{Access models}
We introduce the definition of sampling-and-query access oracles, which is widely used in block-encoding and dequantization techniques~\cite{chia2022sampling,gilyen2019quantum,le2025robust,gharibian2022dequantizing}.

\begin{definition}[Query access~\cite{chia2022sampling}]
    For a vector $\bm v \in \mb{C}^n$, let $\mathrm{Q}(\bm v)$ be an oracle that for any $i \in [n]$, takes as input $i \in [n]$ and outputs the value of $\bm v(i)$.
    Similarly, for a matrix $\bm A \in \mb{C}^{m \times n}$, let $\mathrm{Q}(\bm A)$ be an oracle that for any $(i,j) \in [m]\times [n]$, takes as input $(i,j)$ and outputs the value of $\bm A(i,j)$.
    Let $\mathbf{q}(\bm v)$ and $\mathbf{q}(\bm A)$ be the time complexities of one query to the oracles $\mathrm{Q}(\bm v)$ and $\mathrm{Q}(\bm A)$, respectively.
\end{definition}

We also consider a noisy version of the query access oracle, which is an oracle that outputs a value close to the value of the entry of a vector or a matrix.
\begin{definition}[Noisy query access]
    For a vector $\bm v \in \mb{C}^n$ and $\delta \in [0,1]$, let $\mathrm{Q}_{\epsilon,\delta}(\bm v)$ be an oracle that for any $i \in [n]$, takes as input $i$ and outputs a value $\tilde{\bm v}(i)$ such that $|\bm v(i) - \tilde{\bm v}(i)| \leq \epsilon$ with probability at least $1-\delta$ for each query.
    Similarly, for a matrix $\bm A \in \mb{C}^{m \times n}$ and $\delta \in [0,1]$, let $\mathrm{Q}_{\epsilon,\delta}(\bm A)$ be an oracle that for any $(i,j) \in [m]\times [n]$, takes as input $(i,j)$ and outputs a value $\tilde{\bm A}(i,j)$ such that $|\bm A(i,j) - \tilde{\bm A}(i,j)| \leq \epsilon$ with probability at least $1 - \delta$.
    Let $\mathbf{q}_{\epsilon,\delta}(\bm v)$ and $\mathbf{q}_{\epsilon,\delta}(\bm A)$ be the time complexities to query the oracles $\mathrm{Q}_{\epsilon,\delta}(\bm v)$ and $\mathrm{Q}_{\epsilon,\delta}(\bm A)$, respectively.
\end{definition}

\begin{definition}[Sampling and query access to a vector~\cite{chia2022sampling}]
    For a vector $\bm v \in \mb{C}^n$, let $\mathrm{SQ}(\bm v)$ be an oracle that performs the following operations:
    \begin{itemize}
        \item For any $i \in [n]$, takes as input $i$ and outputs the value of $\bm v(i)$.
        \item Outputs an index $i \in [n]$ with probability $\frac{|\bm v(i)|^2}{\|\bm v\|_2^2}$.
        \item Outputs the value of $\|\bm v\|_2$.
    \end{itemize}
    Let $\mathbf{q}(\bm v),\mathbf{s}(\bm v)$, and $\mathbf{n}(\bm v)$ be the time complexities of querying entries, sampling indices, and computing the norm, respectively.
    Let $\mathbf{sq}(\bm v) \coloneq \max(\mathbf{q}(\bm v), \mathbf{s}(\bm v), \mathbf{n}(\bm v))$.
\end{definition}

Similarly, we can define the sampling-and-query access to a matrix by using the sampling-and-query access to vectors.
\begin{definition}[Sampling and query access to a matrix~\cite{chia2022sampling}]
    For a matrix $\bm A \in \mb{C}^{m \times n}$ and a vector $\bm a \in \mathbb{R}^m$ such that $\bm a(i) \coloneq \|A(i,\ast)\|_2$, let $\mathrm{SQ}(\bm A)$ be an oracle that performs the following operations:
    \begin{itemize}
        \item For any $i \in [m]$, acts as $\mathrm{SQ}(\bm A(i,\ast))$,
        \item Acts as $\mathrm{SQ}(\bm a)$,
    \end{itemize}
    The time complexities of querying entries, sampling rows, and computing row norms are denoted by $\mathbf{q}(\bm A) \coloneq \max(\mathbf{q}(\bm A(i,\ast)),\mathbf{q}(a))$, $\mathbf{s}(\bm A) = \max(\mathbf{s}(\bm A(i,\ast)),\mathbf{s}(\bm a))$, and $\mathbf{n}(\bm A) = \mathbf{n}(\bm a)$, respectively.
    Let $\mathbf{sq}(\bm A) \coloneq \max(\mathbf{q}(\bm A), \mathbf{s}(\bm A), \mathbf{n}(\bm A))$.
\end{definition}

We finally introduce the definition of sparse query access to a sparse matrix, which is an oracle that allows us to query an index in the support of each row and column of a matrix, and queries the value of the entry at that index.
The quantum version of this oracle is used to implement the block-encoding of a sparse matrix~\cite{PhysRevLett.103.150502}.
\begin{definition}[Sparse access to a matrix]
    For a matrix $\bm A \in \mb{C}^{m \times n}$ that is row $\alpha_r$-sparse and column $\alpha_c$-sparse, let $\mathrm{Q}_{\mathrm{sp}}(\bm A)$ be an oracle that performs the following operations:
    \begin{itemize}
        \item For any $i \in [m]$ and $l \in [\alpha_r]$, takes as input $(i,l)$ and outputs the column index of the $l$-th non-zero entry in the $i$-th row of $\bm A$, and $0$ if the number of non-zero entries in the $i$-th row is less than $l$,
        \item For any $j \in [n]$ and $l \in [\alpha_c]$, takes as input $(j,l)$ and outputs the row index of the $l$-th non-zero entry in the $j$-th column of $\bm A$, and $0$ if the number of non-zero entries in the $j$-th column is less than $l$,
        \item For any $i \in [m]$ and $j \in [n]$, acts as $\mathrm{Q}(\bm A)$ to output the value of $\bm A(i,j)$,
    \end{itemize}
    where $0$ is a sentinel value not in $[m]$ and $[n]$.
    The time complexity to query the oracle $\mathrm{Q}_{\mathrm{sp}}(\bm A)$ is denoted by $\mathbf{q}_{\mathrm{sp}}(\bm A)$.
\end{definition}

\section{Heavy-coordinate truncation of diagonal data}
In this appendix, we prove the basic truncation lemma used in Appendix~\ref{sec: main theorems}.
The setting is a nonnegative diagonal matrix $\bm{D}$ whose sampling-and-query access is given.
The key observation is that sampling-and-query access to a diagonal matrix samples indices with probability proportional to the squared diagonal entries.
Thus, repeated sampling detects all sufficiently large diagonal entries with high probability.

\begin{lemma}
    \label{lem: low rank approximation of D}
    Let $\bm{D} = \diag(d_1, \ldots, d_n) \in \mathbb{R}^{n \times n}$ be a nonnegative diagonal matrix.
    Given access to $\mathrm{SQ}(\bm{D})$, a truncation threshold $\epsilon > 0$, and a failure probability $\delta \in (0,1)$, there exists a randomized classical algorithm that outputs a classical data structure implementing $\sq(\widetilde{\bm D})$ for a nonnegative diagonal matrix $\widetilde{\bm{D}} = \diag(\tilde{d}_1, \ldots, \tilde{d}_n)$ such that $\supp(\widetilde{\bm{D}}) \subseteq \supp(\bm{D})$, $\rk{\widetilde{\bm{D}}} \leq \frac{\flo{\bm{D}}^2}{\epsilon^2}$, and $\ope{\bm{D} - \widetilde{\bm{D}}} \leq \epsilon $ with probability at least $1 - \delta$, with $\mathbf{sq}({\widetilde{\bm D}}) = O\left(\log(1 + \frac{\flo{\bm{D}}}{\epsilon})\right)$.
    The runtime is 
    \begin{equation}
        \tO\left((\mathbf{sq}(\bm{D}) + 1) \times \frac{\flo{\bm{D}}^2}{\epsilon^2}\log\left(\frac{1}{\delta}\right)\right).
    \end{equation}
\end{lemma}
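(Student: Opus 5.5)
The plan is to sample indices from $\mathrm{SQ}(\bm{D})$, keep exactly the sampled indices whose entries clear the threshold, and store them in a small sampling tree. Concretely, query $\flo{\bm{D}}$ once and draw $R = \lceil \frac{\flo{\bm{D}}^2}{\epsilon^2}\log(\frac{\flo{\bm{D}}^2}{\epsilon^2\delta})\rceil$ independent samples. For each sampled index $i$, query $d_i$. Let $S$ be the set of distinct sampled indices with $d_i \geq \epsilon$, and set $\tilde d_i = d_i$ for $i\in S$ and $\tilde d_i = 0$ otherwise.

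Filtering by $d_i\geq\epsilon$ gives $S\subseteq H_\epsilon=\{i: d_i\geq\epsilon\}$ deterministically. Since $\sum_i d_i^2=\flo{\bm{D}}^2$, we get $\rk{\widetilde{\bm D}}\leq |H_\epsilon|\leq \flo{\bm{D}}^2/\epsilon^2$. The inclusion $\supp(\widetilde{\bm D})\subseteq\supp(\bm D)$ is immediate because every retained entry is positive.

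For the error bound, note that $\bm D-\widetilde{\bm D}$ is diagonal. Its entries are $0$ on $S$ and $d_i$ off $S$, so $\ope{\bm D-\widetilde{\bm D}}=\max_{i\notin S} d_i$. This is at most $\epsilon$ as soon as $H_\epsilon\subseteq S$ (if $d_i=\epsilon$ exactly, either $i$ is caught, or the bound $\leq\epsilon$ still holds). Each $i\in H_\epsilon$ is drawn with probability $p_i\geq \epsilon^2/\flo{\bm{D}}^2$ per sample. The probability that $i$ is missed in all $R$ samples is therefore at most $(1-\epsilon^2/\flo{\bm D}^2)^R\leq \exp(-R\epsilon^2/\flo{\bm D}^2)\leq \epsilon^2\delta/\flo{\bm D}^2$. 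A union bound over the at most $\flo{\bm D}^2/\epsilon^2$ elements of $H_\epsilon$ gives failure probability at most $\delta$. The only subtle point is that the union bound needs the cardinality bound on $H_\epsilon$ above, and the $\log(\flo{\bm D}^2/\epsilon^2)$ term in $R$ is absorbed into the $\tO$, which yields the claimed dependence $\frac{\flo{\bm D}^2}{\epsilon^2}\log(1/\delta)$ up to polylog factors. If $\flo{\bm D}<\epsilon$, then $H_\epsilon=\emptyset$ and the output $\widetilde{\bm D}=0$ works trivially.

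The data structure stores the $s=|S|\leq\flo{\bm D}^2/\epsilon^2$ pairs $(i,d_i)$ in a balanced search tree keyed by $i$, which answers entry queries in $O(\log s)$ and returns $0$ for absent keys. We also store the prefix sums of $d_i^2$ over a fixed ordering of $S$ and the total $\flo{\widetilde{\bm D}}^2=\sum_{i\in S}d_i^2$. Norm queries are then $O(1)$. A sample is produced by drawing a uniform number in $[0,\flo{\widetilde{\bm D}}^2)$ and binary-searching the prefix sums, which returns $i$ with probability $d_i^2/\flo{\widetilde{\bm D}}^2$ exactly. Each operation costs $O(\log(1+s))=O(\log(1+\flo{\bm D}/\epsilon))$, as claimed.

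The total runtime is $R$ samples and $R$ queries, each costing $\mathbf{sq}(\bm D)$, plus $O(R\log R)$ for deduplication and building the tree. This is $\tO((\mathbf{sq}(\bm D)+1)\frac{\flo{\bm D}^2}{\epsilon^2}\log\frac1\delta)$. I expect no real obstacle here; the only care needed is bookkeeping the logarithmic factors against the $\tO$ convention.
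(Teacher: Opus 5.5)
Your proposal is correct and follows essentially the same route as the paper. The paper also uses the sample count $R$, keeps only indices with $d_i\geq\epsilon$ (so $S\subseteq H_\epsilon$ and the rank bound hold deterministically), and bounds each per-index miss probability by $e^{-R\epsilon^2/\flo{\bm D}^2}$ before a union bound over $|H_\epsilon|\leq\flo{\bm D}^2/\epsilon^2$. Your balanced search tree plus prefix sums is an explicit instance of the standard $\ell_2$-sampling structure the paper cites, and it gives the same $O(\log(1+s))$ access cost and $O\bigl(R(\mathbf{sq}(\bm D)+\log R)\bigr)$ total runtime.
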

\begin{proof}
    Initialize $\widetilde{\bm D} = 0$.
    If $\flo{\bm{D}} = 0$, then $\bm D = 0$, and the algorithm returns $\widetilde{\bm{D}}$.
    If $\epsilon \geq \flo{\bm{D}}$, then $\ope{\bm D} \leq \flo{\bm{D}} \leq \epsilon$, and returning $\widetilde{\bm D}$ already satisfies the desired operator-norm bound.
    If the returned matrix $\widetilde{\bm D}$ is zero, the query and norm operations are defined in the obvious way, and the sampling operation may return an arbitrary fixed index.
    
    Therefore, in the following, we assume that $\flo{\bm{D}} > 0$ and $\epsilon < \flo{\bm{D}}$.
    The classical algorithm is as follows:
    \begin{enumerate}
        \item Set $R \coloneq \left\lceil \frac{\flo{\bm D}^2}{\epsilon^2}\ln\left(\frac{\flo{\bm D}^2}{\epsilon^2\delta}\right)\right\rceil$, and $F = 0$, and let $S$ be an empty set.
        \item For $r = 1$ to $R$, sample $i_r \in [n]$ with a probability $d_{i_r}^2/\flo{D}^2$ by using the oracle $\mathrm{SQ}(\bm D)$.
        \item If $i_r$ has not already been inserted into $S$ and $d_{i_r} \geq \epsilon$, then insert $i_r$ into $S$ and set $\tilde{d}_{i_r} \coloneq d_{i_r}$ and $F \mathrel{+}= d_{i_r}^2$ by using the oracle $\sq(\bm D)$.
        \item End for.
        \item $F = \sqrt{F}$.
    \end{enumerate}

    The correctness of the algorithm follows from the fact that $\bm D$ and $\widetilde{\bm D}$ are nonnegative diagonal matrices.
    The operator norm of the difference between the two matrices is bounded by $\ope{\bm D - \widetilde{\bm D}} \leq \max_{i \in [n]}\{|d_i - \tilde{d}_i|\}$.
    Thus, when every index $i$ such that $d_i \geq \epsilon$ is sampled, then $\|\bm D-\widetilde{\bm D}\|_{\mathrm{op}} = \max_{i \in [n]}\{|d_i - \tilde{d}_i|\} \leq  \epsilon$.

    For an index $i \in [n]$ such that $d_i \geq \epsilon$, the probability that the index $i$ is not sampled in the $R$ trials is at most $(1 - d_i^2/\flo{\bm D}^2)^R \leq (1 - \epsilon^2/\flo{\bm D}^2)^R \leq e^{-R\epsilon^2/\flo{\bm D}^2}$.
    And, the number of indices $i \in [n]$ such that $d_i \geq \epsilon$ is at most $\lfloor\frac{\flo{\bm D}^2}{\epsilon^2}\rfloor \leq \frac{\flo{\bm D}^2}{\epsilon^2}$, which corresponds to the rank of the matrix $\widetilde{\bm D}$, i.e., $\rk{\widetilde{\bm D}} \leq \frac{\flo{\bm D}^2}{\epsilon^2}$.
    Therefore, the probability that there exists an index $i \in [n]$ such that $d_i \geq \epsilon$ and is not sampled in the $R$ trials is bounded by the union bound as follows:
    \begin{align}
    \frac{\flo{\bm D}^2}{\epsilon^2} e^{-R\frac{\epsilon^2}{\flo{\bm D}^2}} 
    & \leq \frac{\flo{\bm D}^2}{\epsilon^2} e^{-\frac{\flo{\bm D}^2}{\epsilon^2}\ln\left(\frac{\flo{\bm D}^2}{\epsilon^2\delta}\right)\frac{\epsilon^2}{\flo{\bm D}^2}} \\ \notag
    & \leq \delta.
    \end{align}
    The correctness of the value of the Frobenius norm of the matrix $\widetilde{\bm D}$ is trivial, i.e., $F = \|\widetilde{\bm D}\|_F$.
    
    To implement $\sq(\widetilde{\bm D})$, we use the standard $\ell_2$-sampling data structure used in quantum-inspired algorithms and QRAM-based input models~\cite{kerenidis_et_al:LIPIcs.ITCS.2017.49, 10.1145/3313276.3316310,chia2022sampling}.
    Given the list of nonzero entries $\{i,\tilde d_i: i \in S\}$, this data structure can be constructed in $O(|S| \log(|S|))$ time and $O(|S|)$ space, and supports entry queries, $\ell_2$-sampling, and norm queries with
    \begin{equation}
    \begin{aligned}
        \mathbf{q}(\widetilde{\bm D}) = O(\log(1 + |S|)), \\
        \mathbf{s}(\widetilde{\bm D}) = O(\log(1 + |S|)), \\
        \mathbf{n}(\widetilde{\bm D}) = O(1).
    \end{aligned}
    \end{equation}
    For a diagonal matrix, this vector structure for the diagonal entries implements $\sq(\widetilde{\bm D})$ with the same cost $\mathbf{sq}({\widetilde{\bm D}}) = O\left(\log(1 + \frac{\flo{\bm{D}}}{\epsilon})\right)$.

    The runtime of this algorithm is dominated by the sum of the runtime of sampling $R$ indices from the distribution $\bm D$ in $O(R \times \mathbf{sq}(\bm D))$ time, the runtime of checking whether $d_{i} \geq \epsilon$ for each sampled index $i$ in $O(R \times \mathbf{sq}(\bm D))$ time, the runtime of computing the Frobenius norm $\flo{\widetilde{\bm D}}$ in $O(R \times \mathbf{sq}(\bm D))$ time, and the runtime to construct $\sq(\widetilde{\bm D})$ in $O(|S| \log(|S|))$.
    Therefore, the total time complexity of the algorithm is $O( R \times (\mathbf{sq}(\bm D) + \log(R))) = \tO\left((\mathbf{sq}(\bm{D}) + 1) \frac{\flo{\bm{D}}^2}{\epsilon^2}\log\left(\frac{1}{\delta}\right)\right)$, which yields the conclusion.
\end{proof}

\section{Diagonal truncation method}\label{sec: main theorems}
This appendix provides the main technical result of the paper.
We show that sampling-and-query access to the diagonal matrix $\bm D$, together with query access to the Hermitian matrix $\bm A$, is sufficient to construct a sparse classical representation of the matrix function $f(\gamma \bm{I} + \bm{DAD})$.
The algorithm does not exactly construct sampling-and-query access to the composite matrix $\bm{DAD}$.
Instead, it first applies Lemma~\ref{lem: low rank approximation of D} to replace $\bm D$ by a low-support diagonal matrix $\widetilde{\bm{D}}$, and then, computes the matrix function on the resulting small block.

\begin{algorithm}[tbp]
\SetAlgoLined
\caption{Diagonal truncation and transformation~\label{alg: type1}}
\KwIn{Precision parameters $0<\delta_D,\delta_A < 1$, $\epsilon_D,\epsilon_A > 0$, $\gamma \in \mathbb{R}$, access to $\mathrm{Q}_{\epsilon_A,\delta_A}(\bm A),\mathrm{SQ}(\bm D)$, the norms $\ope{\bm A},\ope{\bm D}$, and an evaluation oracle of $f: \mathbb{R} \to \mathbb{R}$ with the Lipschitz constant $L$.}
\KwOut{A data structure $(S,J_1,\ldots,J_{|S|},\pi,\bm B,c)$ implementing support-sparse query access to $\bm N$ in Theorem~\ref{thm: type1}, i.e., a support set $S \subseteq [n]$, lists $J_1,\ldots,J_{|S|} \subseteq{S}$, a bijection $\pi$, a $|S| \times |S|$ matrix $\bm B$, a constant $c$.}

Construct a matrix $\widetilde{\bm D}$ using $\mathrm{SQ}({\bm D})$ and Lemma~\ref{lem: low rank approximation of D} with threshold $\epsilon_D$ and failure probability $\delta_D$, and let $S = \supp(\widetilde{\bm D})$ and $s = |S|$\;
Create an ordering $\{i_1, i_2, \ldots, i_s\}$ of $S$ and define the bijection $\pi:S \to [s]$ with $\pi(i_t) = t$\;

Construct a Hermitian matrix $\widetilde{\bm M}_S \in \mathbb{C}^{s \times s}$ by querying the upper triangular entries and setting, for $t,u \in [s]$,
\begin{equation}
    \widetilde{\bm M}_S(t,u) \coloneq \gamma \delta_{t,u} + \tilde{d}_{i_t}\mathrm{Q}_{\epsilon_A,\delta_A}(\bm A)(i_t,i_u)\tilde{d}_{i_u}
\end{equation}
and $\widetilde{\bm M}_S(t,u) = \overline{\widetilde{\bm M}_S(u,t)}$\;
Compute an eigenvalue decomposition $\widetilde{\bm M}_S = \bm V \bm \Lambda \bm V^{\dagger}$\;
Compute $\bm B \coloneq \bm V\,f(\bm \Lambda)\,\bm V^\dagger$ by applying $f$ entrywise to the diagonal entries of $\bm \Lambda$\;
Make lists $J_t = \{i_u\in S: \bm B(t,\pi(i_u)) \neq 0\}$ for each $t \in [s]$\;
Compute $f(\gamma)$\;
\Return $(S,J_1,\ldots,J_{|S|},\pi,\bm B,f(\gamma))$.
\end{algorithm}

\begin{theorem}[Diagonal truncation approximation]\label{thm: type1}
    Let $\gamma \in \mathbb{R}$, $\bm{A} \in \mathbb{C}^{n \times n}$ be a Hermitian matrix, and $\bm D = \diag(d_1, d_2,\ldots , d_n) \in \mathbb{R}^{n\times n}$ be a nonnegative diagonal matrix.
    Define $\bm M \coloneq \gamma \bm{I} + \bm{DAD}$.
    Suppose that $f: \mathbb{R} \to \mathbb{R}$ is $L$-Lipschitz on $[\lambda_{\min}(\bm{M}) - \nu, \lambda_{\max}(\bm{M}) + \nu] \cup \{\gamma\}$ for some $\nu \geq 2\ope{\bm A}\ope{\bm D}\epsilon_D + \epsilon_A \flo{\bm D}^2$.
    Given precision parameters $\epsilon_D, \epsilon_A > 0$, failure probabilities $0 < \delta_D,\delta_A < 1$, access to $\mathrm{SQ}(\bm D)$ and $\mathrm{Q}_{\epsilon_A,\delta_A}(\bm A)$, the norms $\ope{\bm A},\ope{\bm D}$, and an evaluation oracle of $f$, Algorithm~\ref{alg: type1} outputs a data structure implementing sparse query access $\mathrm{Q}_{\mathrm{sp}}(\bm N)$ of $\mathbf{q}_{\mathrm{sp}}(\bm N) = O(1 + \log(\flo{\bm D}/\epsilon_D))$ of the following form.

    For a set $S \subseteq \supp(D)$ with $s = |S|$, a bijection $\pi: S\to [s]$, and a matrix $\bm{B} \in \mathbb{C}^{s \times s}$,
    \begin{equation}
        \bm N(i,j) \coloneq
            \begin{cases}
            \bm B(\pi(i),\pi(j)) & (i,j \in S),\\
            f(\gamma) & (i=j \notin S),\\
            0 & \text{otherwise}.
            \end{cases}
    \end{equation}
    With probability at least
    \begin{equation}
        1 - (\delta_D + \frac{s(s+1)}{2}\delta_A),
    \end{equation}
    the output satisfies
    \begin{align}\label{eq: the precision of type1}
        \ope{f(\bm{M}) - \bm N} &\leq C\left(1 + \log\left(1 + s \right)\right)L \\ \notag
        &\quad \times \left( 2\ope{\bm A}\ope{\bm D}\epsilon_D + \epsilon_A \flo{\bm D}^2 \right),
    \end{align}
    where $s \leq \left({\flo{\bm D}}/{\epsilon_D}\right)^2$ and $C$ is a numerical constant in Lemma~\ref{lem: operator-norm Lipschitz}.
    The runtime is
    \begin{equation}
        O(\mathbf{sq}(\bm D) + \mathbf{q}_{\epsilon_A,\delta_A}(\bm A) + 1) \times \tO\left(\left(\frac{\flo{\bm D}}{\epsilon_D}\right)^6\log\left(\frac{1}{\delta_D}\right)\right).
    \end{equation}
\end{theorem}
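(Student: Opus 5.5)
The plan is to condition on a good event, bound $\ope{\bm M-\widetilde{\bm M}}$ there, and then transfer this bound through $f$ using Lemma~\ref{lem: operator-norm Lipschitz}.

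\emph{Good event and size bound.} First run Lemma~\ref{lem: low rank approximation of D} with threshold $\epsilon_D$ and failure probability $\delta_D$. Its success event gives
\begin{equation}
\ope{\bm D-\widetilde{\bm D}}\le\epsilon_D,\qquad s=\rk{\widetilde{\bm D}}\le \flo{\bm D}^2/\epsilon_D^2 ,
\end{equation}
and $0\le \tilde d_i\le d_i$. The algorithm makes $s(s+1)/2$ upper-triangular queries to $\mathrm{Q}_{\epsilon_A,\delta_A}(\bm A)$. A union bound over these queries and the truncation step shows that all of them succeed with probability at least $1-\delta_D-s(s+1)\delta_A/2$. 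On this event, let $\widetilde{\bm A}$ be the Hermitian matrix that agrees with the returned values on $S\times S$ and with $\bm A$ elsewhere. Then $\bm E\coloneq \widetilde{\bm A}-\bm A$ is supported on $S\times S$ with entries bounded by $\epsilon_A$ in modulus. Set $\widetilde{\bm M}=\gamma\bm I+\widetilde{\bm D}\widetilde{\bm A}\widetilde{\bm D}$. By the block identity~\eqref{eq:block}, $f(\widetilde{\bm M})$ equals $\bm B$ on $S$ (after relabeling by $\pi$) and $f(\gamma)\bm I$ on the complement. Hence $f(\widetilde{\bm M})=\bm N$ exactly, and it remains to bound $\ope{f(\bm M)-f(\widetilde{\bm M})}$.

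\emph{Perturbation bound.} Decompose the difference as
\begin{equation}
\bm M-\widetilde{\bm M}=(\bm D-\widetilde{\bm D})\bm A\bm D+\widetilde{\bm D}\bm A(\bm D-\widetilde{\bm D})-\widetilde{\bm D}\bm E\widetilde{\bm D}.
\end{equation}
The first two terms together are at most $2\ope{\bm A}\ope{\bm D}\epsilon_D$, using $\ope{\widetilde{\bm D}}\le\ope{\bm D}$. The last term is the only delicate one. Bounding it via $\ope{\bm E}\le s\epsilon_A$ gives $s\epsilon_A\ope{\bm D}^2$, which is too weak. Instead I would use the Frobenius norm:
\begin{equation}
\ope{\widetilde{\bm D}\bm E\widetilde{\bm D}}\le\flo{\widetilde{\bm D}\bm E\widetilde{\bm D}}=\Bigl(\sum_{i,j\in S}\tilde d_i^2|\bm E(i,j)|^2\tilde d_j^2\Bigr)^{1/2}\le\epsilon_A\flo{\widetilde{\bm D}}^2\le\epsilon_A\flo{\bm D}^2 .
\end{equation}
Together these give $\ope{\bm M-\widetilde{\bm M}}\le\nu$.

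\emph{Applying the Lipschitz lemma.} By Weyl's inequality, $\sigma(\widetilde{\bm M})\subseteq[\lambda_{\min}(\bm M)-\nu,\lambda_{\max}(\bm M)+\nu]$, and $\sigma(\bm M)$ is trivially contained in the same interval. So $f$ is $L$-Lipschitz on $\sigma(\bm M)\cup\sigma(\widetilde{\bm M})$; the point $\{\gamma\}$ is included in the hypothesis, which covers the eigenvalue $\gamma$ coming from the complement block. The matrix $\widetilde{\bm M}$ has at most $s$ eigenvalues from $\widetilde{\bm M}_S$ plus $\gamma$, so $m_{\bm M,\widetilde{\bm M}}\le s+1$. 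Lemma~\ref{lem: operator-norm Lipschitz} then yields~\eqref{eq: the precision of type1}. I expect the main obstacle to be this spectral-inclusion and eigenvalue-counting bookkeeping, together with the Frobenius-based control of the noise term above; the rest is routine.

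\emph{Runtime and sparse-access costs.} The truncation step costs $\tO((\mathbf{sq}(\bm D)+1)(\flo{\bm D}/\epsilon_D)^2\log(1/\delta_D))$. The queries cost $O(s^2(\mathbf q_{\epsilon_A,\delta_A}(\bm A)+\mathbf{sq}(\bm D)))$. Eigendecomposition, forming $\bm B$, and building the lists $J_t$ each take $O(s^3)$ arithmetic operations. Since $s\le(\flo{\bm D}/\epsilon_D)^2$, the total is dominated by $(\flo{\bm D}/\epsilon_D)^6$, which matches the claimed bound. For $\mathrm{Q}_{\mathrm{sp}}(\bm N)$, store $S$ and $\pi$ in a balanced search structure. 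A row query then checks membership of $i$ in $S$ in $O(\log s)$ time: if $i\in S$ it returns the entries of $J_{\pi(i)}$, and otherwise it returns the single index $i$. Entry queries are answered the same way, giving cost $O(1+\log(\flo{\bm D}/\epsilon_D))$.
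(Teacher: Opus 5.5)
Your proposal is correct and follows essentially the same route as the paper. You truncate $\bm D$ via Lemma~\ref{lem: low rank approximation of D}, bound $\ope{\bm M-\widetilde{\bm M}}$ through the three-term decomposition, and apply Weyl's inequality with Lemma~\ref{lem: operator-norm Lipschitz} using at most $s+1$ distinct eigenvalues; you then identify $\bm N=f(\widetilde{\bm M})$ from the block structure and count the cost of truncation, the $s(s+1)/2$ queries, and an $O(s^3)$ eigendecomposition, and your explicit Frobenius-norm argument for $\ope{\widetilde{\bm D}\bm E\widetilde{\bm D}}\le\epsilon_A\flo{\widetilde{\bm D}}^2$ fills in a step the paper only asserts.
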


\begin{proof}
    We first show the approximation guarantee and then, analyze the access structure and the running time.

    By using Lemma~\ref{lem: low rank approximation of D} with parameters $(\epsilon_D,\delta_D)$, Algorithm~\ref{alg: type1} firstly approximates $\bm D$ by $\widetilde{\bm D}$ with 
    \begin{align}\label{eq: approximate D}
        \ope{\bm D - \widetilde{\bm D}} &\leq \epsilon_D, \\
        \rk{\widetilde{\bm D}} &\leq \frac{\flo{\bm D}^2}{\epsilon_D^2}, \label{eq: type1 rank tildeD}
    \end{align}
    with a probability at least $1 - \delta_D$.
    Let
    \begin{equation}\label{eq: type1 s}
        S \coloneq \supp(\widetilde{\bm D}),\qquad s \coloneq |S| = \rk{\widetilde{\bm D}} \leq \flo{\bm D}^2/\epsilon_D^2.
    \end{equation}

    We next consider the error coming from truncation and from noisy queries to $\bm{A}$.
    Let $\widetilde{\bm A}$ denote the Hermitian matrix whose entries on $S \times S$ are the values returned by $\mathrm{Q}_{\epsilon_A, \delta_A}(\bm{A})$ and their Hermitian conjugates.
    Outside $S \times S$, the entries of $\widetilde{\bm A}$ can be chosen arbitrarily, since they vanish after left and right multiplication by $\widetilde{\bm D}$.

    Let
    \begin{equation}
        \Delta_D \coloneq \bm{D} - \widetilde{\bm{D}}, \quad \Delta_A \coloneq \bm{A} -\widetilde{\bm{A}}.
    \end{equation}
    Then, we have
    \begin{equation}\label{eq: DAD - tilde DAD}
        \begin{aligned}
            &\ope{\bm{DAD} - \bm{\widetilde{D}\widetilde{A}\widetilde{D}}} \\
            &\quad= \ope{\Delta_D \bm A\bm D + \widetilde{\bm D}\bm A\Delta_D + \widetilde{\bm D}\Delta_A\widetilde{\bm D}} \\
            &\quad \leq 2\ope{\bm A}\ope{\bm D}\ope{\Delta_D} + \epsilon_A \flo{\widetilde{\bm D}}^2\\
            &\quad \leq 2\ope{\bm A}\ope{\bm D}\epsilon_D + \epsilon_A \flo{\widetilde{\bm D}}^2 \\
            &\quad \leq \nu,
        \end{aligned}
    \end{equation}
    where we used $\ope{\widetilde{\bm{D}}} \leq \ope{ \bm D}$ and $\flo{\widetilde{\bm{D}}}\leq \flo{ \bm D}$.
    Define the truncated matrix
    \begin{equation}
        \widetilde{\bm M} \coloneq \gamma \bm{I} + \widetilde{\bm D} \widetilde{\bm A} \widetilde{\bm D}.
     \end{equation}
    Then, by Weyl's inequality, we have
    \begin{equation}
        \sigma(\widetilde{\bm{M}}) \subseteq [\lambda_{\min}(\bm{M}) - \nu, \lambda_{\max}(\bm{M}) + \nu] \cup \{\gamma\}.
    \end{equation}
    Thus, for a function $g: \mathbb{R} \to \mathbb{R}$ defined by $g(x) = f(\gamma + x)$ for all $x \in \mathbb{R}$, we have
    \begin{equation}
        \ope{f(\bm M) - f(\widetilde{\bm M})} = \ope{g(\bm{DAD}) - g(\widetilde{\bm D}\widetilde{\bm A}\widetilde{\bm D})}
    \end{equation}
    from Lemma~\ref{lem: parallel lipschitz}.
    Since $\widetilde{\bm D}\widetilde{\bm A}\widetilde{\bm D}$ is supported on $S$, we have $\card(\sigma(\widetilde{\bm D}\widetilde{\bm A}\widetilde{\bm D})) \leq 1+s$.
    Therefore, we obtain
    \begin{equation}
        \begin{aligned}
            &\ope{f(\bm M) - f(\widetilde{\bm M})} \\
            &\quad \leq C\left(1 + \log\left(1 + s \right)\right) L \ope{\bm{DAD} - \widetilde{\bm D}\widetilde{\bm A}\widetilde{\bm D}} \\
            &\quad \leq C\left(1 + \log\left(1 + s \right)\right) \\
            &\qquad \times L \left( 2\ope{\bm A}\ope{\bm D}\epsilon_D + \epsilon_A \flo{\widetilde{\bm D}}^2 \right),
            \end{aligned}
    \end{equation}
    where the first inequality follows from Lemma~\ref{lem: operator-norm Lipschitz}, and the second follows from Eq.~\eqref{eq: DAD - tilde DAD}.

    Next, we show that the output of Algorithm~\ref{alg: type1} implements support-sparse query access to $\bm N = f(\widetilde{\bm M})$ with a high probability.
    
    The matrix $\widetilde{\bm M}$ is block diagonal after permuting the indices in $S$ to the first $s$ indices, i.e.,
    \begin{equation}
        \widetilde{\bm{M}} = \bm{P}_S^\top \begin{bmatrix}
                \gamma \bm I_s + (\widetilde{\bm D}\widetilde{\bm A}\widetilde{\bm D})_S & 0 \\
                0 & \gamma I_{n-s} \end{bmatrix} \bm{P}_S,
    \end{equation}
    where $P_S$ is a permutation matrix that moves the indices in $S$ to the first $s$ positions and the remaining indices to the last $n-s$ positions.
    Therefore, by the definition of matrix functions,
    \begin{equation}\label{eq: block diagonal function}
        f(\widetilde{\bm{M}})=  \bm{P}_S^\top \begin{bmatrix}
                f\left(\gamma \bm I_s + (\widetilde{\bm D}\widetilde{\bm A}\widetilde{\bm D})_S \right) & 0 \\
                0 &  f(\gamma) \bm I_{n-s}
            \end{bmatrix} \bm{P}_S 
    \end{equation}
    Then, the matrix $\widetilde{\bm M}_S$ constructed in Algorithm~\ref{alg: type1} is equal to
    \begin{equation}
        \widetilde{\bm M}_S = \gamma \bm{I}_s + (\widetilde{\bm D}\widetilde{\bm A}\widetilde{\bm D})_S.
    \end{equation}
    Therefore, if $\widetilde{\bm M}_S = \bm{V} \bm{\Lambda} \bm{V}^\dagger$ is its eigenvalue decomposition, the matrix $\bm B$ is exactly
    \begin{equation}
        \bm B \coloneq \bm{V} f(\bm{\Lambda}) \bm{V}^\dagger = f\left(\gamma \bm I_s + (\widetilde{\bm D}\widetilde{\bm A}\widetilde{\bm D})_S \right).
    \end{equation}
    Eq.~\eqref{eq: block diagonal function} thus shows that the data structure returned by Algorithm~\ref{alg: type1} represents $\bm{N} = f(\widetilde{\bm M})$, whose entries are given as
    \begin{equation}
        \bm N(i,j) \coloneq
            \begin{cases}
            \bm B(\pi(i),\pi(j)) & (i,j \in S),\\
            f(\gamma) & (i=j \notin S),\\
            0 & \text{otherwise}.
            \end{cases}
    \end{equation}

    The success probability is obtained by combining the success probabilities of the truncation step and the noisy queries to $\bm{A}$.
    The truncation step succeeds with probability at least $1 - \delta_D$.
    Conditioned on the selected support $S$, the construction of the Hermitian matrix $\widetilde{\bm M}_S$ uses $s(s+1)/2$ queries to $\mathrm{Q}_{\epsilon_A,\delta_A}(\bm A)$.
    By the union bound, all these queries are accurate with probability at least
    \begin{equation}
        1 - \frac{s(s+1)}{2}\delta_A.
    \end{equation}
    Therefore, the overall success probability is at least
    \begin{equation}
        (1-\delta_D)\left( 1 - \frac{s(s+1)}{2}\delta_A \right) \geq 1 - (\delta_D + \frac{s(s+1)}{2}\delta_A).
    \end{equation}

    We now describe the sparse query access.
    The query oracle is implemented as follows:
        \begin{itemize}
        \item input $(i,j) \in [n]\times [n]$
        \item if $i \in S$ and $j \in S$, then output $B(\pi(i),\pi(j))$,
        \item else if $i = j$, then output $f(\gamma)$,
        \item else output $0$.
    \end{itemize}
    
    The row sparse access is implemented by $S$, $f(\gamma)$, and $J_1,\ldots,J_{s}$ which are defined as $J_t = \{i_u\in S: (Vf(\Lambda)V^{\dagger})(t,\pi(i_u)) \neq 0\}$ for each $t \in [s]$, as follows:
    \begin{itemize}
        \item input $i \in [n]$ and $l \in [s]$,
        \item if $i \in S$, output $J_{\pi(i)}[l]$ if $l \leq |J_{\pi(i)}|$, and else output $0$ if $l > |J_{\pi(i)}|$, 
        \item if $i \notin S$, output $i$ if $l = 1$ and $f(\gamma) \neq 0$, and $0$ otherwise,
    \end{itemize}
    where $J_t[l]$ is the $l$-th entry in the list $J_t$.
    The column sparse access is analogously implemented because $N$ is Hermitian whenever $f$ is a real-valued function on the relevant spectrum.
    These runtimes are dominated by the runtimes of checking whether $i\in S$ and computing the bijection $\pi$, which are $O(\log(s))$ time by constructing a binary search tree whose nodes store $(i,\pi(i))$ for $i \in S$.

    Finally, we analyze the running time.
    The runtime of Algorithm~\ref{alg: type1} is dominated by the runtime of constructing $\widetilde{\bm D}$ by using Lemma~\ref{lem: low rank approximation of D}, which is $\tO\left((\mathbf{sq}(\bm D) + 1) \frac{\flo{\bm D}^2}{\epsilon_D^2}\log\left(\frac{1}{\delta_D}\right) \right)$, the runtime of constructing $\widetilde{\bm M}_S$ by using $\mathrm{Q}_{\epsilon_A, \delta_A}(\bm A)$, which is $O(s^2 \mathbf{q}_{\epsilon_A,\delta_A}(\bm A))$, and the runtime of computing the eigenvalue decomposition of $\widetilde{\bm M}_S$ and computing $\bm V f(\bm \Lambda) \bm V^\dagger$, which is $O(s^3)$.
    Therefore, due to Eq.~\eqref{eq: type1 s}, the total runtime of Algorithm~\ref{alg: type1} is 
    \begin{equation}
        \begin{aligned}
            &\tO\left((\mathbf{sq}(\bm D) + 1) \frac{\flo{\bm D}^2}{\epsilon_D^2}\log\left(\frac{1}{\delta_D}\right) \right)\\
            &\qquad + O\left(s^2 \mathbf{q}_{\epsilon_A,\delta_A}(A) + s^3\right)\\
            &\quad = O(\mathbf{sq}(\bm D) + \mathbf{q}_{\epsilon_A,\delta_A}(\bm A) + 1) \\
            &\qquad \times \tO\left(\left(\frac{\flo{\bm D}}{\epsilon_D}\right)^6\log\left(\frac{1}{\delta_D}\right)\right),
        \end{aligned}
    \end{equation}
    which completes the proof of Theorem~\ref{thm: type1}.
\end{proof}

The factor $C(1+\log(1+s))L$ in Eq.~\eqref{eq: the precision of type1}
arises from a general operator-norm Lipschitz bound and therefore yields a
uniform error estimate for arbitrary Lipschitz functions $f$.
For special functions $f$ for which stronger operator-norm Lipschitz bounds are
available, the resulting approximation error can be bounded more sharply.

\begin{remark}[Relation to sampling and query access to the transformed matrix]
    Theorem~\ref{thm: type1} should not be interpreted as constructing sampling-and-query access to the full transformed matrix $\bm{DAD}$.
    Standard quantum-inspired dequantization methods often assume sampling access to the matrix to be transformed~\cite{chia2022sampling,10.1145/3313276.3316310}.
    On the other hand, in the present input model, we are given only query access $\mathrm{Q}(\bm A)$ to the Hermitian matrix $\bm{A}$.
    And thus, sampling access to $\bm{DAD}$ is not assumed.

    The key point of Theorem~\ref{thm: type1} is that such access is unnecessary for operator-norm approximation of $f\left(\gamma \bm I + \bm{DAD} \right)$.
    Algorithm~\ref{alg: type1} replaces $\bm D$ by a low-support diagonal matrix $\widetilde{\bm D}$ such that $\| \bm D - \widetilde{\bm D} \|_\mathrm{op} \leq \epsilon_D$ in~\eqref{eq: approximate D}.
    In fact, once the support $S = \supp(\widetilde{\bm D})$ and $\widetilde{\bm D}$ are known, one can also construct sampling-and-query access to the approximated matrix $\widetilde{\bm D} \bm A \widetilde{\bm D}$ by using only $\mathrm{Q}(\bm A)$, since the $s = |S|$ is sufficiently small as in~\eqref{eq: type1 s}.
    Thus, the sampling access to the true matrix $\bm{DAD}$ cannot be efficiently constructed only with $\mathrm{SQ}(\bm D)$ and $\mathrm{Q}(\bm A)$ in general, but we avoid this stronger requirement by constructing a support-sparse approximation to the matrix to be transformed.
\end{remark}

\section{Application to learning with optimized random features}\label{sec: optimized random features}
We now apply the diagonal-truncation theorem to the quantum algorithm for learning with optimized random features~\cite{10.5555/3495724.3496871}.
The relevant quantum subroutine applies a matrix inverse to a diagonally weighted kernel matrix, which is not covered by the previous dequantization techniques~\cite{10.1145/3313276.3316310,chia2018quantuminspiredsublinearclassicalalgorithms,chia_et_al:LIPIcs:2020:13391,jethwani_et_al:LIPIcs.MFCS.2020.53,Gilyen2022improvedquantum,chia2022sampling,le2025robust,doi:10.1137/1.9781611977912.86,gharibian2022dequantizing}.
Although this matrix to be transformed is not given with sampling-and-query access, its diagonal part is given through the oracle of an empirical distribution, and the kernel is specified through a kernel-description oracle.
This is precisely the access model addressed by Theorem~\ref{thm: type1}.

\subsection{Learning with optimized random features}
We consider supervised learning with training examples
$\{(x_i,y_i)\}_{i=1}^N \subset \mathbb{R}^D \times \mathbb{R}$ drawn i.i.d.\
from an unknown distribution by using the kernel method~\cite{scholkopf2002learning}.
The target is to learn a predictor by using random features for a shift-invariant kernel.
A random-feature method approximates a kernel function by mapping each input $x \in \mathbb{R}^D$ to a finite-dimensional feature vector whose entries are randomly sampled basis functions.
For shift-invariant kernels, this construction is based on the Fourier representation of the kernel, so that the predictor can be trained as a linear model in the random-feature space~\cite{NIPS2007_013a006f,NIPS2008_0efe3284}.
In learning with optimized random features, instead of sampling features from the original spectral measure of the kernel, one samples them from an importance-weighted probability distribution designed to reduce the number of required features while preserving the approximation quality of kernel ridge regression~\cite{JMLR:v18:15-178}.
See \cite{NIPS2007_013a006f,NIPS2008_0efe3284,10.5555/3495724.3496871} for more details.
In the paper~\cite{10.5555/3495724.3496871}, the authors show that learning with optimized random features can be done efficiently by using a quantum subroutine to prepare or sample optimized random features efficiently.

\subsection{The subroutine of learning with optimized random features}
The core subroutine in learning with optimized random features is inversion of a high-dimensional matrix by using QSVT techniques, which is not covered by the previous dequantization techniques~\cite{10.1145/3313276.3316310,chia2018quantuminspiredsublinearclassicalalgorithms,chia_et_al:LIPIcs:2020:13391,jethwani_et_al:LIPIcs.MFCS.2020.53,Gilyen2022improvedquantum,chia2022sampling,le2025robust,doi:10.1137/1.9781611977912.86,gharibian2022dequantizing}.

We summarize the setting of this subroutine, following the notation in~\cite{10.5555/3495724.3496871}.
In the paper~\cite{10.5555/3495724.3496871}, the authors consider the problem by discretizing the input domain of real-valued space $\mathbb{R}^D$ by $G \in \mathbb{N}$, and thus, the input domain is defined as $\mathcal{X}=\{0,1,\ldots,G-1\}^D$, and we let $n = |\mathcal{X}|$, i.e., $n = G^D$.
Let $\bm F_D \in \mathbb{C}^{n \times n}$ be the $D$-dimensional discrete Fourier transform on $\mathcal{X}$, i.e.,
\begin{equation}
    \bm F_D(x,y) \coloneq \frac{1}{\sqrt{G^D}} \exp\left(- \frac{2\pi i x \cdot y}{G}\right) \quad (x,y \in \mathcal{X}),
\end{equation}
where $x \cdot y$ is the inner product of $x$ and $y$ over $\mathbb{Z}_G$.
We are given a nonnegative function $Q^{(\tau)}:\mathcal{X}\to\mathbb{R}_{\ge 0}$ and an empirical probability distribution $\hat q^{(\rho)}:\mathcal{X}\to\mathbb{R}_{\ge 0}$, i.e., $\sum_{x \in \mathcal{X}} \hat q^{(\rho)}(x) = 1$.
By a slight abuse of notation, we also define diagonal matrices $\bm Q^{(\tau)}, \hat{\bm q}^{(\rho)} \in \mathbb{R}^{n \times n}$ as
\begin{equation}
\bm Q^{(\tau)}\coloneq \sum_{x\in \mathcal{X}} Q^{(\tau)}(x) \ketbra{x},
\qquad
\hat{\bm q}^{(\rho)} \coloneq \sum_{x\in \mathcal{X}} \hat q^{(\rho)}(x) \ketbra{x},
\end{equation}
and the associated kernel matrix
\begin{equation}\label{eq: k}
\bm k \coloneq \bm F_D^\dagger \bm Q^{(\tau)} \bm F_D,
\end{equation}
and its $(0,0)$-th entry $\bm k(0,0) = \Omega(1)$, where $\bm k$ is a positive semidefinite real symmetric matrix since $\bm k$ is defined by the kernel matrix (see~\cite{10.5555/3495724.3496871} for further details of the settings).

We are given query access $\mathrm{Q}(\bm Q^{(\tau)})$ and its operator norm 
\begin{equation}\label{eq: Q_max}
    Q^{(\tau)}_{\max} = \ope{\bm Q^{(\tau)}}
\end{equation}
of $\bm Q^{(\tau)}$.
Also, in the paper~\cite{10.5555/3495724.3496871}, the authors assume that a data structure implementing a quantum oracle $\mathcal{O}_\rho$ for preparing the quantum state is given,
\begin{equation}
    \mathcal{O}_\rho(\ket{0}) = \sum_{x\in\mathcal{X}} \sqrt{\hat q^{(\rho)}(x)}\ket{x},
\end{equation}
which is equivalent to having query access to $\mathrm{SQ}(\sqrt{\hat{\bm q}^{(\rho)}})$, so we assume that we are given access to $\mathrm{SQ}(\sqrt{\hat{\bm q}^{(\rho)}})$.
Then, we define the distribution $\mathcal{D}_\epsilon$ to be sampled as follows.
\begin{definition}[Optimized probability distribution\label{def: Optimized probability distribution}]
    For $\epsilon>0$, define the optimized probability distribution $\mathcal{D}_\epsilon$ over $\mathcal{X}$ by
    \begin{equation}
    \mathcal{D}_\epsilon(x)\propto
    \Braket{x \left|
    \sqrt{\bm Q^{(\tau)}}\,\bm F_D^\dagger \sqrt{\hat{\bm q}^{(\rho)}}
    \bm{\Sigma}_\epsilon^{-1}
    \sqrt{\hat{\bm q}^{(\rho)}}\,\bm F_D\sqrt{\bm Q^{(\tau)}}
    \right| x}.
    \end{equation}
    where $\bm{\Sigma}_\epsilon$ is defined by
    \begin{equation}
       \bm \Sigma_\epsilon \coloneq \epsilon \bm I + \sqrt{\hat{\bm q}^{(\rho)}}\,\bm k\,\sqrt{\hat{\bm q}^{(\rho)}}.
    \end{equation}
\end{definition}

The task of sampling optimized random features is to approximately sample from $\mathcal{D}_\epsilon$.
The quantum algorithm in~\cite{10.5555/3495724.3496871} achieves this sampling task within the error $\dtv(\mathcal{D}_\epsilon, \tilde{\mathcal{D}}) \leq \delta$ for $\delta \in (0,1]$ under the above access assumptions with a runtime
\begin{equation}\label{eq: quantum runtime}
    \begin{aligned}
        &O\left(D\log(G)\log\log(G) + \mathbf{q}(\bm Q^{(\tau)}) + \mathbf{sq}\left(\sqrt{\hat{\bm q}^{(\rho)}}\right)\right) \\
        &\quad \times \tO\left(\frac{Q^{(\tau)}_{\max}}{\epsilon}\polylog\left(\frac{1}{\delta}\right)\right).
    \end{aligned}
\end{equation}

\subsection{Dequantization of sampling from the optimized probability distribution}
By using Theorem~\ref{thm: type1}, we construct a randomized classical algorithm for sampling from $\mathcal{D}_\epsilon$ with a runtime polynomially related to the quantum algorithm in~\cite{10.5555/3495724.3496871} except in the accuracy parameter $\delta$.
In particular, we show the following theorem.
\begin{theorem}\label{thm: Rejection sampling from approximated distribution}
    Under the setting described above, given $0<\epsilon\leq \min\left\{Q_{\max}^{(\tau)},\sqrt{Q_{\max}^{(\tau)}}\right\}$, $\delta \in (0,1)$, $Q^{(\tau)}_{\max} > 0$ in Eq.~\eqref{eq: Q_max}, and access to the oracles $\mathrm{Q}(\bm Q^{(\tau)})$ and $\mathrm{SQ}(\sqrt{\hat{\bm q}^{(\rho)}})$, there exists a randomized classical algorithm that samples from the distribution ${\mathcal{D}}'$ whose total variation distance from the optimized probability distribution $\mathcal{D}_\epsilon$ in Definition~\ref{def: Optimized probability distribution} is at most $\delta$, i.e.,
    \begin{align}
        \dtv(\mathcal{D}_\epsilon,\mathcal{D}') \leq \delta,
    \end{align}
    with runtime
    \begin{equation}
        \begin{aligned}
            T_{\text{total}} &= O\left(D\;\polylog(G) + \mathbf{q}(\bm Q^{(\tau)}) + \mathbf{sq}(\sqrt{\hat{\bm q}^{(\rho)}}) \right) \\
            &\quad \times \tO\left(\left(\frac{Q^{(\tau)}_{\max}}{\epsilon}\right)^{18} \frac{1}{\delta^6}\right).
         \end{aligned}
    \end{equation}
\end{theorem}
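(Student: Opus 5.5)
We plan to instantiate Theorem~\ref{thm: type1} with $\bm D=\sqrt{\hat{\bm q}^{(\rho)}}$, $\bm A=\bm k$, $\gamma=\epsilon$ and $f(x)=1/x$, and then sample from the resulting approximate weights by rejection sampling.

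\emph{Applying the theorem.} Since $\bm k\succeq 0$, $\lambda_{\min}(\bm\Sigma_\epsilon)\ge\epsilon$. We will take $\nu\le\epsilon/2$, so $f$ is $4/\epsilon^2$-Lipschitz on $[\epsilon/2,\infty)$. Here $\flo{\bm D}=1$, $\ope{\bm D}\le 1$ and $\ope{\bm k}=Q^{(\tau)}_{\max}$. We therefore choose $\epsilon_D=\Theta(\eta\epsilon^2/Q_{\max}^{(\tau)})$ and $\epsilon_A=\Theta(\eta\epsilon^2)$, where $\eta$ is the target operator-norm error for $\bm N\approx\bm\Sigma_\epsilon^{-1}$ up to the logarithmic factor.

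The noisy entry oracle for $\bm k$ comes from
\[
\bm k(x,y)=G^{-D}\sum_z Q^{(\tau)}(z)e^{2\pi i z\cdot(x-y)/G}.
\]
This is the mean of the bounded random variable $Q^{(\tau)}(z)e^{2\pi i z\cdot(x-y)/G}$ with $z$ uniform over $\mathcal X$. By Hoeffding's inequality (Lemma~\ref{lemma: Hoeffding's inequality}), $O((Q_{\max}^{(\tau)}/\epsilon_A)^2\log(1/\delta_A))$ samples suffice, each costing $O(D\polylog G+\mathbf q(\bm Q^{(\tau)}))$. We set $\delta_A$ small enough that $s^2\delta_A=O(\delta)$.

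\emph{Approximating the weights.} Write
\[
w(x)=\bra{v_x}\bm\Sigma_\epsilon^{-1}\ket{v_x},\qquad v_x=\sqrt{Q^{(\tau)}(x)}\,\sqrt{\hat{\bm q}^{(\rho)}}\,\bm F_D\ket{x},
\]
and define $w'(x)=\bra{v_x}\bm N\ket{v_x}$. Then $|w-w'|\le\ope{\bm\Sigma_\epsilon^{-1}-\bm N}\,\|v_x\|^2$ with $\|v_x\|^2=Q^{(\tau)}(x)\,G^{-D}$. Hence $\sum_x|w-w'|\le\eta' G^{-D}\sum_x Q^{(\tau)}(x)=\eta'\,\bm k(0,0)$, where $\eta'$ denotes the realized operator-norm error.

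Conversely, $\bm\Sigma_\epsilon^{-1}\succeq(\epsilon+Q_{\max}^{(\tau)})^{-1}\bm I$, so $\sum_x w(x)\ge \bm k(0,0)/(2Q^{(\tau)}_{\max})$ when $\epsilon\le Q_{\max}^{(\tau)}$. This is where the assumption $\bm k(0,0)=\Omega(1)$ enters. Therefore $d_{\mathrm{TV}}\le \sum|w-w'|/\sum w=O(\eta' Q^{(\tau)}_{\max})$, and it suffices to take $\eta'=\Theta(\delta\epsilon/Q_{\max}^{(\tau)})$. We may also clip $w'$ to be nonnegative, which can only reduce the error.

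Using the structure of $\bm N$,
\[
w'(x)=\frac{Q^{(\tau)}(x)}{\epsilon G^D}+Q^{(\tau)}(x)\,u_x^\dagger\bigl(\bm B-\epsilon^{-1}\bm I\bigr)u_x,\qquad u_x(t)=\sqrt{\hat q^{(\rho)}(i_t)}\,G^{-D/2}e^{-2\pi i\, i_t\cdot x/G}.
\]
Each $w'(x)$ can thus be evaluated in $O(s^2 D)$ time.

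\emph{Sampling without enumerating $\mathcal X$.} We use rejection sampling from a proposal in which the two terms are sampled separately. The first term is proportional to $Q^{(\tau)}(x)$ itself. The second term is a quadratic form over $s$ Fourier characters; it can be bounded by $Q^{(\tau)}(x)\,\ope{\bm B-\epsilon^{-1}\bm I}\,\|u_x\|^2\le 2\epsilon^{-1}Q^{(\tau)}(x)G^{-D}$. Both terms are therefore dominated by a multiple of the distribution $p(x)\propto Q^{(\tau)}(x)$.

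The main obstacle is sampling from $p$: query access $\mathrm Q(\bm Q^{(\tau)})$ gives no sampling access. The first attempt is to propose $x$ uniformly from $\mathcal X$ and accept with probability $w'(x)/(c\,Q^{(\tau)}_{\max}G^{-D}/\epsilon)$ for a constant $c$. The acceptance rate is $\sum_x w'(x)\,\epsilon/(cQ_{\max}^{(\tau)})\gtrsim \epsilon\,\bm k(0,0)/(Q_{\max}^{(\tau)})^2$, again using $\bm k(0,0)=\Omega(1)$, so the expected number of trials is $O((Q_{\max}^{(\tau)})^2/\epsilon)$. We will stop after $O((Q_{\max}^{(\tau)})^2\epsilon^{-1}\log(1/\delta))$ trials and charge the remaining failure probability to the total-variation budget. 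If this rate turns out to be too weak, the fallback is to use the normalization of $\bm k(0,0)$ to bound $Q_{\max}^{(\tau)}$ relative to its mean.

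\emph{Runtime.} Collecting costs, $s\le(\flo{\bm D}/\epsilon_D)^2$ with $\epsilon_D=\tilde\Theta(\delta\epsilon^3/(Q_{\max}^{(\tau)})^2)$. The theorem's $(1/\epsilon_D)^6$ term, multiplied by the Hoeffding cost $(Q_{\max}^{(\tau)}/\epsilon_A)^2$ per entry, gives a polynomial in $Q_{\max}^{(\tau)}/\epsilon$ and $1/\delta$. We expect the exponents to come out at roughly $18$ and $6$, using $\epsilon\le\sqrt{Q_{\max}^{(\tau)}}$ to merge the powers of $\epsilon$ into powers of $Q_{\max}^{(\tau)}/\epsilon$.
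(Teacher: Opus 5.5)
Your proposal follows essentially the paper's proof. It makes the same instantiation of Theorem~\ref{thm: type1} ($\bm D=\sqrt{\hat{\bm q}^{(\rho)}}$, $\bm A=\bm k$, $\gamma=\epsilon$, $f(x)=x^{-1}$) and uses the same Hoeffding estimator for $\bm k$ from uniformly sampled frequencies. It relies on the same bounds $|w(x)-\tilde w(x)|\le\ope{\bm\Sigma_\epsilon^{-1}-\bm N}\,Q^{(\tau)}(x)/G^D$ and $\bm\Sigma_\epsilon^{-1}\succeq(Q^{(\tau)}_{\max}+\epsilon)^{-1}\bm I$, evaluates $\tilde w(x)$ through the same $s\times s$ block plus complement mass, and uses the same uniform-proposal rejection sampler with $O((Q^{(\tau)}_{\max})^2/\epsilon)$ trials. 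The only real difference is that you keep the $C(1+\log(1+s))$ factor, which is absorbed into $\tO$, whereas the paper substitutes the resolvent bound $\ope{\bm X^{-1}-\bm Y^{-1}}\le\ope{\bm X-\bm Y}/a^2$.

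One parameter slip needs fixing. From your own bound $\dtv=O(\eta' Q^{(\tau)}_{\max})$, the target must be $\eta'=\Theta(\delta/Q^{(\tau)}_{\max})$, not $\Theta(\delta\epsilon/Q^{(\tau)}_{\max})$. Your choice only gives $\dtv=O(\delta\epsilon)$, which does not yield $\dtv\le\delta$ when $1<\epsilon\le\sqrt{Q^{(\tau)}_{\max}}$. The corrected choice $\epsilon_D=\tilde\Theta(\delta\epsilon^2/(Q^{(\tau)}_{\max})^2)$ matches the paper's and keeps the runtime within the stated $(Q^{(\tau)}_{\max}/\epsilon)^{18}\delta^{-6}$. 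Also, $\bm k(0,0)$ cancels in the TV ratio; the assumption $\bm k(0,0)=\Omega(1)$ is needed only for the acceptance rate, and the fallback you mention is unnecessary.
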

The key difficulty of this sampling task is to compute the inverse of the inner matrix in Definition~\ref{def: Optimized probability distribution}
\begin{equation}\label{eq: inner matrix in optimized probability distribution}
    \left(\epsilon \bm I + \sqrt{\hat{\bm q}^{(\rho)}}\,\bm k\,\sqrt{\hat{\bm q}^{(\rho)}}\right)^{-1}.
\end{equation}
The main idea of our dequantization is to use Theorem~\ref{thm: type1} to compute the matrix in Eq.~\eqref{eq: inner matrix in optimized probability distribution}.
However, applying Theorem~\ref{thm: type1} requires an efficient implementation of the noisy query oracle $\mathrm{Q}_{\epsilon_k,\delta_k}(\bm k)$, which is not supplied directly.
Therefore, we show that we can construct an algorithm that acts as a query access oracle to a matrix $\widetilde{\bm k}$ such that $|\bm k(\alpha,\beta) - \widetilde{\bm k}(\alpha,\beta)|$ is small with a high probability by using $\mathrm{Q}(\bm Q^{(\tau)})$.
Then, we can apply Theorem~\ref{thm: type1} to approximate a matrix $(\epsilon \bm I + \sqrt{\hat{\bm q}^{(\rho)}}\,\bm k\,\sqrt{\hat{\bm q}^{(\rho)}})^{-1}$, by using $\mathrm{Q}(\bm Q^{(\tau)})$ and $\mathrm{SQ}(\sqrt{\hat{\bm q}^{(\rho)}})$, and thus, we can compute the optimized probability distribution $\mathcal{D}_\epsilon$ approximately and sample from it by using rejection sampling.
The details of the algorithm and the proof of Theorem~\ref{thm: Rejection sampling from approximated distribution} are given in the next appendix.

The runtime of the QML algorithm in~\eqref{eq: quantum runtime} depends only polylogarithmically on $1/\delta$, whereas that of our algorithm in Theorem~\ref{thm: Rejection sampling from approximated distribution} depends polynomially.
But the information extraction from quantum states basically requires polynomial costs in the desired accuracy.
This problem is discussed in the quantum-inspired dequantization field~\cite{10.1145/3313276.3316310,chia2022sampling}.

\section{Proof of Theorem~\ref{thm: Rejection sampling from approximated distribution}\label{sec: proof of Rejection sampling from approximated distribution}}
In this appendix, we give the proof of Theorem~\ref{thm: Rejection sampling from approximated distribution} by using Algorithm~\ref{alg: Rejection sampling from approximated distribution} with Theorem~\ref{thm: type1}.
The main idea of our algorithm consists of using Theorem~\ref{thm: type1} to compute the matrix in Eq.~\eqref{eq: inner matrix in optimized probability distribution} approximately, and then using the rejection sampling method to sample from a distribution close to $\mathcal{D}_\epsilon$.
However, the query access to the kernel matrix $\bm k$ is not given.
Therefore, we first show that we can approximately compute each entry of $\bm k$ by using $\mathrm{Q}(\bm Q^{(\tau)})$ with a high probability, and thus, we can construct a query access oracle to a matrix $\widetilde{\bm{k}}$ such that $|\bm{k}(\alpha,\beta) - \widetilde{\bm{k}}(\alpha,\beta)|$ is small with a high probability.

The matrix $\bm{k}$ in Eq.~\eqref{eq: k} is defined as the symmetric matrix, i.e., $\bm{k}(\alpha,\beta) = \bm{k}(\beta,\alpha)$ for $\alpha,\beta \in \mathcal{X}$, where
\begin{align}
    \bm{k}(\alpha,\beta) &= \Braket{\alpha \left|\bm{F}_D^\dagger \bm Q^{(\tau)}\bm{F}_D\right| \beta}\\
    &= \frac{1}{G^D}\sum_{y\in \mc{X}}Q^{(\tau)}(y)\omega^{y\cdot(\alpha - \beta)},
\end{align}
where $\omega = \exp\left(-\frac{2\pi i}{G}\right)$ is the primitive $G$-th root of unity.
Therefore, we have
\begin{align}
    \bm{k}(\alpha,\beta) &= \frac{1}{2}\left(\bm{k}(\alpha,\beta) + \bm{k}(\beta,\alpha)\right)\\
    &= \frac{1}{G^D}\sum_{y\in \mc{X}}Q^{(\tau)}(y)\frac{\omega^{y\cdot(\alpha - \beta)} + \omega^{y\cdot(\beta - \alpha)}}{2}\\
    &= \frac{1}{G^D}\sum_{y\in \mc{X}}Q^{(\tau)}(y)\cos\left(\frac{2\pi}{G}y\cdot(\alpha - \beta)\right).
\end{align}

In the following, for simplicity of analysis, we omit the time complexity of the precision term in computing the cosine function $\cos\left(\frac{2\pi}{G} g \right)$ for each entry $g \in \{0,1,\dots, G-1\}$, i.e., in fact, it takes $\polylog(G) \times \tO(\log(1/\epsilon))$ time to compute $\cos\left(\frac{2\pi}{G} g \right)$ with an additive error at most $\epsilon$ using the methods~\cite{brent1976fast,haible1998fast,brent2010modern}, but we alternatively use the time complexity $\polylog(G)$ since the time overhead is polylogarithmic in $Q^{(\tau)}_{\max}/\epsilon$, and thus we let its overhead be absorbed into the $\tO$ notation.

For each entry in the matrix $\bm{k}(\alpha, \beta)$, we estimate it by using the uniformly random sampling over $\mc{X}$ and the oracle $\mathrm{Q}(\bm Q^{(\tau)})$, and the error and the success probability are bounded using Hoeffding's inequality in Lemma~\ref{lemma: Hoeffding's inequality} as follows.
\begin{lemma}\label{lemma: matrix element estimation}
    Given the access to the oracle $\mathrm{Q}(\bm Q^{(\tau)})$, the parameters $0 < \epsilon, \delta < 1$, and indices $\alpha, \beta \in \mc{X}$, there exists a randomized classical algorithm that implements $\mathrm{Q}(\widetilde{\bm{k}})$ of $\widetilde{\bm{k}}$ such that the value $\widetilde{\bm{k}}(\alpha, \beta)$ satisfies
    \begin{align}
        \left|\bm{k}(\alpha,\beta) - \widetilde{\bm{k}}(\alpha, \beta)\right| \leq \epsilon,
    \end{align}
    with probability at least $1 - \delta$ within the time complexity
    \begin{equation}
    O\left(D \; \polylog(G) + \mathbf{q}(\bm Q^{(\tau)}) \right) \times\tO \left(\left(\frac{ Q^{(\tau)}_\text{max} }{\epsilon}\right)^2 \ln\left(\frac{1}{\delta}\right)\right)
    \end{equation}
\end{lemma}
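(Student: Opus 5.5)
The plan is a Monte Carlo estimator built on the cosine representation derived just above the statement:
\begin{equation}
\bm k(\alpha,\beta)=\mathbb{E}_{y\sim \mathrm{Unif}(\mathcal X)}\left[Q^{(\tau)}(y)\cos\left(\tfrac{2\pi}{G}y\cdot(\alpha-\beta)\right)\right].
\end{equation}
Since $\frac{1}{G^D}\sum_{y}$ is exactly the uniform average over $\mathcal X$, the algorithm draws $T$ independent uniform samples $y_1,\ldots,y_T\in\mathcal X$. This costs $O(D\log G)$ random bits each, i.e.\ $D$ independent uniform coordinates in $\{0,\ldots,G-1\}$.

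For each sample, the algorithm does three things:
\begin{itemize}
\item it queries $Q^{(\tau)}(y_t)$ via $\mathrm{Q}(\bm Q^{(\tau)})$;
\item it computes $g_t = y_t\cdot(\alpha-\beta) \bmod G$ with $D$ modular multiply-adds on $O(\log G)$-bit integers, which takes $D\,\polylog(G)$ time;
\item it evaluates $Z_t = Q^{(\tau)}(y_t)\cos(2\pi g_t/G)$.
\end{itemize}
The cosine evaluation is charged $\polylog(G)$, with the precision overhead absorbed into $\tO$ as stipulated above. The output is $\widetilde{\bm k}(\alpha,\beta)=\frac1T\sum_t Z_t$. To make $\widetilde{\bm k}$ a well-defined symmetric matrix, the estimate can be memoized or computed once per unordered pair $\{\alpha,\beta\}$. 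This is harmless, since Algorithm~\ref{alg: type1} only queries upper-triangular entries anyway.

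For accuracy, note that $0\le Q^{(\tau)}(y)\le Q^{(\tau)}_{\max}$ and $|\cos|\le1$, so each $Z_t\in[-Q^{(\tau)}_{\max},Q^{(\tau)}_{\max}]$ and $\mathbb{E}[Z_t]=\bm k(\alpha,\beta)$. The two-sided bound in Lemma~\ref{lemma: Hoeffding's inequality}, with $b-a=2Q^{(\tau)}_{\max}$, gives
\begin{equation}
\Pr\left(\left|\widetilde{\bm k}(\alpha,\beta)-\bm k(\alpha,\beta)\right|\ge\epsilon/2\right)\le 2\exp\left(-\frac{T\epsilon^2}{8(Q^{(\tau)}_{\max})^2}\right).
\end{equation}
Choosing $T=\lceil 8(Q^{(\tau)}_{\max}/\epsilon)^2\ln(2/\delta)\rceil$ makes this at most $\delta$. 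The remaining $\epsilon/2$ of budget covers the finite-precision cosine error: if each cosine is computed to additive accuracy $\epsilon/(2Q^{(\tau)}_{\max})$, the average is perturbed by at most $\epsilon/2$. The precision for this step is only $O(\log(Q^{(\tau)}_{\max}/\epsilon))$ bits, hidden in $\tO$.

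For the runtime, each sample costs $O(D\,\polylog(G)+\mathbf q(\bm Q^{(\tau)}))$ for index generation, the query, the inner product and the cosine. Multiplying by $T=O\left((Q^{(\tau)}_{\max}/\epsilon)^2\ln(1/\delta)\right)$ gives the claimed bound.

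The only delicate point is bookkeeping rather than mathematics. First, the noisy entry must be reproducible and symmetric, so that it is a genuine $\mathrm{Q}_{\epsilon,\delta}$ oracle for a fixed Hermitian $\widetilde{\bm k}$; memoization handles this. Second, the finite-precision cosine error has to be split off from the statistical error as described above.
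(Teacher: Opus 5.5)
Your proposal is correct and is essentially the paper's proof: draw $y$ uniformly from $\mathcal X$, average $Q^{(\tau)}(y)\cos\bigl(\frac{2\pi}{G}y\cdot(\alpha-\beta)\bigr)$, and apply the two-sided Hoeffding bound with range $2Q^{(\tau)}_{\max}$. The differences are only bookkeeping. You reserve $\epsilon/2$ for finite-precision cosine evaluation, giving $T=\lceil 8(Q^{(\tau)}_{\max}/\epsilon)^2\ln(2/\delta)\rceil$ instead of the paper's $I=\lceil 2(Q^{(\tau)}_{\max}/\epsilon)^2\ln(2/\delta)\rceil$, and you enforce symmetry by memoization; the paper instead absorbs the cosine precision into $\tO$ without tracking its error and imposes $\widetilde{\bm k}(\alpha,\beta)=\widetilde{\bm k}(\beta,\alpha)$ later, in the proof of the sampling theorem.
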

\begin{proof}[Proof of Lemma~\ref{lemma: matrix element estimation}]
    The algorithm samples uniformly and returns the empirical average as follows:
    \begin{enumerate}
        \item Set $I \coloneq \left\lceil 2 \left(\frac{ \left(Q^{(\tau)}_\text{max}\right)^2 }{\epsilon^2}\right) \ln\left(\frac{2}{\delta}\right) \right\rceil$.
        \item Sample $y_1, \ldots, y_I \in \mc{X}$ uniformly at random from $\mc{X}$.
        \item Compute \begin{align}\label{eq: tilde_k definition}
            \widetilde{\bm{k}}(\alpha, \beta) &\coloneq \frac{1}{I}\sum_{i = 1}^{I}Q^{(\tau)}(y_i)\cos\left(\frac{2\pi}{G}y_i\cdot(\alpha - \beta)\right).
        \end{align}
        \item Return $\widetilde{\bm{k}}(\alpha, \beta)$.
    \end{enumerate}
    The correctness of the algorithm follows from Hoeffding's inequality in Lemma~\ref{lemma: Hoeffding's inequality}.
    The expectation value of the random variable $Q^{(\tau)}(y)\cos\left(\frac{2\pi}{G}y\cdot(\alpha - \beta)\right)$ for a uniform distribution over $\mc{X}$ is equal to $\bm{k}(\alpha, \beta)$, i.e.,
    \begin{align}
        &\mathbb{E}_{y \sim \mc{X}}\left[Q^{(\tau)}(y)\cos\left(\frac{2\pi}{G}y\cdot(\alpha - \beta)\right)\right] \\
        &\quad= \frac{1}{G^D}\sum_{y\in \mc{X}}Q^{(\tau)}(y)\cos\left(\frac{2\pi}{G}y\cdot(\alpha - \beta)\right)\\
        &\quad= \bm{k}(\alpha, \beta),
    \end{align}
    and the range of the random variable is bounded by $Q^{(\tau)}(y)\cos\left(\frac{2\pi}{G}y\cdot(\alpha - \beta)\right) \in [-Q^{(\tau)}_\text{max}, Q^{(\tau)}_\text{max}]$.
    Therefore, the absolute error of the estimation $\widetilde{\bm{k}}(\alpha, \beta)$ is bounded by $\epsilon$ with probability at least $1 - \delta$ by Hoeffding's inequality in Lemma~\ref{lemma: Hoeffding's inequality} as follows,
    \begin{align}
        &\Pr\left(\left|\widetilde{\bm{k}}(\alpha, \beta) - \bm{k}(\alpha, \beta)\right| \geq \epsilon\right) \\
        &\quad \leq 2\exp\left(-\frac{I\epsilon^2}{2(Q^{(\tau)}_\text{max})^2}\right)\\ \notag
        &\quad \leq 2\exp\left(-\ln\left(\frac{2}{\delta}\right)\right)= \delta.
    \end{align}

    The time complexity of the algorithm is dominated by the two parts: first, sampling $I$ indices from $\mc{X}$ in $O(\log(|\mc{X}|)) \times \tO(I)$ time, second, computing and averaging over $I$ samples in $\tO(I)\times O( (\mathbf{q}(\bm Q^{(\tau)}) + D + \polylog(G)))$, where tilde notation hides logarithmic factors arising from the computation of the cosine function.
    These lead to an overall time complexity of 
    \begin{equation}
        \begin{aligned}
             &O(I \log(|\mc{X}|)) + O(I (\mathbf{q}(\bm Q^{(\tau)}) + D + \polylog(G))) \\ 
             &\quad = O\left(D \polylog(G) + \mathbf{q}(\bm Q^{(\tau)}) \right) \\
             &\qquad \times\tO \left(\left(\frac{ Q^{(\tau)}_\text{max} }{\epsilon}\right)^2 \ln\left(\frac{1}{\delta}\right)\right),
        \end{aligned}
    \end{equation}
    which completes the proof of Lemma~\ref{lemma: matrix element estimation}.
\end{proof}

\begin{algorithm}[tbp]
\SetAlgoLined
\caption{Sampling from the optimized probability distribution~\label{alg: Rejection sampling from approximated distribution}}
\KwIn{Precision parameter $\delta \in (0,1)$, values $0<\epsilon\leq \min\left\{Q_{\max}^{(\tau)},\sqrt{Q_{\max}^{(\tau)}}\right\}$, $\bm k(0,0) = \Omega(1)$, and $Q^{(\tau)}_{\max} > 0$ in Eq.~\eqref{eq: Q_max}. Access to the oracles $\mathrm{Q}(\bm Q^{(\tau)})$ and $\mathrm{SQ}(\sqrt{\hat{\bm q}^{(\rho)}})$.}
\KwOut{A random variable $x \in \mc{X}$ sampled from a probability distribution ${\mc{D}}'$ within total variation distance $\delta$ of the distribution $\mc{D}_\epsilon$ in Definition~\ref{def: Optimized probability distribution}, i.e., $\dtv(\mc{D}', \mc{D}_\epsilon) \leq \delta$.}
Set
\begin{equation}\label{eq: error setup}
    \begin{aligned}
        I &= \left\lceil\frac{Q^{(\tau)}_{\max}}{\bm{k}(0,0)} \frac{12(Q_{\max}^{(\tau)} + \epsilon)^2 + \delta\epsilon^2}{12\epsilon(Q_{\max}^{(\tau)} + \epsilon - \delta\epsilon)} \ln\left(\frac{3}{\delta}\right)\right\rceil, \\
        \epsilon_k &= \min\left\{\frac{\epsilon}{2}, \frac{\delta\epsilon^2}{24(Q_{\max}^{(\tau)}+ \epsilon)}\right\}, \quad \delta_k = \frac{\delta\epsilon_D^4}{3(1 + \epsilon_D^2)},\\
        \epsilon_D &= \frac{\delta\epsilon^2}{48Q_{\max}^{(\tau)}(Q_{\max}^{(\tau)}+ \epsilon)} \quad \delta_D =\frac{\delta}{6},
    \end{aligned}
\end{equation}
$x = 0$, and $A = 0$\;
Apply Algorithm~\ref{alg: type1} with $\bm D = \sqrt{\hat{\bm q}^{(\rho)}}$, $\bm A=\bm k$, $\gamma=\epsilon$, and $f(x)=x^{-1}$, 
using the truncation parameters $(\epsilon_D, \delta_D)$ and the approximate query oracle $\mathrm{Q}_{\epsilon_k,\delta_k}({\bm{k}})$ obtained from Lemma~\ref{lemma: matrix element estimation}, and obtain two oracles $\mathrm{Q}_{\mathrm{sp}}(\bm N)$ and $\mathrm{SQ}(\tilde{\bm D})$\;
\For{$i \in [I]$}{
    Sample $y$ uniformly from $\mathcal{X}$ and set $x = y$\;
    Compute $\tilde{w}(x) $ in~Eq.~\eqref{eq: tilde w}\;
    \If{$\tilde{w}(x) < 0$ or $\tilde{w}(x) > \frac{Q_{\max}^{(\tau)}}{G^D} \frac{12(Q_{\max}^{(\tau)} + \epsilon)}{12\epsilon(Q_{\max}^{(\tau)} + \epsilon - \delta\epsilon)}$}{\Break\;}
Sample $u$ from the uniform distribution over $[0,1]$\;
\If{$u \leq \tilde{w}(x)\times \frac{G^D}{Q_{\max}^{(\tau)}} \frac{12\epsilon(Q_{\max}^{(\tau)} + \epsilon - \delta\epsilon)}{12(Q_{\max}^{(\tau)} + \epsilon)}$}{Set $A = 1$\;
\Break\;}}
\If{$A = 0$}{Uniformly sample $y$ from $\{0,\ldots, G-1\}^D$ and set $x = y$\;}
\Return $x$
\end{algorithm}

\begin{theorem}[The detailed restatement of Theorem~\ref{thm: Rejection sampling from approximated distribution}\label{thm: optimized random features sampling restated}]
    Given $0<\epsilon\leq \min\left\{Q_{\max}^{(\tau)},\sqrt{Q_{\max}^{(\tau)}}\right\}$, $\delta \in (0,1)$, $\bm k(0,0) = \Omega(1)$, $Q^{(\tau)}_{\max} > 0$ in Eq.~\eqref{eq: Q_max}, and access to the oracles $\mathrm{Q}(\bm Q^{(\tau)})$ and $\mathrm{SQ}(\sqrt{\hat{\bm q}^{(\rho)}})$, Algorithm~\ref{alg: Rejection sampling from approximated distribution} samples from the distribution ${\mathcal{D}}'$ whose total variation distance from the optimized probability distribution $\mathcal{D}_\epsilon$ in Definition~\ref{def: Optimized probability distribution} is at most $\delta$, i.e.,
    \begin{align}
        \dtv(\mathcal{D}', \mathcal{D}_\epsilon) \leq \delta,
    \end{align}
    with runtime
    \begin{equation}
        \begin{aligned}
            &O(D \; \polylog(G) + \mathbf{sq}\left(\sqrt{\hat{\bm q}^{(\rho)}}\right) + \mathbf{q}(\bm Q^{(\tau)}))  \\
            &\quad \times \tO((\frac{Q_{\max}^{(\tau)}}{\epsilon})^{18} \frac{1}{\delta^6}).
        \end{aligned}
    \end{equation}
\end{theorem}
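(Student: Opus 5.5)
The plan is to analyse Algorithm~\ref{alg: Rejection sampling from approximated distribution} as rejection sampling from a uniform proposal over $\mathcal X$. The target weights are replaced by computable surrogates $\tilde w(x)$, built from the output $(S,\pi,\bm B,f(\epsilon))$ of Theorem~\ref{thm: type1}. The total-variation error then splits into three pieces:
\begin{enumerate}
\item the failure event of Theorem~\ref{thm: type1};
\item the perturbation $|w(x)-\tilde w(x)|$;
\item the event that all $I$ rejection rounds fail, or that the \texttt{break} on an out-of-range $\tilde w$ fires.
\end{enumerate}
I budget roughly $\delta/3$ to each.

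\emph{Step 1: the exact weights and their normalization.} Write $\bm D=\sqrt{\hat{\bm q}^{(\rho)}}$ and $\bm v_x=\bm D\bm F_D\ket{x}$. Then
\[
w(x)=Q^{(\tau)}(x)\,\bm v_x^\dagger\bm\Sigma_\epsilon^{-1}\bm v_x ,\qquad \|\bm v_x\|_2^2=\sum_i \hat q^{(\rho)}(i)|\bm F_D(i,x)|^2=G^{-D}.
\]
Since $\bm\Sigma_\epsilon\succeq\epsilon\bm I$, this gives the envelope $0\le w(x)\le Q^{(\tau)}_{\max}/(\epsilon G^D)$. For the normalization, with $\bm K=\bm D\bm k\bm D\succeq0$,
\[
Z=\sum_x w(x)=\operatorname{tr}\big(\bm\Sigma_\epsilon^{-1}\bm K\big)=\sum_j\frac{\lambda_j}{\epsilon+\lambda_j}\ge\frac{\operatorname{tr}\bm K}{\epsilon+\|\bm K\|_{\mathrm{op}}}\ge\frac{\bm k(0,0)}{\epsilon+Q^{(\tau)}_{\max}} .
\]
This uses that the diagonal of $\bm k$ is constant, equal to $\bm k(0,0)$, so $\operatorname{tr}\bm K=\bm k(0,0)$, together with $\|\bm K\|_{\mathrm{op}}\le Q^{(\tau)}_{\max}$. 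Consequently one round accepts with probability of order $\epsilon\,\bm k(0,0)/\big(Q^{(\tau)}_{\max}(Q^{(\tau)}_{\max}+\epsilon)\big)$. This is what fixes $I$ so that $(1-p_{\mathrm{acc}})^I\le\delta/3$.

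\emph{Step 2: the surrogate and its error.} Define
\[
\tilde w(x)=Q^{(\tau)}(x)\,\bm v_x^\dagger\bm N\bm v_x .
\]
By the block structure of $\bm N$ this equals
\[
Q^{(\tau)}(x)\Big[\sum_{i,j\in S}\overline{\bm v_x(i)}\,\bm B(\pi(i),\pi(j))\,\bm v_x(j)+\epsilon^{-1}\Big(G^{-D}-G^{-D}\!\sum_{i\in S}\hat q^{(\rho)}(i)\Big)\Big].
\]
The second term uses $\|\bm D\|_{\mathrm F}=1$, so the complement of $S$ never has to be enumerated. Each $\bm v_x(i)$ needs one query to $\mathrm{SQ}(\bm D)$ and one Fourier phase, computed in $D\polylog G$ time. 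Hence $\tilde w(x)$ costs $O(s^2)$ such operations. The error bound is
\[
|w(x)-\tilde w(x)|\le Q^{(\tau)}_{\max}G^{-D}\|\bm\Sigma_\epsilon^{-1}-\bm N\|_{\mathrm{op}} .
\]
I apply Theorem~\ref{thm: type1} with $\gamma=\epsilon$, $f(x)=1/x$, $\|\bm A\|_{\mathrm{op}}=\|\bm k\|_{\mathrm{op}}\le Q^{(\tau)}_{\max}$, $\|\bm D\|_{\mathrm{op}}\le1$, and $\nu=2Q^{(\tau)}_{\max}\epsilon_D+\epsilon_k$. The parameter choices in Eq.~\eqref{eq: error setup} ensure $\nu\le\epsilon/2$. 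Then $f$ is $4/\epsilon^2$-Lipschitz on $[\epsilon/2,\infty)\ni\epsilon$, which contains the required interval because $\lambda_{\min}(\bm\Sigma_\epsilon)\ge\epsilon$. So on the success event, whose probability is at least $1-\delta_D-s(s+1)\delta_k/2\ge1-\delta/3$ by the choice $\delta_k\propto\delta\epsilon_D^4$ and $s\le\epsilon_D^{-2}$,
\[
\|\bm\Sigma_\epsilon^{-1}-\bm N\|_{\mathrm{op}}=O\big((1+\log(1+s))\,\nu/\epsilon^2\big).
\]
Since the logarithm is absorbed in $\widetilde O$, this holds after possibly shrinking $\epsilon_D$ and $\epsilon_k$ by a $\log$ factor.

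\emph{Step 3: from weights to total variation.} Accepted samples follow the distribution proportional to $\tilde w$, restricted to the range where it lies inside the envelope. Clipping and the fallback uniform sample are charged to the failure events. For nonnegative weights,
\[
d_{\mathrm{TV}}\big(w/Z,\tilde w/\tilde Z\big)\le\frac{\sum_x|w(x)-\tilde w(x)|}{Z}\le\frac{Q^{(\tau)}_{\max}(Q^{(\tau)}_{\max}+\epsilon)}{\bm k(0,0)}\,\|\bm\Sigma_\epsilon^{-1}-\bm N\|_{\mathrm{op}} .
\]
With the chosen $\epsilon_D,\epsilon_k$ this is at most $\delta/3$. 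I also have to check that out-of-envelope values cannot occur on the good event, because the envelope in the algorithm is inflated by exactly this slack.

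\emph{Step 4: runtime.} The dominant cost is the call to Theorem~\ref{thm: type1}:
\[
\widetilde O\big(\epsilon_D^{-6}\big)\times\big(\mathbf{sq}(\bm D)+\mathbf q_{\epsilon_k,\delta_k}(\bm k)\big),
\]
with $\epsilon_D^{-1}=O\big((Q^{(\tau)}_{\max}/\epsilon)^2/\delta\big)$ using $\epsilon\le\sqrt{Q^{(\tau)}_{\max}}$. By Lemma~\ref{lemma: matrix element estimation}, each $\bm k$ query costs $\widetilde O\big((Q^{(\tau)}_{\max}/\epsilon_k)^2\big)$. The rejection loop costs $I\cdot O(s^2)$, which is lower order.

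The main obstacle is this bookkeeping. I expect to need to trade which factor multiplies which, for instance that the $\bm k$-query cost enters only through the $s^2$ queries rather than the full $\epsilon_D^{-6}$ factor, to land on the claimed $(Q^{(\tau)}_{\max}/\epsilon)^{18}\delta^{-6}$. I would first try to bound $s^2\,\mathbf q_{\epsilon_k,\delta_k}(\bm k)+s^3$ separately.
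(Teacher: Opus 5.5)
Your skeleton is the paper's. You use the same three $\delta/3$ budgets and the same surrogate $\tilde w(x)=Q^{(\tau)}(x)\,\bm v_x^\dagger\bm N\bm v_x$, evaluated through the block structure and $1-\sum_{i\in S}\hat q^{(\rho)}(i)$. Your runtime bookkeeping also matches, with $s^2\mathbf q_{\epsilon_k,\delta_k}(\bm k)$ and $s^3$ bounded separately; this gives $\widetilde O((Q^{(\tau)}_{\max}/\epsilon)^{12}\delta^{-6})$, inside the claim. (A small correction: $\epsilon_D^{-1}=O((Q^{(\tau)}_{\max}/\epsilon)^2/\delta)$ needs only $\epsilon\le Q^{(\tau)}_{\max}$. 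The condition $\epsilon\le\sqrt{Q^{(\tau)}_{\max}}$ is what turns $I=O((Q^{(\tau)}_{\max})^2\epsilon^{-1}\log(1/\delta))$ into $O((Q^{(\tau)}_{\max}/\epsilon)^3\log(1/\delta))$.)

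The gap is that your two perturbation estimates are too lossy to certify Algorithm~\ref{alg: Rejection sampling from approximated distribution} with the explicit parameters fixed in Eq.~\eqref{eq: error setup}, and the statement is about that algorithm. First, routing $f(x)=x^{-1}$ through the general bound of Theorem~\ref{thm: type1} leaves a factor $C(1+\log(1+s))$ with an unspecified constant $C$. The explicit $\epsilon_D,\epsilon_k$ do not absorb it, and ``shrinking by a log factor'' changes the algorithm. The missing idea is the inverse-specific bound the paper uses, from $\bm X^{-1}-\bm Y^{-1}=\bm X^{-1}(\bm Y-\bm X)\bm Y^{-1}$. You have $\lambda_{\min}(\bm\Sigma_\epsilon)\ge\epsilon$. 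You also have $\lambda_{\min}(\widetilde{\bm M})\ge\epsilon-\epsilon_k\ge\epsilon/2$, because $\widetilde{\bm D}\bm k\widetilde{\bm D}\succeq0$ and $\ope{\widetilde{\bm D}(\widetilde{\bm k}-\bm k)\widetilde{\bm D}}\le\epsilon_k\flo{\widetilde{\bm D}}^2\le\epsilon_k$. Hence $\ope{\bm\Sigma_\epsilon^{-1}-\bm N}\le\nu/(\epsilon-\epsilon_k)^2\le4\nu/\epsilon^2$, with no logarithm and no unknown constant.

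Second, in Step~3 you replace $Q^{(\tau)}(x)$ by $Q^{(\tau)}_{\max}$ in the bound on $|w(x)-\tilde w(x)|$. This costs a factor $Q^{(\tau)}_{\max}/\bm k(0,0)\ge1$, since $\bm k(0,0)=G^{-D}\sum_yQ^{(\tau)}(y)$. The algorithm only guarantees $\nu\le\delta\epsilon^2/(12(Q^{(\tau)}_{\max}+\epsilon))$, so even with the resolvent bound your inequality yields only $\dtv\le(Q^{(\tau)}_{\max}/\bm k(0,0))\,\delta/3$. Your assertion that the chosen $\epsilon_D,\epsilon_k$ give $\delta/3$ therefore does not follow. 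The fix is to keep $Q^{(\tau)}(x)$. Then $\sum_x|w(x)-\tilde w(x)|\le\bm k(0,0)\ope{\bm\Sigma_\epsilon^{-1}-\bm N}$, and dividing by your $Z\ge\bm k(0,0)/(Q^{(\tau)}_{\max}+\epsilon)$ gives $(Q^{(\tau)}_{\max}+\epsilon)\ope{\bm\Sigma_\epsilon^{-1}-\bm N}\le\delta/3$. That is exactly what Eq.~\eqref{eq: error setup} is tuned for. The paper reaches the same quantity as a pointwise relative error, via $w(x)\ge Q^{(\tau)}(x)G^{-D}/(Q^{(\tau)}_{\max}+\epsilon)$.

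The same issue affects $I$, which is calibrated to a lower bound on $\tilde W$, not on $Z$. Bound $\tilde w(x)\ge Q^{(\tau)}(x)G^{-D}/(Q^{(\tau)}_{\max}+\epsilon+\nu)$ directly from $\lambda_{\max}(\widetilde{\bm M})\le\lambda_{\max}(\bm\Sigma_\epsilon)+\nu$. Going through $\tilde W\ge(1-\delta/3)Z$ gives an acceptance bound slightly below the one $I$ was set for. The envelope check you flagged is immediate on the good event: $\bm N$ is positive definite with $\ope{\bm N}\le(\epsilon-\epsilon_k)^{-1}\le(\epsilon-\nu)^{-1}$, so $0\le\tilde w(x)\le\tilde w_{\max}$.
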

\begin{proof}
    We first prove the correctness of Algorithm~\ref{alg: Rejection sampling from approximated distribution}, and then analyze the time complexity of Algorithm~\ref{alg: Rejection sampling from approximated distribution}.

    Let $\widetilde{\bm{k}}$ be a matrix defined from Lemma~\ref{lemma: matrix element estimation} with inputs $(\epsilon_k,\delta_k)$, i.e., for $\alpha,\beta \in \mc{X}$, $\widetilde{\bm{k}}$ satisfies
    \begin{equation}
        |\bm{k}(\alpha,\beta) - \widetilde{\bm{k}}(\alpha, \beta)| \leq \epsilon_k,
    \end{equation}
    with probability at least $1 - \delta_k$.
    
    On the successful event, we have
    \begin{equation}\label{eq: q k - tilde k q}
        \ope{\sqrt{\hat{\bm q}^{(\rho)}}(\bm{k}-\widetilde{\bm{k}})\sqrt{\hat{\bm q}^{(\rho)}}} \leq \epsilon_k \flo{\sqrt{\hat{\bm q}^{(\rho)}}}^2 = \epsilon_k,
    \end{equation}
    by setting $\widetilde{\bm{k}}(\alpha, \beta) = \widetilde{\bm{k}}(\beta, \alpha)$.
    The minimum eigenvalue of $\epsilon \bm I + \sqrt{\hat{\bm q}^{(\rho)}}\widetilde{\bm{k}}\sqrt{\hat{\bm q}^{(\rho)}}$ is at least
    \begin{equation}\label{eq: minimum tilde k}
        \begin{aligned}
            &\lambda_{\min}(\epsilon I + \sqrt{\hat{\bm q}^{(\rho)}}\widetilde{\bm{k}}\sqrt{\hat{\bm q}^{(\rho)}}) \\
            &\quad \geq \lambda_{\min}(\bm{\Sigma}_\epsilon) - \ope{\sqrt{\hat{\bm q}^{(\rho)}}(\bm{k}-\widetilde{\bm{k}})\sqrt{\hat{\bm q}^{(\rho)}}} \\
            &\quad\geq \lambda_{\min}(\bm{\Sigma}_\epsilon) - \epsilon_k \\
            &\quad \geq \epsilon - \epsilon_k,
        \end{aligned}
    \end{equation}
    where the first inequality follows from Weyl's inequality, the second inequality follows from Eq.~\eqref{eq: q k - tilde k q}, and the last inequality follows from the fact that $\sqrt{\hat{\bm q}^{(\rho)}}\bm{k}\sqrt{\hat{\bm q}^{(\rho)}}$ is positive semidefinite. 
    
    Algorithm~\ref{alg: Rejection sampling from approximated distribution} first applies Algorithm~\ref{alg: type1} with $\mathrm{SQ}(\sqrt{\hat{\bm q}^{(\rho)}})$, $\mathrm{Q}_{\epsilon_k,\delta_k}(\bm{k})$, $\delta_D \in (0,1]$, and $\epsilon_D, \epsilon > 0$, and obtains a query oracle $\qsp(\bm N)$ for a real symmetric matrix $\bm N \in \mathbb{R}^{n \times n}$ such that
    \begin{equation}\label{eq: opnorm epsilon-N}
    \begin{aligned}
        &\ope{\bm{\Sigma}_\epsilon^{-1} - \bm N} \\
        &\quad\leq \frac{1}{(\epsilon - \epsilon_k)^2} \left( 2\ope{\bm k}\ope{\sqrt{\hat{\bm q}^{(\rho)}}}\epsilon_D  + \epsilon_k \flo{\sqrt{\hat{\bm q}^{(\rho)}}}^2 \right) \\
        &\quad \leq \frac{1}{(\epsilon - \epsilon_k)^2} \left( 2Q_{\max}^{(\tau)}\epsilon_D + \epsilon_k \right),
    \end{aligned}
    \end{equation}
    where the first inequality follows by applying Theorem~\ref{thm: type1} with $f(x) = x^{-1}$, but using the inverse-specific operator-norm perturbation bound instead of the general Lipschitz bound in~\eqref{eq: the precision of type1}.
    Indeed, on matrices $\bm{X}$ and $\bm{Y}$ whose spectra are contained in $[a,\infty)$ for $a > 0$, the inverse map satisfies
    \begin{equation}
        \ope{\bm X^{-1} - \bm Y^{-1}} \leq \frac{1}{a^2}\ope{\bm X - \bm Y}.
    \end{equation}
    Here, $a = \epsilon - \epsilon_k$ by Eq.~\eqref{eq: minimum tilde k}.
    Thus, the general factor $C(1 + \log(1 + s))L$ in~\eqref{eq: the precision of type1} is replaced by $1/(\epsilon - \epsilon_k)^2$.
    The second inequality in~\eqref{eq: opnorm epsilon-N} follows from $\ope{\sqrt{\hat{\bm q}^{(\rho)}}} \leq 1 ,\flo{\sqrt{\hat{\bm q}^{(\rho)}}}^2 = 1$, and $\ope{\bm k} \leq Q_{\max}^{(\tau)}$.

    Let $w: \mathcal{X} \to \mathbb{R}$ be a weight function
    \begin{equation}
        w(x) =  \Braket{x \left| \sqrt{\bm Q^{(\tau)}}\,\bm{F}_D^\dagger \sqrt{\hat{\bm q}^{(\rho)}} \bm{\Sigma}_\epsilon^{-1} \sqrt{\hat{\bm q}^{(\rho)}}\,\bm{F}_D\sqrt{\bm Q^{(\tau)}} \right| x},
    \end{equation}
    and define $W \coloneq \sum_{x \in \mathcal{X}} w(x)$.
    For every $x \in \mathcal{X}$, we have
    \begin{equation}\label{eq: w(x) is larger}
        \begin{aligned}
            w(x) &\geq \Braket{x \left|\sqrt{\bm Q^{(\tau)}}\bm{F}_{D}^\dagger {\hat{\bm q}^{(\rho)}} \bm{F}_{D}\sqrt{\bm Q^{(\tau)}}\right| x}  \lambda_{\max}\left(\bm{\Sigma}_\epsilon\right)^{-1} \\
            &\geq \frac{1}{(Q_{\max}^{(\tau)} + \epsilon)} \Braket{x \left|\sqrt{\bm Q^{(\tau)}}\bm{F}_{D}^\dagger {\hat{\bm q}^{(\rho)}} \bm{F}_{D}\sqrt{\bm Q^{(\tau)}}\right| x}.
        \end{aligned}
    \end{equation}
    Then, we can write the optimized distribution $\mathcal{D}_\epsilon$ in Definition~\ref{def: Optimized probability distribution} as 
    \begin{equation}
        \mathcal{D}_\epsilon(x) = \frac{w(x)}{W}.
    \end{equation}
    Also, by using the approximated weight function $\tilde{w}$
    \begin{equation}\label{eq: tilde w}
        \tilde{w}(x) \coloneq \Braket{x \left| \sqrt{\bm Q^{(\tau)}}\,\bm{F}_D^\dagger \sqrt{\hat{\bm q}^{(\rho)}} \bm N \sqrt{\hat{\bm q}^{(\rho)}}\,\bm{F}_D\sqrt{\bm Q^{(\tau)}} \right| x},
    \end{equation}
    and its summation $\tilde{W} = \sum_{x \in \mathcal{X}}\tilde{w}(x)$, we define the approximated distribution $\tilde{\mathcal{D}}$ as 
    \begin{equation}\label{eq: approximated probability distirbution}
        \tilde{\mathcal{D}}(x) = \frac{\tilde w(x)}{\tilde W},
    \end{equation}
    where Algorithm~\ref{alg: Rejection sampling from approximated distribution} tries to sample from the distribution $\tilde{\mc D}$.

    We analyze the differences between $w(x)$ and $\tilde{w}(x)$, and $W$ and $\tilde{W}$,
    \begin{equation}\label{eq: w - tilde w}
        \begin{aligned}
            &|w(x) - \tilde{w}(x)| \\
            &\quad =  \left| \left\langle x \middle|  \sqrt{\bm Q^{(\tau)}}\,\bm{F}_D^\dagger \sqrt{\hat{\bm q}^{(\rho)}} \left(\bm{\Sigma}_\epsilon^{-1} - \bm{N} \right) \right.\right.\\
            &\qquad \times \left. \left. \sqrt{\hat{\bm q}^{(\rho)}}\,\bm{F}_D\sqrt{\bm Q^{(\tau)}} \middle| x \right\rangle \right| \\ 
            &\quad \leq \left\langle x \middle| \sqrt{\bm Q^{(\tau)}}\,\bm{F}_D^\dagger {\hat{\bm q}^{(\rho)}} \bm{F}_D\sqrt{\bm Q^{(\tau)}} \middle| x \right\rangle \ope{\bm{\Sigma}_\epsilon^{-1} - \bm{N}}\\
            &\quad \leq w(x)\frac{Q_{\max}^{(\tau)} + \epsilon}{(\epsilon - \epsilon_k)^2} \left( 2Q_{\max}^{(\tau)}\epsilon_D + \epsilon_k \right),
        \end{aligned}
    \end{equation}
    where the final line follows from Equations~\eqref{eq: opnorm epsilon-N} and~\eqref{eq: w(x) is larger},
    \begin{equation}
        \begin{aligned}
            |W - \tilde{W}| &= \left| \sum_{x \in \mathcal{X}}w(x) - \sum_{x \in \mathcal{X}}\tilde{w}(x)\right| \\
            &\leq \sum_{x \in \mathcal{X}} \left|w(x) - \tilde{w}(x)\right| \\
            &\leq \sum_{x\in \mathcal{X}} w(x) \frac{Q_{\max}^{(\tau)} + \epsilon}{(\epsilon - \epsilon_k)^2} \left( 2Q_{\max}^{(\tau)}\epsilon_D + \epsilon_k \right) \\
            &\leq W \frac{Q_{\max}^{(\tau)} + \epsilon}{(\epsilon - \epsilon_k)^2} \left( 2Q_{\max}^{(\tau)}\epsilon_D + \epsilon_k \right),
        \end{aligned}
    \end{equation}
    where we use Eq.~\eqref{eq: w - tilde w} in the third line.

    By the parameter choices in Eq.~\eqref{eq: error setup}, we have the following inequalities
    \begin{align}
        \frac{1}{(\epsilon - \epsilon_k)^2} \leq \frac{4}{\epsilon^2}, \quad 2Q_{\max}^{(\tau)}\epsilon_D + \epsilon_k \leq \frac{\delta\epsilon^2}{12(Q_{\max}^{(\tau)} + \epsilon)},
    \end{align}
    and thus, it gives 
    \begin{align}
        \frac{Q_{\max}^{(\tau)} + \epsilon}{(\epsilon - \epsilon_k)^2} \left( 2Q_{\max}^{(\tau)}\epsilon_D + \epsilon_k \right) \leq \frac{\delta}{3}.
    \end{align}
    Now, we bound the total variation distance between the two distributions $\mathcal{D}_\epsilon$ and $\tilde{\mathcal{D}}$ as follows,
    \begin{equation}
        \begin{aligned}
            \dtv(\mathcal{D}_\epsilon,\tilde{\mathcal{D}}) &= \frac{1}{2} \sum_{x \in \mathcal{X}}\left| \frac{w(x)}{W} - \frac{\tilde{w}(x)}{\tilde{W}} \right|\\
            &= \frac{1}{2}\sum_{x \in \mathcal{X}}\left| \frac{w(x) - \tilde{w}(x)}{W} - \tilde{w}(x)\left(\frac{1}{\tilde{W}} - \frac{1}{{W}}\right)\right|\\
            &\leq \frac{1}{2}\sum_{x \in \mathcal{X}} \left\{ \frac{|w(x) - \tilde{w}(x)|}{W} + \tilde{w}(x) \frac{|W - \tilde{W}|}{W\tilde{W}}\right\} \\ 
            &= \frac{1}{2}\sum_{x \in \mathcal{X}} \left\{ \frac{|w(x) - \tilde{w}(x)|}{W}\right\} + \frac{|W - \tilde{W}|}{2W} \\ 
            & \leq \frac{(Q_{\max}^{(\tau)} + \epsilon)}{(\epsilon - \epsilon_k)^2} \left( 2Q_{\max}^{(\tau)}\epsilon_D + \epsilon_k \right)\\
            &\leq \frac{\delta}{3}.
        \end{aligned}
    \end{equation}

    Then, we analyze the probability Algorithm~\ref{alg: Rejection sampling from approximated distribution} succeeds in sampling from $\tilde{\mc D}$, i.e., the probabilities that Algorithm~\ref{alg: type1} and the rejection sampling procedure succeed.
    The success probability of Algorithm~\ref{alg: type1} is at least
    \begin{equation}\label{eq: algorithm1 success probability}
        1 - (\delta_D + \frac{1 + \epsilon_D^2}{2\epsilon_D^4} \delta_k) \geq 1 - \frac{\delta}{3},
    \end{equation}
    by Theorem~\ref{thm: type1}.
    Before analyzing the rejection sampling part, we give the upper bound $\tilde{w}_{\max}$ of the function $\tilde{w}(x)$ in~\eqref{eq: tilde w} over all indices $x \in \mc{X}$, which is used for rejection sampling,
    \begin{align}
\label{eq: tilde_d_max}
    \tilde{w}(x)
    &= \Braket{x \left| \sqrt{\bm Q^{(\tau)}}\,\bm{F}_D^\dagger \sqrt{\hat{\bm q}^{(\rho)}} \bm N \sqrt{\hat{\bm q}^{(\rho)}}\,\bm{F}_D\sqrt{\bm Q^{(\tau)}} \right| x} \\
    &\leq \Braket{x \left|\sqrt{\bm Q^{(\tau)}}\bm{F}_{D}^\dagger {\hat{\bm q}^{(\rho)}} \bm{F}_{D}\sqrt{\bm Q^{(\tau)}}\right| x} \ope{\bm N} \\
    &\leq \frac{Q^{(\tau)}(x)}{G^D} \frac{1}{\epsilon - (2Q_{\max}^{(\tau)}\epsilon_D + \epsilon_k)} \\
    &\leq \frac{Q_{\max}^{(\tau)}}{G^D} \frac{12(Q_{\max}^{(\tau)} + \epsilon)}{12\epsilon(Q_{\max}^{(\tau)} + \epsilon - \delta\epsilon)}.
\end{align}
    For simplicity, we denote the upper bound of $\tilde{w}(x)$ as $\tilde{w}_{\max}$, i.e.,
    \begin{equation}\label{eq: tilde w max}
        \tilde{w}_{\max} = \frac{Q_{\max}^{(\tau)}}{G^D} \frac{12(Q_{\max}^{(\tau)} + \epsilon)}{12\epsilon(Q_{\max}^{(\tau)} + \epsilon - \delta\epsilon)}.
    \end{equation}
    Also, we compute the lower bound of the sum $\tilde{W} = \sum_{x}\tilde{w}(x)$ of the function $\tilde{w}(x)$ in~\eqref{eq: tilde w} over $x\in\mathcal{X}$ by analyzing the lower bound of $\tilde{w}(x)$ except for $Q^{(\tau)}(x)$,
    \begin{equation}\label{eq: tilde_d lower bound}
    \begin{aligned}
        \tilde{w}(x) &\geq \frac{Q^{(\tau)}(x)}{G^D} \frac{1}{Q^{(\tau)}_{\max} + \epsilon + (2Q_{\max}^{(\tau)}\epsilon_D + \epsilon_k)}\\
        &\geq \frac{Q^{(\tau)}(x)}{G^D} \frac{12(Q_{\max}^{(\tau)} + \epsilon)}{12(Q_{\max}^{(\tau)} + \epsilon)^2 + \delta\epsilon^2},
    \end{aligned}
\end{equation}
    and therefore we can compute the sum $\tilde{W}$ of the function $\tilde{w}(x)$ in~\eqref{eq: tilde w} over $x\in \mathcal{X}$ as follows,
    \begin{equation}\label{eq: tilde_d sum}
    \begin{aligned}
        \sum_{x}\tilde{w}(x) &\geq \frac{\sum_{x} Q^{(\tau)}(x)}{G^D} \frac{12(Q_{\max}^{(\tau)} + \epsilon)}{12(Q_{\max}^{(\tau)} + \epsilon)^2 + \delta\epsilon^2} \\
        &= \bm{k}(0,0)\frac{12(Q_{\max}^{(\tau)} + \epsilon)}{12(Q_{\max}^{(\tau)} + \epsilon)^2 + \delta\epsilon^2}.
    \end{aligned}
\end{equation}
    Now, we analyze the rejection sampling in Algorithm~\ref{alg: Rejection sampling from approximated distribution}.
    First, we compute the acceptance probability of rejection sampling in Algorithm~\ref{alg: Rejection sampling from approximated distribution}.
    In Algorithm~\ref{alg: Rejection sampling from approximated distribution}, for each uniformly sampled $x \in \mc{X}$, we accept $x$ with probability $\tilde{w}(x)/\tilde{w}_{\max}$.
    The probability $\pr(x:\text{accept})$ of each $x \in \mc{X}$ is returned is given by
    \begin{equation}\label{eq: returned probabilityq}
        \begin{aligned}
            \pr(x:\text{accept}) &= \frac{1}{G^D}\times \frac{\tilde{w}(x)}{\tilde{w}_{\max}}.
        \end{aligned}
    \end{equation}
    Thus, the probability that a proposal is accepted in Algorithm~\ref{alg: Rejection sampling from approximated distribution} is given by summing the probability $\pr(x:\text{accept})$ over all $x \in \mc{X}$, i.e.,
    \begin{equation}\label{eq: acceptance probability}
        \begin{aligned}
            \pr(\text{accept}) &= \sum_{x \in \mc{X}} \pr(x:\text{accept})\\
            &= \frac{1}{G^D}\sum_{x \in \mc{X}} \frac{\tilde{w}(x)}{\tilde{w}_{\max}}\\
        \end{aligned}
    \end{equation}
    Due to Equations~\eqref{eq: returned probabilityq} and~\eqref{eq: acceptance probability}, the probability $\pr(x:\text{success})$ returned by Algorithm~\ref{alg: Rejection sampling from approximated distribution} conditioned on successful matrix approximation and acceptance is given by, for each $x \in \mc{X}$,
    \begin{equation}\label{eq: returned distribution}
        \begin{aligned}
            \pr(x:\text{success}) &= \frac{\pr(x:\text{accept})}{\pr(\text{accept})}\\
            &= \frac{\tilde{w}(x)}{\tilde{W}},
        \end{aligned}
    \end{equation}
    which is exactly the same as the distribution $\tilde{\mc{D}}$ in Eq.~\eqref{eq: approximated probability distirbution}.

    Next, we analyze the failure probability of the rejection sampling, namely, the probability that no proposal is accepted in $I$ iterations in Algorithm~\ref{alg: Rejection sampling from approximated distribution}.
    The probability that the rejection sampling fails is given by
    \begin{equation}\label{eq: failure probability of rejection sampling}
        \begin{aligned}
           &\pr(\text{failure})\\
             &\quad =  (1 - \pr(\text{accept}))^{I}\\
            &\quad\leq \exp\left(-I \times \pr(\text{accept})\right)\\
            &\quad = \exp\left(-I \times \frac{1}{G^D} \frac{\tilde{W}}{\tilde{w}_{\max}} \right)\\
            &\quad \leq  \exp\left(-I \times \frac{\bm{k}(0,0)}{G^D} \frac{12(Q_{\max}^{(\tau)} + \epsilon)}{12(Q_{\max}^{(\tau)} + \epsilon)^2 + \delta\epsilon^2} \frac{G^D}{Q_{\max}^{(\tau)}} \right.\\
            &\qquad \times \left. \frac{12\epsilon(Q_{\max}^{(\tau)} + \epsilon - \delta\epsilon)}{12(Q_{\max}^{(\tau)} + \epsilon)} \right)\\
            &\quad =  \exp\left(-I \times \frac{\bm{k}(0,0)}{Q^{(\tau)}_{\max}} \frac{12\epsilon(Q_{\max}^{(\tau)} + \epsilon - \delta\epsilon)}{12(Q_{\max}^{(\tau)} + \epsilon)^2 + \delta\epsilon^2}\right)\\
            &\quad=  \exp\left(-\left\lceil\frac{Q^{(\tau)}_{\max}}{\bm{k}(0,0)} \frac{12(Q_{\max}^{(\tau)} + \epsilon)^2 + \delta\epsilon^2}{12\epsilon(Q_{\max}^{(\tau)} + \epsilon - \delta\epsilon)} \ln\left(\frac{3}{\delta}\right) \right\rceil \right. \\
            &\qquad\left. \times \frac{\bm{k}(0,0)}{Q^{(\tau)}_{\max}} \frac{12\epsilon(Q_{\max}^{(\tau)} + \epsilon - \delta\epsilon)}{12(Q_{\max}^{(\tau)} + \epsilon)^2 + \delta\epsilon^2}\right)\\
            &\quad \leq  \exp\left(-\ln(\frac{3}{\delta})\right)\\
            &\quad = \frac{\delta}{3},\\
        \end{aligned}
    \end{equation}
    where we use Equations~\eqref{eq: tilde w max} and \eqref{eq: tilde_d sum} in the second inequality.

    Let $p_{\mathrm{fail}}$ be the probability that either Algorithm~\ref{alg: type1} fails or the rejection sampling fails.
    From the above bounds and the union bound, we have
    \begin{equation}
        p_{\mathrm{fail}} \leq \frac{2\delta}{3}.
    \end{equation}
    Let $\mathcal{F}$ denote the conditional output distribution on the failure event.
    If this failure occurs, Algorithm~\ref{alg: Rejection sampling from approximated distribution} samples from this distribution $\mathcal{F}$.
    Thus, the output distribution $\mathcal{D}'$ sampled by Algorithm~\ref{alg: Rejection sampling from approximated distribution} satisfies
    \begin{equation}\label{eq: total variation distance}
        \begin{aligned}
            \dtv(\mc{D}', \mc{D}_\epsilon) & = \dtv((1 - p_{\mathrm{fail}})\tilde{\mc{D}} + p_{\mathrm{fail}}\mc{F}, \mc{D}_\epsilon)\\
            &\leq (1 - p_{\mathrm{fail}})\dtv(\tilde{\mc{D}}, \mc{D}_\epsilon) + p_{\mathrm{fail}}\dtv(\mc{F}, \mc{D}_\epsilon)\\
            &\leq \frac{\delta}{3} + \frac{2\delta}{3} \\
            &= \delta,
        \end{aligned}
    \end{equation}
    which assures the correctness of Algorithm~\ref{alg: Rejection sampling from approximated distribution}.

    Finally, we analyze the time complexity of Algorithm~\ref{alg: Rejection sampling from approximated distribution}.
    The runtime of Algorithm~\ref{alg: Rejection sampling from approximated distribution} is maximal when the algorithm runs for all $I$ iterations without accepting a proposal.
    The dominant parts of the runtime in this case are the runtime of Algorithm~\ref{alg: type1} and the $I$ evaluations of the weight function $\tilde{w}(x)$ in~\eqref{eq: tilde w} and the sampling from uniform distribution.

    The runtime of Algorithm~\ref{alg: type1} in the present application is
    \begin{equation}
        O\left(\mathbf{sq}\left(\sqrt{\hat{\bm q}^{(\rho)}}\right)\frac{1}{\epsilon_D^2}\log\left(\frac{1}{\delta_D}\right) + \mathbf{q}_{\epsilon_k,\delta_k}(\bm k)\frac{1}{\epsilon_D^4} + \frac{1}{\epsilon_D^6}\right)
    \end{equation}
    where 
    \begin{equation}
        \begin{aligned}
            \mathbf{q}_{\epsilon_k,\delta_k}(k) &= O\left(D \; \polylog(G) + \mathbf{q}(\bm Q^{(\tau)}) \right) \\
            &\quad\times\tO \left(\left(\frac{ Q^{(\tau)}_\text{max} }{\epsilon_k}\right)^2 \ln\left(\frac{1}{\delta_k}\right)\right),
        \end{aligned}
    \end{equation}
    due to Lemma~\ref{lemma: matrix element estimation}, and the sampling from the uniform distribution takes $O(D\log(G))$ time by using $O(D\log(G))$ random bits.

    The non-trivial part is the evaluation of the weight function $\tilde{w}(x)$ in~\eqref{eq: tilde w} whose direct expansion gives 
    \begin{equation}
    \begin{aligned}
        &\frac{G^D}{Q^{(\tau)}(x)} \tilde{w}(x) \\
        &\quad= \sum_{\alpha, \beta \in \mc{X}} \sqrt{\hat {q}^{(\rho)}(\alpha){\hat q}^{(\rho)}(\beta)} \cos({\frac{2\pi  x \cdot (\alpha - \beta)}{G}}) \bm{N}(\alpha,\beta).
    \end{aligned}
    \end{equation}
    Naively, this expression requires a double summation over $\mc{X} \times \mc{X}$, which is infeasible since $|\mc{X}| = G^D$.
    However, since $\bm{N}$ has a block structure, denoting $S$ as the principal indices of $\bm{N}$, the cross terms between $S$ and $\mc{X}\setminus S$ vanish, i.e.,
    \begin{equation}
    \begin{aligned}
        &\sum_{\alpha, \beta \in S} \sqrt{{\hat q}^{(\rho)}(\alpha){\hat q}^{(\rho)}(\beta)} \cos({\frac{2\pi  x \cdot (\alpha - \beta)}{G}}) \bm{N}(\alpha,\beta)  \\
        & \qquad +  \sum_{\alpha, \beta \in \mathcal{X}\setminus S} \sqrt{{\hat q}^{(\rho)}(\alpha){\hat q}^{(\rho)}(\beta)} \cos({\frac{2\pi  x \cdot (\alpha - \beta)}{G}})\frac{\delta_{\alpha,\beta}}{\epsilon}\\
        &\quad= \sum_{\alpha, \beta \in S} \sqrt{{\hat q}^{(\rho)}(\alpha){\hat q}^{(\rho)}(\beta)} \cos({\frac{2\pi  x \cdot (\alpha - \beta)}{G}}) \bm{N}(\alpha,\beta)  \\
        &\qquad + \frac{1}{\epsilon}\sum_{\alpha \in \mathcal{X}\setminus S} \hat q^{(\rho)}(\alpha).
    \end{aligned}
    \end{equation}
    Here, although $\sum_{\alpha \in \mathcal{X}\setminus S} 
    \hat q^{(\rho)}(\alpha)$ has $|\mathcal X| - |S|$ terms, it can be computed by
    \begin{equation}
    \begin{aligned}
        \sum_{\alpha \in \mathcal{X}\setminus S} \hat q^{(\rho)}(\alpha) &= 1 - \sum_{\alpha \in S} \hat q^{(\rho)}(\alpha).
    \end{aligned}
    \end{equation}
    Therefore, we can compute the $\tilde{w}(x)$ by computing $Q^{(\tau)}(x)$ with $\mathrm{Q}(\bm Q^{(\tau)})$, the $s \times s$ summation with $\qsp(\bm N)$ and computing $\sum_{\alpha \in S} \hat q^{(\rho)}(\alpha)$ by using $\mathrm{SQ}({\widetilde{D}})$ obtained from Lemma~\ref{lem: low rank approximation of D} used in Algorithm~\ref{alg: type1}.
    Namely, the $I$-times computation of the approximated weight function $\tilde{w}(x)$ takes $O(I \times (D + \polylog(G) + s^2 \mathbf{q}(\bm N)))$,
    where $s \leq 1/\epsilon_D^2$ and $\mathbf{q}(\bm N) = O(\log(s))$ from Theorem~\ref{thm: type1}.

    Therefore, the total runtime of Algorithm~\ref{alg: Rejection sampling from approximated distribution} is upper bounded by 
    \begin{equation}\label{eq: total runtime}
        \begin{aligned}
           & O\left(\mathbf{sq}\left(\sqrt{\hat{\bm q}^{(\rho)}}\right)\frac{1}{\epsilon_D^2}\log\left(\frac{1}{\delta_D}\right)+ \mathbf{q}_{\epsilon_k,\delta_k}(\bm k)\frac{1}{\epsilon_D^4} + \frac{1}{\epsilon_D^6}\right) \\
           &\qquad+ I \times O(D\; \log(G)) \\
           & \qquad + O(I \times (D\log(G) + \polylog(G) + s^2 \mathbf{q}(\bm N))) \\
           &\quad= O(D \; \polylog(G) + \mathbf{sq}\left(\sqrt{\hat{\bm q}^{(\rho)}}\right) + \mathbf{q}(\bm Q^{(\tau)}))\\
           &\quad \times \tO((\frac{Q_{\max}^{(\tau)}}{\epsilon})^{18} \frac{1}{\delta^6}),
        \end{aligned}
    \end{equation}
    where we use $\bm k(0,0) = \Omega(1)$.
    This completes the proof of Theorem~\ref{thm: optimized random features sampling restated}.
\end{proof}

\end{document}